\begin{document}
\setcounter{page}{0}
\thispagestyle{empty}

\begin{mycover}
    {\huge\bfseries\boldmath Faster Distributed $\Delta$-Coloring via a Reduction to MIS\par}
    \bigskip
    \bigskip
    \bigskip
    \textbf{Yann Bourreau}
    \myorcid{0009-0001-1819-8348}
    \myaff{CISPA Helmholtz Center for Information Security}

    \textbf{Sebastian Brandt}
    \myorcid{0000-0001-5393-6636}
    \myaff{CISPA Helmholtz Center for Information Security}
    
    \textbf{Alexandre Nolin}
    \myorcid{0000-0002-3952-0586}
    \myaff{CISPA Helmholtz Center for Information Security}
\end{mycover}
\bigskip

\begin{abstract}
    Recent improvements on the deterministic complexities of fundamental graph problems in the LOCAL model of distributed computing have yielded state-of-the-art upper bounds of $\Otilde(\log^{5/3} n)$ rounds for maximal independent set (MIS) and $(\Delta + 1)$-coloring [Ghaffari, Grunau, FOCS'24] and $\Otilde(\log^{19/9} n)$ rounds for the more restrictive $\Delta$-coloring problem [Ghaffari, Kuhn, FOCS'21; Ghaffari, Grunau, FOCS'24; Bourreau, Brandt, Nolin, STOC'25].
    In our work, we show that $\Delta$-coloring can be solved deterministically in $\Otilde(\log^{5/3} n)$ rounds as well, matching the currently best bound for $(\Delta + 1)$-coloring.

    We achieve our result by developing a reduction from $\Delta$-coloring to MIS that guarantees that the (asymptotic) complexity of $\Delta$-coloring is at most the complexity of MIS, unless MIS can be solved in sublogarithmic time, in which case, due to the $\Omega(\log n)$-round $\Delta$-coloring lower bound from [BFHKLRSU, STOC'16], our reduction implies a tight complexity of $\Theta(\log n)$ for $\Delta$-coloring.
    In particular, any improvement on the complexity of the MIS problem will yield the same improvement for the complexity of $\Delta$-coloring (up to the true complexity of $\Delta$-coloring).

    Our reduction also yields improvements for $\Delta$-coloring in the randomized LOCAL model and when complexities are parameterized by both $n$ and $\Delta$.
    For instance, we obtain a randomized complexity bound of $\Otilde(\log^{5/3} \log n)$ rounds (improving over the state of the art of $\Otilde(\log^{8/3} \log n)$ rounds) on general graphs and tight complexities of $\Theta(\log n)$ and $\Theta(\log \log n)$ for the deterministic, resp.\ randomized, complexity on bounded-degree graphs.
    In the special case of graphs of constant clique number (which for instance include bipartite graphs), we additionally give a reduction to the $(\Delta+1)$-coloring problem.
\end{abstract}

\clearpage

\setcounter{page}{0}
\thispagestyle{empty}
\tableofcontents

\clearpage

\section{Introduction}
In the last decade, the theory of distributed graph algorithms has seen tremendous progress, leading to faster and faster algorithms for many of the problems central to the area.
Much of this progress has been achieved in the context of so-called \emph{locally checkable problems\footnote{A locally checkable problem is a problem where the correctness of the solution is defined via local constraints. A claimed global solution is correct if and only if it satisfies the local constraints at each node (or edge).}}---a problem class that has been in the center of attention since the beginning of the field and contains most of the fundamental problems studied throughout the years, such as maximal independent set (MIS) or coloring problems.
A particular focus of the works involved in this effort has been on the complexity of the aforementioned problems in the standard \emph{LOCAL model} of distributed computing, introduced by Linial in the 80s~\cite{linial1987distributive}.

\paragraph{The LOCAL Model.}
In the LOCAL model, each node of the input graph is considered as a computational entity and the edges of the input graph model communication channels via which the nodes can communicate.
More precisely, computation proceeds in synchronous rounds, and in each round each node first sends an arbitrarily large message to each of its neighbors, then receives the messages sent by its neighbors, and finally performs some (arbitrarily powerful) internal computation on the knowledge is has up to this point (e.g., to determine which message to send in the next round).
Each node executes the same algorithm and has to decide at some point to select its output, upon which it terminates and does not participate in sending messages anymore.
To solve a given problem correctly, the collection of all the local outputs chosen by the nodes needs to satisfy the specifications of the respectively considered problem, e.g., if the problem is to compute a proper coloring with a specified set of colors, then the global output is correct if the output of each node is a color from the specified set and, for any two neighboring nodes, the colors chosen by those nodes are different.

In the beginning, each node $v$ is only aware of the information associated with itself---its degree $\deg(v)$, an internal numbering of its incident edges from $1$ to $\deg(v)$ called a \emph{port numbering}, some \emph{symmetry-breaking information} specified in the following, and possibly some problem-specific input such as a list of colors in a list coloring problem---and some standard information about graph parameters such as the values of the number $n$ of nodes and the maximum degree $\Delta$ of the input graph.
The mentioned symmetry-breaking information is what distinguishes the deterministic and randomized LOCAL model: in the deterministic LOCAL model, each node is equipped with a numerical identifier from the range $\{ 1, \dots, \poly(n) \}$ that is guaranteed to be unique over the whole graph, whereas in the randomized LOCAL model, each node is instead provided with a private random bit string of arbitrary length.

The complexity of a distributed algorithm (considered as a function of $n$) is the worst-case number of rounds until the algorithm terminates taken over all $n$-node graphs (and, in the case of deterministic algorithms, over all assignments of identifiers).
The complexity of a problem is the complexity of an optimal algorithm for that problem.
In the case of randomized algorithms, the considered algorithms will be Monte-Carlo algorithms that we require to provide a correct output with high probability (w.h.p.), i.e., with probability at least $1 - 1/n$.

\paragraph{The Big $4$.}
A group of problems whose state-of-the-art complexity bounds in the LOCAL model have undergone an especially noteworthy development in the last 8 years are what is colloquially referred to as the ``Big 4''---MIS, maximal matching, $(\Delta + 1)$-coloring, and $(2\Delta - 1)$-edge coloring---and closely related problems such as ruling set problems.
These problems are greedy-type problems, i.e., problems that can be solved sequentially by an algorithm that simply iterates through all nodes or edges in an arbitrary order and assigns an output to the respectively processed node/edge in a local greedy fashion that only takes a constant-hop neighborhood of the node/edge (and the already assigned outputs in that neighborhood) into account, which makes them amenable to a variety of algorithmic techniques.
In particular two of these techniques---network decompositions and rounding---have led to a recent flurry of papers~\cite{fischer2020improved,rozhon2020polylogarithmic,ghaffari2021improved,ghaffari2021deterministic,faour2023local,ghaffari2023faster,ghaffari2023improveddnd} culminating in the current state-of-the-art deterministic complexity upper bound of $\Otilde(\log^{5/3} n)$ rounds\footnote{We will use the notation $\Otilde(\cdot)$ to hide a factor that is polylogarithmic in the argument.} \cite{GG_focs24} for all of the Big 4.
For two of those problems---maximal matching and MIS---a lower bound of $\Omega(\log n/\log \log n)$ rounds~\cite{balliu2021lower} has been obtained recently whereas for the two coloring problems nothing better than the decades-old lower bound of $\Omega(\log^* n)$ rounds by~\cite{linial1992locality} is known.

\paragraph{Beyond greedy.}
Besides the aforementioned four (and other important) greedily solvable problems there are a number of fundamental locally checkable problems studied extensively in the last years that do not admit such a sequential greedy process, with two of the most prominent examples being $\Delta$-coloring and $(\Delta + 1)$-edge coloring (as well as other sub-$(2\Delta - 1)$-edge colorings).
This type of problems seems to be much less amenable to generic techniques such as the aforementioned network decomposition and rounding, with the current deterministic state-of-the-art complexity upper bounds being moderately to substantially weaker than what is known for the Big 4: for instance, for $\Delta$-coloring, as observed in~\cite[Theorem A.2, arxiv version]{bourreau2025faster} an $\Otilde(\log^{19/9} n)$-round algorithm can be obtained by combining~\cite{ghaffari2021deterministic} and~\cite{GG_focs24}, while for $(\Delta + 1)$-edge coloring
no $O(\poly(\log n))$-round algorithm is known to exist.
This raises the following fundamental question.

\vspace{0.05cm}
\begin{tcolorbox}
	\begin{oq}\label{oq1}
    How can we make use of the toolbox applicable to the Big 4 also for non-greedy problems? Can we reduce (some) non-greedy problems to (one of) the Big 4 in some fashion?
    \end{oq}
\end{tcolorbox}
\vspace{0.05cm}

Obtaining a reduction without overhead from some non-greedy problem like $\Delta$-coloring or $(\Delta + 1)$-edge coloring to one of the Big 4 would be excellent news not only because this would immediately improve the best known upper bound for the considered problems
but also due to the amenability of the Big 4 to generic techniques, which likely makes further improvements on the state-of-the-art upper bounds easier to achieve (which would then immediately transfer to the considered non-greedy problem).
Unfortunately, such a reduction cannot exist if we require the reduction to be at least remotely degree-preserving: on constant-degree graphs, all of the big 4 can be solved in $O(\log^* n)$ rounds~\cite{panconesi01simple} whereas $\Delta$-coloring and any sub-$(2\Delta - 1)$-edge coloring require $\Omega(\log n)$ rounds~\cite{chang2019exponential,chang2019distributed}.\footnote{More generally, there exist $O(\log^* n + \Delta)$-round algorithms for all of the Big 4~\cite{panconesi01simple,barenboim14distributed} and $\Omega(\log_{\Delta} n)$-round lower bounds for $\Delta$-coloring and any sub-$(2\Delta - 1)$-edge coloring~\cite{chang2019exponential,chang2019distributed}.}

In this work, we focus on $\Delta$-coloring.
In particular, we will show how to circumvent the aforementioned impossibility and obtain the reduction desired in~\cref{oq1} (without a substantial increase in the maximum degree of the input graph), which provides the first such reduction for any of the classic non-greedy graph problems studied across different models for decades.

\paragraph{$\Delta$-coloring.}
In the $\Delta$-coloring problem, the task is to color the nodes of the input graph with colors from $\{ 1, \dots, \Delta \}$ such that no two adjacent nodes receive the same color.
By Brooks' Theorem~\cite{brooks1941colouring}, such a coloring always exists for a connected graph unless it is an odd cycle or a $(\Delta+1)$-clique.
We will use the standard assumption implicit when studying $\Delta$-coloring from an algorithmic perspective that only graphs that do not fall into the small class of unsolvable graphs are allowed as input instances.
Moreover, throughout the paper, we will also assume that $\Delta \geq 3$, which is another standard assumption due to the fact that the behavior of $\Delta$-coloring is both quite different and well-understood in the case of $\Delta \leq 2$.

While $\Delta$-coloring is an important problem across many models, it has turned out to be an exceptionally influential problem in the LOCAL model.
From the perspective of lower bounds, it was instrumental in the development of entire lower bound techniques, including the \emph{round elimination technique}~\cite{brandt2016lower,brandt2019automatic} that has been used in an abundance of works in the last decade to prove lower bounds for many of the most important problems studied in the LOCAL model (see, e.g., \cite{brandt2016lower,balliu2021lower,balliu2021improved,balliu2022ruling,balliu2022distributed,balliu2023distributed}).
In particular, $\Delta$-coloring was one of the problems for which lower bounds were proved in the initial publications for both round elimination~\cite{brandt2016lower} and another recent lower bound technique called \emph{Marks' technique}~\cite{DetMarks,Marks_Coloring}.

Building on the randomized $\Omega(\log \log n)$-round lower bound for $\Delta$-coloring proved in~\cite{brandt2016lower}, Chang, Kopelowitz, and Pettie proved asymptotically tight bounds for $\Delta$-coloring on trees of $\Theta(\log_{\Delta} \log n)$ rounds randomized and $\Theta(\log_{\Delta} n)$ rounds deterministically (for any $\Delta \geq 55$) \cite{chang2019exponential}.
This result not only provided the first exponential separation between the randomized and deterministic complexities of a locally checkable problem but also identified $\Delta$-coloring as the first problem of provably ``intermediate'' complexity initiating a long line of research now known under the name \emph{distributed complexity theory (of LCL problems\footnote{An LCL problem is essentially a locally checkable problem studied on graphs of constant maximum degree.})}, which essentially mapped out the complexity landscape of locally checkable problems on constant-degree graphs.

The current state-of-the-art upper bounds for $\Delta$-coloring on general ($\Delta$-colorable, $\Delta \geq 3$) graphs are the culmination of three decades of insights, starting with Panconesi and Srinivasan~\cite{panconesi95delta} who proved that recoloring a path of only $O(\log_\Delta n)$ nodes suffices to extend a partial $\Delta$-coloring to an additional node.
Ghaffari, Hirvonen, Kuhn, and Maus~\cite{GHKM_dc21} showed that $\Delta$-coloring admits much faster algorithms in the randomized setting, as allowing for a small error probability makes it possible to compute a partial coloring with the property that extending the coloring to an additional node never involves recoloring more than an
$O(\log_\Delta \log n)$-sized path.
These two results~\cite{panconesi95delta,GHKM_dc21} with the current state-of-the-art for other problems (degree+1-list coloring, ruling sets) suffice to obtain upper bounds of $\poly(\log n)$ and $\poly(\log \log n)$ on the deterministic and randomized complexities of $\Delta$-coloring.
$\Delta$-coloring was shown to even admit an $O(\log^* n)$ randomized algorithm by Fischer, Halldórsson, and Maus~\cite{fischer2023fast}, when $\Delta$ is large enough ($\Delta \in \omega(\log^{21} n)$).
The current deterministic state-of-the-art is due to Bourreau, Brandt, and Nolin~\cite{bourreau2025faster}, together with the current state-of-the-art bounds for ruling sets and coloring graphs with a special ``layered'' structure~\cite{ghaffari2021deterministic}.
Two very recent papers obtained improved bounds in some special graph classes~\cite{JM_podcba25,JMS_arxiv25}.

We refer the reader to~\cite{bourreau2025faster} for an in-depth discussion of the importance of $\Delta$-coloring, which also outlines its relevance for future developments in the LOCAL model and its close relation to a number of fundamental open problems.
As such, obtaining better bounds for $\Delta$-coloring is one of the major objectives of current research in the LOCAL model.

\subsection{Our Contributions}
To describe our contributions in full generality, we will use the following notation to abbreviate the complexities of important problems.

For parameters $n, \Delta$, we denote the deterministic complexities of $\Delta$-coloring, $(\Delta+1)$-coloring, and MIS\footnote{In the MIS problem, the task is to select a subset of the nodes (called an MIS) such that no two selected nodes are adjacent and any node that is not selected has a selected neighbor.} on $n$-node graphs of maximum degree $\Delta$ by $\Tdelc(n,\Delta)$, $\Tdelpoc(n,\Delta)$, and $\Tmis(n,\Delta)$, respectively.
For parameters $n, k$, we denote the deterministic complexity of a problem called hypergraph sinkless orientation (HSO)\footnote{For a formal definition of hypergraph sinkless orientation, see~\cref{sec:prelims}.} on $n$-node graphs with \emph{degree-to-rank ratio} $\delta/r \geq k$ by $\Thso(n,k)$, where $\delta$ denotes the minimum degree of the input graph and $r$ the maximum rank over all hyperedges.
To indicate randomized complexities, we will add ``$\text{rand}$'' in the index (and add $\delta$ as a third parameter to the complexity of HSO since the state-of-the-art runtime for that problem depends on $\delta$ in the randomized setting), using the expressions $\Tdelcrand(n,\Delta)$, $\Tmisrand(n,\Delta)$, and $\Thsorand(n,k,\delta)$.
For technical reasons, for the definition of $\Thsorand(n,k,\delta)$ we require all input instances for hypergraph sinkless orientation to satisfy $\delta > 320r \log r$.

All of the above expressions are to be understood as representing the (unknown) \emph{tight} complexities of the considered problems.

\paragraph{Deterministic complexities.}
As our main result, we show how to deterministically reduce $\Delta$-coloring to MIS and hypergraph sinkless orientation.

\begin{restatable}{theorem}{ThmLOCALmain}
    \label{thm:local-main}
    Deterministic $\Delta$-coloring reduces to solving $O(1)$ instances of deterministic MIS on graphs of maximum degree $O(\poly(\Delta))$ and at most $n$ nodes, and solving an instance of deterministic hypergraph sinkless orientation with degree-to-rank ratio $\delta/r \in \Omega(\poly(\Delta))$.
    In particular,
    \begin{equation*}
    \Tdelc(n,\Delta) \in O(\Tmis(n,\poly(\Delta)) + \Thso(n,\poly(\Delta)))\ .
\end{equation*}
\end{restatable}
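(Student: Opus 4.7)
The plan is to start from a $(\Delta+1)$-coloring---obtainable from a single MIS call on the standard $V \times [\Delta+1]$ conflict graph of maximum degree $O(\Delta^2)$---and then eliminate the extra color class in parallel, in the spirit of the classical Panconesi--Srinivasan augmenting-path approach. The key idea is to partition the nodes that still need to be recolored into those that admit a local augmenting path of constant length ending at a node with a spare color, and those that lie inside ``tight'' obstruction substructures (almost-clique-like clusters) where no local spare color is available. The first part will be reduced to MIS, the second to hypergraph sinkless orientation.

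For nodes with a short local augmenting path, I would build an auxiliary ``conflict graph'' whose vertices are pairs (node, candidate augmenting path) and whose edges encode the conflict of two augmentations wishing to recolor the same vertex; since the paths have constant length and each node has degree at most $\Delta$, every such pair conflicts with only $\poly(\Delta)$ others, so the auxiliary graph has maximum degree $\poly(\Delta)$ and an MIS in it directly yields a non-conflicting set of augmentations. Iterating this argument $O(1)$ times should resolve all non-tight nodes. For the tight clusters I would set up a hypergraph whose vertices are the clusters and whose hyperedges correspond to bounded-size families of clusters that must coordinate on where to discharge their excess color; a sinkless orientation selects for each cluster a ``recoloring outlet'' through which a bounded-depth cascade completes the coloring. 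By padding each cluster with $\poly(\Delta)$ candidate outlets while keeping the hyperedge rank at $O(1)$, the minimum degree $\delta$ can be driven to $\poly(\Delta)$ and $\delta/r \in \Omega(\poly(\Delta))$ can be achieved as required.

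The main obstacle will be ensuring that the MIS-driven augmentations and the HSO-driven cluster discharges compose without interference, so that a \emph{single} HSO instance together with $O(1)$ MIS instances suffices, rather than an unbounded recursion. This requires designing the cluster decomposition so that an augmenting path starting outside a cluster never terminates inside another cluster, and so that every obstruction admits an $O(1)$-sized family of candidate outlets; the bounded rank of the resulting hypergraph is the property that makes one HSO call sufficient, and the polynomial-in-$\Delta$ padding is what simultaneously enforces the degree-to-rank ratio and inflates the MIS auxiliary graphs to maximum degree $\poly(\Delta)$. Putting these pieces together should yield the claimed reduction $\Tdelc(n,\Delta)\in O(\Tmis(n,\poly(\Delta))+\Thso(n,\poly(\Delta)))$.
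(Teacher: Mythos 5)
There is a genuine gap, and it lies at the very heart of the problem. Your dichotomy---nodes that admit a \emph{constant-length} augmenting path to a spare color versus nodes inside ``tight'' almost-clique-like obstructions---is not a valid case split for $\Delta$-coloring. The hard instances are $\Delta$-regular regions of large girth with no low-degree node and no small degree-choosable subgraph anywhere nearby; in such regions \emph{no} node has a short augmenting path to slack (this is exactly why the problem has an $\Omega(\log_\Delta n)$ deterministic lower bound, so constant-length local fixes cannot exist unless slack has been planted in advance), and these regions are locally expanding rather than clique-like, so they do not decompose into bounded-size ``tight'' clusters with $O(1)$ candidate outlets. Your first phase is therefore empty precisely where the problem is hard, and your second phase is asked to handle arbitrarily large $\Delta$-regular expanders, for which you give no mechanism that actually creates a spare color: routing a ``recoloring outlet'' through a sinkless orientation only moves the deficit around, it does not generate the repeated color in some node's neighborhood that a $\Delta$-coloring ultimately needs. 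Relatedly, the claim that $O(1)$ MIS iterations resolve all non-tight nodes, and that ``padding'' with $\poly(\Delta)$ candidate outlets drives the HSO minimum degree up while keeping rank $O(1)$, are both unsubstantiated: padded hyperedges must correspond to genuinely executable recolorings, otherwise the orientation guarantee is vacuous.

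The missing idea, which is the paper's main contribution, is to \emph{manufacture} slack before any coloring happens: in the expanding ($\Delta$-regular, DCC-free) regions one designates ``gatherer'' nodes, each of which can color two non-adjacent neighbors with the same fixed color, turning itself into a node with $\Delta-1$ available colors but only $\Delta-2$ uncolored neighbors. The HSO instance is then not about discharging excess color but about arbitrating a genuine tension: flex clusters want to activate a gatherer (creating slack), while link clusters want certain gatherers \emph{not} activated so that an uncolored path from them to some created slack survives. The degree lower bound $\poly(\Delta)$ of that hypergraph comes from a real structural fact (absence of small DCCs forces $(\Delta-1)^{\Omega(1)}$-fold local expansion, hence many independent gatherers per flex cluster and many flex neighbors per link cluster), and the rank is $1+(\Delta-1)^3$, not $O(1)$; nothing is padded. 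After the orientation, the graph is colored in $O(1)$ layers by distance to the slack, each layer being a degree+1-list-coloring instance solvable via MIS. Without some equivalent of this slack-creation step, the reduction you sketch cannot go through.
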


Note that hypergraph sinkless orientation is a problem whose difficulty decreases with growing degree-to-rank ratio, which is why we may assume that the second parameter in the second summand is $\poly(\Delta)$ even though it may be in $\omega(\poly(\Delta))$.

As shown in~\cite{BMNSU_halls_thm_soda25}, hypergraph sinkless orientation can be solved deterministically in $O(\log_{\delta/r} n)$ rounds, which, combined with the condition $\delta/r \in \Omega(\poly(\Delta))$ from~\cref{thm:local-main}, implies that the instance of hypergraph sinkless orientation mentioned in~\cref{thm:local-main} can be solved in $O(\log_{\Delta} n)$ rounds.
As $\Delta$-coloring has a deterministic lower bound of $\Omega(\log_{\Delta} n)$ rounds~\cite{chang2019exponential}, \cref{thm:local-main} therefore yields a clean reduction from $\Delta$-coloring to MIS that shows that the complexity of $\Delta$-coloring (as a function of $n$) is at most the complexity of MIS, unless the complexity of MIS turns out to be in $o(\log n)$ in which case the reduction yields a tight complexity of $\Theta(\log n)$ for $\Delta$-coloring.
This answers~\cref{oq1} in a strong sense, without a substantial increase in the maximum degree in the reduction, yielding all the discussed benefits.

In particular, with the (differently parameterized) state-of-the-art bounds of $\Otilde(\log^{5/3} n)$~\cite{GG_focs24}, $O(\log^2 \Delta \cdot \log n)$~\cite{faour2023local}, and $O(\Delta + \log^* n)$~\cite{barenboim14distributed} rounds for the deterministic complexity of MIS,
we obtain the following new bounds for the complexity of $\Delta$-coloring.

\begin{restatable}{corollary}{CorLOCALmain}
    \label{cor:local-main}
    $\Delta$-coloring admits deterministic LOCAL algorithms of the following complexities:
    \begin{enumerate}
        \item $\Otilde(\log^{5/3} n)$,
        \item $O(\log^2 \Delta \cdot \log n)$,
        \item $O(\poly(\Delta) + \log_\Delta n)$.
    \end{enumerate}
\end{restatable}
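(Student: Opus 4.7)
The plan is to apply \cref{thm:local-main} directly. The theorem asserts
$\Tdelc(n,\Delta) \in O(\Tmis(n,\poly(\Delta)) + \Thso(n,\poly(\Delta)))$,
and the discussion preceding the corollary has already resolved the HSO term: combining the deterministic $O(\log_{\delta/r} n)$-round HSO algorithm of~\cite{BMNSU_halls_thm_soda25} with the degree-to-rank ratio $\delta/r \in \Omega(\poly(\Delta))$ guaranteed by~\cref{thm:local-main} gives $\Thso(n,\poly(\Delta)) = O(\log_\Delta n)$. It therefore suffices to plug in each of the three state-of-the-art MIS bounds and verify that the $O(\log_\Delta n)$ HSO contribution is absorbed.

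For item (1), I would instantiate the MIS subroutine with the Ghaffari--Grunau algorithm, whose complexity $\Otilde(\log^{5/3} n)$ is degree-independent; since $\log_\Delta n \leq \log n \in \Otilde(\log^{5/3} n)$, the total runtime is $\Otilde(\log^{5/3} n)$. For item (2), plugging in the Faour et al.\ algorithm on a graph of maximum degree $\poly(\Delta)$ yields $O(\log^2(\poly(\Delta)) \cdot \log n) = O(\log^2 \Delta \cdot \log n)$ rounds, which dominates the HSO cost (a $\log n$ term). For item (3), the Barenboim--Elkin style algorithm on an instance of maximum degree $\poly(\Delta)$ runs in $O(\poly(\Delta) + \log^* n)$ rounds; adding the HSO cost $O(\log_\Delta n)$ and using $\log^* n \leq \log_\Delta n$ (valid for $\Delta \geq 3$) collapses the sum to $O(\poly(\Delta) + \log_\Delta n)$.

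There is no real obstacle: the corollary is essentially a substitution into~\cref{thm:local-main}. The only sanity check is to confirm that the internal $\poly(\Delta)$ blow-up in the MIS instance does not degrade any of the three expressions. This holds because $\log(\poly(\Delta)) = \Theta(\log \Delta)$ for item (2), $\poly(\poly(\Delta)) = \poly(\Delta)$ for item (3), and item (1) has no degree dependence at all. Hence the three bounds follow immediately.
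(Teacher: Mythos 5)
Your proposal is correct and takes essentially the same route as the paper: the corollary is obtained by plugging the three state-of-the-art deterministic MIS bounds into \cref{thm:local-main} and absorbing the $O(\log_\Delta n)$-round HSO cost (via \cite{BMNSU_halls_thm_soda25} and $\delta/r \in \Omega(\poly(\Delta))$), exactly as in the discussion preceding the corollary. One small inaccuracy: the inequality $\log^* n \leq \log_\Delta n$ is not valid for all $\Delta \geq 3$ (e.g.\ $\Delta = \sqrt{n}$ gives $\log_\Delta n = 2$), but the stated bound for item (3) still holds because whenever $\log_\Delta n < \log^* n$ the term $\poly(\Delta)$ already dominates $\log^* n$.
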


The results in \Cref{cor:local-main} yields a substantial improvement over the previous state of the art of $\Otilde(\log^{19/9} n)$~\cite{ghaffari2021deterministic,GG_focs24,bourreau2025faster} and $O(\log^4 \Delta + \log^2 \Delta \cdot \log n \log^* n)$~\cite{bourreau2025faster} rounds.
Our upper bound of $\Otilde(\log^{5/3} n)$ rounds for $\Delta$-coloring matches the state-of-the-art upper bound for the complexity of the $(\Delta + 1)$-coloring problem, implying that improvements for $(\Delta + 1)$-coloring are necessary to improve our bound for $\Delta$-coloring further. 
\Cref{cor:local-main} also provides the first algorithm of the form $O(f(\Delta) + \log_{\Delta} n)$ for any function $f$, which in particular implies that the tight complexity of $\Delta$-coloring on constant-degree graphs is $\Theta(\log n)$.
This closes the gap left by the best previous upper bound on constant-degree graphs, $O(\log n \log^* n)$~\cite{bourreau2025faster}.

Moreover, due to~\cref{thm:local-main}, any further improvements on the complexity of MIS will yield improvements on the complexity of $\Delta$-coloring (up to the optimal complexity of $\Delta$-coloring).
We note that the best currently known deterministic lower bound for MIS stands at $\Omega(\min\{ \Delta + \log^* n, \log_{\Delta} n \})$ rounds~\cite{linial1992locality,balliu2021lower}.
With the last years seeing essentially yearly improvements in the state-of-the-art upper bound for MIS~\cite{rozhon2020polylogarithmic,ghaffari2021improved,faour2023local,ghaffari2023faster,GG_focs24} but no change in the aforementioned lower bound, it is plausible to assume that the lower bound might be tight.
In this case, the study of the complexity of $\Delta$-coloring can be replaced entirely by studying the MIS problem, which might be more approachable, evidenced by the aforementioned improvements.
The same holds in any scenario where the complexity of MIS is at most the complexity of $\Delta$-coloring, which is plausible to assume since, as observed above, even the less restrictive $(\Delta + 1)$-coloring problem currently stands at the same complexity of $\Otilde(\log^{5/3} n)$ rounds as MIS.

\paragraph{Randomized complexities.}

Our technique can also be used in the randomized LOCAL model yielding a statement analogous to~\cref{thm:local-main}.

\begin{restatable}{theorem}{ThmLOCALmainrand}
    \label{thm:local-main-rand}
    Randomized $\Delta$-coloring reduces to solving $O(1)$ instances of randomized MIS on graphs of maximum degree $O(\poly(\Delta))$ and at most $n$ nodes, and solving an instance of randomized hypergraph sinkless orientation with degree-to-rank ratio $\delta/r \in \Omega(\poly(\Delta))$, $\delta  > 320r \log r$, and $\delta \in O(\poly(\Delta))$.
    In particular,
    \begin{equation*}
    \Tdelcrand(n,\Delta) \in O(\Tmisrand(n,\poly(\Delta)) + \Thsorand(n,\poly(\Delta),\poly(\Delta)))\ .
\end{equation*}
\end{restatable}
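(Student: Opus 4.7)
The plan is to re-execute the reduction underlying \cref{thm:local-main}, replacing every deterministic MIS and HSO subroutine call with its randomized counterpart. The outer reduction is model-agnostic: as witnessed by \cref{thm:local-main}, it constructs $O(1)$ auxiliary graphs of maximum degree $O(\poly(\Delta))$ on at most $n$ nodes to pass to the MIS oracle, together with a single HSO instance of degree-to-rank ratio $\Omega(\poly(\Delta))$ to pass to the HSO oracle. Nothing in the outer construction exploits the deterministic nature of the subroutines---identifiers and random bits are simply forwarded to whichever subroutine is invoked---so swapping in randomized oracles leaves the reduction itself unchanged.

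The genuinely new obligation in \cref{thm:local-main-rand} is that the HSO instance produced by the reduction must fall inside the parameter regime of $\Thsorand$, which additionally requires $\delta > 320 r \log r$ and $\delta \in O(\poly(\Delta))$. The upper bound $\delta \in O(\poly(\Delta))$ comes out of the construction directly: the hyperedges encode local substructures whose incidences are controlled by $\Delta$, so each node participates in at most $\poly(\Delta)$ hyperedges. The condition $\delta > 320 r \log r$ is implied by $\delta/r \in \Omega(\poly(\Delta))$ once $r \in O(\poly(\Delta))$ (which itself follows from the two degree bounds), since $\delta/r$ grows polynomially in $\Delta$ while $320 \log r$ grows only polylogarithmically. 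If the polynomial exponent delivered by the deterministic construction happens to be too small to dominate $320 \log r$, I would simply raise it---an adjustment that is costless in the asymptotic expressions $\Omega(\poly(\Delta))$ and $O(\poly(\Delta))$ appearing in the theorem.

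A union bound yields correctness with high probability: each of the $O(1)$ randomized MIS invocations and the single randomized HSO invocation fails with probability at most $1/n$, for a cumulative failure probability of $O(1)/n$, which remains in the high-probability regime after a constant amplification of the per-call success parameter. I expect the main obstacle of the proof to be the simultaneous verification of $\delta > 320 r \log r$ and $\delta \in O(\poly(\Delta))$ on the emitted HSO instance: this is a parameter-tracking exercise on top of \cref{thm:local-main}, but may require minor adjustments (for example, padding incidences to inflate minimum degrees, or artificially inflating the rank by a constant factor) to make the two bounds compatible without affecting the complexity stated in \cref{thm:local-main-rand}.
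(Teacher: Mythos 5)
Your proposal is correct and follows essentially the same route as the paper: rerun the deterministic reduction with randomized MIS and HSO subroutines, and handle the extra constraint $\delta > 320 r\log r$ by boosting the minimum degree of the cluster hypergraph (the paper does exactly your anticipated fix, enlarging the cluster-partition constants so that $\delta \geq (\Delta-1)^{14}$ while the rank stays $1+(\Delta-1)^3$, which suffices since $320\bigl((\Delta-1)^3+1\bigr)\log\bigl((\Delta-1)^3+1\bigr) < 2^{10}(\Delta-1)^4 \leq (\Delta-1)^{14}$ for $\Delta\geq 3$). Just note that your purely asymptotic-in-$\Delta$ argument for $\delta > 320 r\log r$ does not cover small constant $\Delta$, so the explicit constant-raising you mention as a fallback is in fact necessary, not optional.
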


We emphasize that~\cref{thm:local-main-rand} is not a direct corollary of~\cref{thm:local-main} due to the extra condition $\delta > 320r \log r$.
This condition is required for the application of the state-of-the-art algorithm for randomized hypergraph sinkless orientation, which has a runtime of $O(\log_{\delta/r} \delta + \log_{\delta/r} \log n)$ rounds~\cite{BMNSU_halls_thm_soda25}.

Using that $\Delta$-coloring admits an $O(\log^* n)$-round randomized algorithm when $\Delta \in \omega(\log^{21} n)$~\cite{fischer2023fast}, and given the current state-of-the-art bounds of $O(\log \Delta) + \Otilde(\log^{2} \log n)$ \cite{ghaffari16improved,GG_focs24}, $\Otilde(\log^{5/3} \Delta) + \Otilde(\log^{5/3} \log n)$ \cite{ghaffari16improved,GG_focs24}, $O(\log^{3} \Delta + \log^{2} \Delta \cdot \log \log n)$ \cite{ghaffari16improved,faour2023local}, and 
$O(\Delta + \log^* n)$~\cite{barenboim14distributed} rounds for the randomized complexity of MIS, we obtain the following new bounds for the randomized complexity of $\Delta$-coloring.

\begin{restatable}{corollary}{CorLOCALmainRAND}
    \label{cor:local-main-rand}
    $\Delta$-coloring admits randomized LOCAL algorithms of the following complexities w.h.p.:
    \begin{enumerate}
        \item $\Otilde(\log^{5/3} \log n)$,
        \item $O(\log^2 \Delta \cdot \log \log n)$,
        \item $O(\poly(\Delta) + \log_\Delta \log n)$.
    \end{enumerate}
\end{restatable}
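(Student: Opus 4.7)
The plan is to obtain each of the three bounds by plugging the respective state-of-the-art randomized MIS algorithm into \cref{thm:local-main-rand} and separately handling the hypergraph sinkless orientation (HSO) term and the regime of very large $\Delta$.

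First, I would dispose of the HSO summand $\Thsorand(n,\poly(\Delta),\poly(\Delta))$ once and for all. By \cite{BMNSU_halls_thm_soda25}, randomized HSO can be solved in $O(\log_{\delta/r}\delta + \log_{\delta/r}\log n)$ rounds, provided $\delta > 320 r \log r$ (which is exactly the side condition built into the definition of $\Thsorand$ and ensured by the hypothesis of \cref{thm:local-main-rand}). Since $\delta/r$ and $\delta$ are both in $\poly(\Delta)$, the first term is $O(1)$ and the second is $O(\log_\Delta \log n)$. This contribution is dominated by any of the three claimed bounds (in particular it is dominated by the third, and by $\Otilde(\log^{5/3}\log n)$ and $O(\log^2 \Delta\cdot\log\log n)$ for the first two).

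Second, I would reduce to the regime $\Delta \in O(\log^{21} n)$. By the Fischer--Halld\'orsson--Maus algorithm \cite{fischer2023fast}, when $\Delta \in \omega(\log^{21} n)$ we already have an $O(\log^* n)$-round randomized $\Delta$-coloring algorithm, which trivially satisfies all three bounds. Hence we may assume $\Delta \in O(\log^{21} n)$, so that $\log \Delta \in O(\log\log n)$. This assumption will be crucial for absorbing $\log \Delta$ factors into $\log\log n$ factors.

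Now for each case I would invoke \cref{thm:local-main-rand} with the appropriate randomized MIS algorithm, on a graph of maximum degree $\poly(\Delta)$. For \textbf{case 1}, the bound $\Otilde(\log^{5/3} \Delta) + \Otilde(\log^{5/3} \log n)$ from \cite{ghaffari16improved,GG_focs24} applied to $\poly(\Delta)$ gives $\Otilde(\log^{5/3} \Delta) + \Otilde(\log^{5/3} \log n)$, which under $\log\Delta = O(\log\log n)$ collapses to $\Otilde(\log^{5/3}\log n)$. For \textbf{case 2}, the bound $O(\log^3 \Delta + \log^2 \Delta \cdot \log\log n)$ from \cite{ghaffari16improved,faour2023local} gives the same expression on $\poly(\Delta)$; since $\log\Delta = O(\log\log n)$, the first term is $O(\log^2\Delta\cdot\log\log n)$, yielding $O(\log^2 \Delta\cdot\log\log n)$. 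For \textbf{case 3}, the bound $O(\Delta + \log^* n)$ from \cite{barenboim14distributed} becomes $O(\poly(\Delta) + \log^* n)$, and adding the HSO contribution $O(\log_\Delta \log n)$ yields $O(\poly(\Delta) + \log_\Delta \log n)$ (using $\log^* n = O(\log_\Delta \log n)$ for $\Delta \geq 3$).

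There is no serious obstacle here: the whole argument is bookkeeping, with the only mildly delicate step being the case split on $\Delta$ to ensure that the $\log \Delta$ factors appearing in the MIS bounds (after substituting $\poly(\Delta)$ for the maximum degree) can always be absorbed into the $\log\log n$ terms of the target bounds.
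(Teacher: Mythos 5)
Your proposal is correct and follows essentially the same route as the paper: plug the stated randomized MIS bounds into \cref{thm:local-main-rand}, bound the HSO term by $O(1) + O(\log_\Delta \log n)$ via \cref{thm:HSO_randomized}, and use the $O(\log^* n)$-round algorithm of~\cite{fischer2023fast} for $\Delta \in \omega(\log^{21} n)$ so that $\log \Delta = O(\log \log n)$ may be assumed in the remaining regime. The only nitpick is your parenthetical claim that $\log^* n = O(\log_\Delta \log n)$ for all $\Delta \geq 3$, which fails when $\Delta$ is large; the third bound nevertheless holds since either $\Delta \geq \log^* n$, in which case the $\poly(\Delta)$ term absorbs $\log^* n$, or $\Delta < \log^* n$, in which case $\log_\Delta \log n \geq \log\log n/\log\log^* n$, which dominates $\log^* n$.
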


\Cref{cor:local-main-rand} improves over the previous state of the art of $\Otilde(\log^{8/3} \log n)$ rounds by~\cite{bourreau2025faster}, coming much closer to the lower bound of $\Omega(\log \log n)$ from~\cite{brandt2016lower}.
Moreover, it implies that the complexity of $\Delta$-coloring on constant-degree graphs is $\Theta(\log \log n)$.
This provides further evidence for the correctness of the Chang-Pettie Conjecture~\cite{ChangP19} that can be stated as follows: any LCL problem\footnote{Recall that LCL problems are defined only on constant-degree graphs.} with (asymptotically) sublogarithmic randomized complexity can be solved in $O(\log \log n)$ rounds.
With $\Delta$-coloring, \cref{cor:local-main-rand} removes one of the most natural candidates for refuting the conjecture from the list of obstacles.

\paragraph{LOCAL complexities for important graph classes.}
We also provide an algorithm for the case of graphs of bounded clique number.

\begin{restatable}{theorem}{ThmLOCALcliqueNum}
    \label{thm:local-clique-number}
    Let $\omega \geq 3$ be an integer, and assume $\Delta \geq \omega$. In both deterministic and randomized \LOCAL, $\Delta$-coloring $K_{\omega+1}$-free graphs reduces to solving $(\Delta+1)$-coloring, applying $\omega$ rounds of color reduction, and solving $\omega$-coloring on a graph of maximum degree $\omega$. 
\begin{align*}
    \Tdelc(n,\Delta) 
    & \in O(\Tdelpoc(n,\Delta) + \omega + \Tdelc(n,\omega))
    & \text{(on $K_{\omega+1}$-free graphs)}\ ,
    \\
    \Tdelcrand(n,\Delta)
    & \in O(\Tdelpocrand(n,\Delta) + \omega + \Tdelcrand(n,\omega))
    & \text{(on $K_{\omega+1}$-free graphs)}\ .
    \end{align*}
\end{restatable}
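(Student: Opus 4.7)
The plan is to start from a proper $(\Delta+1)$-coloring of $G$, which I compute via a $(\Delta+1)$-coloring subroutine in time $\Tdelpoc(n,\Delta)$, and then convert it into a proper $\Delta$-coloring. The only obstruction to eliminating the color $\Delta+1$ is a node $v$ of color $\Delta+1$ whose $\Delta$ neighbors realize every color in $\{1,\ldots,\Delta\}$; any other node of color $\Delta+1$ has a free color in $\{1,\ldots,\Delta\}$ and can recolor in a single round. Call the obstructing nodes \emph{bad}; they form an independent set, as a subset of the color-$(\Delta+1)$ class.

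The key leverage of $K_{\omega+1}$-freeness is that for a bad node $v$ the induced subgraph $G[N(v)]$ on $\Delta \geq \omega$ vertices is $K_{\omega}$-free, so its clique structure is restricted and it contains many non-edges. I would use this in $\omega$ rounds of local color reduction: in each round, each still-bad node $v$ attempts to unblock itself by a local operation that exploits a non-edge in $N(v)$ -- for example, by swapping colors of two non-adjacent neighbors $u_1, u_2$ of distinct colors $c_1, c_2$ within a constant-radius neighborhood, which frees one of $\{c_1,c_2\}$ at $v$. Coordination of simultaneous local operations is handled by symmetry breaking derived from the $(\Delta+1)$-coloring.

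After the $\omega$ rounds, the set $R$ of nodes still carrying color $\Delta+1$ should induce a subgraph of maximum degree at most $\omega$. The structural claim is that a residual node $v$ with more than $\omega$ residual neighbors, combined with the rigid local color structure forced around it, would witness a $K_{\omega+1}$ in $G$, contradicting the hypothesis. I then solve $\omega$-coloring on $G[R]$ (a $\Delta$-coloring problem with $\Delta=\omega$) in time $\Tdelc(n,\omega)$, using a palette of $\omega$ colors shifted or renamed so as not to conflict with the partial coloring on $V \setminus R$, and combine the two to obtain a proper $\Delta$-coloring of $G$.

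The hard part will be the combinatorial core of the argument: designing the $\omega$ rounds of local operations and proving that the residual subgraph indeed has maximum degree at most $\omega$. This structural claim must exploit $K_{\omega+1}$-freeness and $\Delta \geq \omega$ in an essential way, and the local operations must be carefully specified so that their simultaneous application composes into a proper coloring without introducing new conflicts elsewhere. Once this core is in hand, the runtime decomposition $\Tdelpoc(n,\Delta) + O(\omega) + \Tdelc(n,\omega)$ follows directly, and the randomized variant is obtained by substituting randomized subroutines for $(\Delta+1)$-coloring and the max-degree-$\omega$ subgraph coloring throughout.
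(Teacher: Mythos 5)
Your approach diverges from the paper's and has a genuine gap at its core. You restrict attention to the single color class $\Delta+1$ and hope that after $\omega$ rounds of local swap operations the residual set $R$ of still-problematic nodes induces a subgraph of maximum degree at most $\omega$ that can then be recolored from a private palette. Two things go wrong. First, $R$ is a subset of one color class of a proper coloring (and your swaps keep the coloring proper), so $G[R]$ is an independent set; the ``maximum degree at most $\omega$'' claim is vacuous and cannot be the structural statement carrying the argument. Second, and decisively, the obstruction at a bad node $v$ is not its degree inside $R$ but the palette: $N(v)$ realizes all of $\{1,\dots,\Delta\}$, so there is no choice of $\omega$ colors ``shifted or renamed so as not to conflict with the partial coloring on $V\setminus R$'' --- every color in $\{1,\dots,\Delta\}$ is already used by some neighbor of $v$. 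Fixing $v$ therefore requires recoloring nodes \emph{outside} $R$, which is exactly the unspecified ``combinatorial core'' you defer; the Kempe-style swaps you sketch are not known to terminate with the desired structure in $O(\omega)$ rounds (distributed recoloring arguments of this kind generally need augmenting structures of length $\Omega(\log_\Delta n)$), so the gap is essential rather than cosmetic.

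The paper's proof sidesteps the rainbow-neighborhood obstacle entirely by working with $\omega+1$ color classes instead of one. After the $(\Delta+1)$-coloring, it processes the marked colors $\{1,\dots,\omega+1\}$ one per round: each node with a marked color switches to any color in $\{\omega+2,\dots,\Delta+1\}$ not present in its neighborhood, if one exists. A node that remains stuck had, at its turn, at least $\Delta-\omega$ neighbors carrying replacement colors, hence at most $\omega$ neighbors among the stuck nodes; so the stuck subgraph $H$ has maximum degree $\omega$, is still $K_{\omega+1}$-free, and --- crucially --- owns the palette $\{1,\dots,\omega+1\}$ exclusively, since all nodes outside $H$ end up with colors in $\{\omega+2,\dots,\Delta+1\}$. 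Recoloring $H$ with $\omega$ colors is then a legitimate instance of $\Delta$-coloring with parameter $\omega$ (the assumption $\omega\geq 3$ excludes odd-cycle obstructions and $K_{\omega+1}$-freeness excludes $(\omega+1)$-cliques), solved by the paper's general algorithm in $\Tdelc(n,\omega)$ rounds, which eliminates one color overall and gives the stated bounds. The idea you are missing is precisely this: target $\omega+1$ color classes at once, which simultaneously bounds the degree of the stuck subgraph and hands it a conflict-free palette.
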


In particular, this implies that triangle-free graphs (which include bipartite graphs) can be $\Delta$-colored in $O(\sqrt{\Delta\log \Delta} + \log n)$ rounds deterministically, and $O(\sqrt{\Delta\log \Delta} + \log \log n)$ rounds with high probability~\cite{MausTonoyan_dc22}. With the current state-of-the-art for MIS, these last runtimes beat all the complexities that follow from \cref{thm:local-main} when $\Delta \in \Otilde(\log^2 n)$ in the deterministic setting, and when $\Delta \in \Otilde(\log^{10/3}  \log n)$ in the randomized setting.

\subsection{Overview of Our Techniques}

To solve the $\Delta$-coloring problem in the deterministic LOCAL model, our approach relies on creating nodes with some guaranteed flexibility in their coloring throughout the graph. Referred to as $T$-nodes, these vertices have two of their neighbors assigned the color $1$, a color that remains fixed throughout the algorithm. This guaranteed repetition of the color $1$ in the neighborhood of $T$-nodes guarantees that a partial $\Delta$-coloring can always be extended to them,
regardless of the colors assigned to their other neighbors.

The core objective is to select $T$-nodes such that every node in the graph has a short uncolored path to a $T$-node, i.e., a path without a node permanently colored with $1$ at the beginning of the algorithm. Once such a structure is in place, the graph can be colored in layers, as follows. Each node in the graph is assigned to a layer according to the length of the shortest uncolored path to a $T$-node. Specifically, a node belongs to layer $i$ if its closest $T$-node is at distance $i$ along an uncolored path. Coloring is done in descending order of layers: nodes in the highest-numbered layer are colored first, followed by nodes in progressively lower layers.
For any $i\geq 1$, a partial $\Delta$-coloring of the layers numbered $i+1$ and above can always be extended to nodes in the layer $i$ due to the fact that each of them has at least one uncolored neighbor in layer $i-1$, giving it some flexibility. 
Layer $0$ consists only of the $T$-nodes, which similarly can always be colored using colors from the set $\set{2,\dots,\Delta}$ due to their two neighbors that have been (permanently) colored with $1$ early in the algorithm.

Constructing such a set of $T$-nodes is inherently a balancing act. On the one hand, ensuring that each node in the graph has a nearby $T$-node suggests adding as many $T$-nodes as possible. On the other hand, each $T$-node also disconnects potential uncolored paths as two of its neighbors get colored with color $1$.
An especially bad situation to avoid is one in which the nodes of color $1$ disconnect the graph and create an uncolored connected component isolated from all $T$-nodes.

To address this challenge, we construct a carefully designed cluster partition of the original graph, and define a multihypergraph from this cluster partition. Nodes of this instance correspond to clusters of nodes in the original graph, and are partitioned into two sets, which we call \emph{flex} and \emph{link}.
Adjacency between two virtual nodes represents adjacency of their respective clusters in the original graph. 
The flex clusters are responsible for creating $T$-nodes, while the link clusters ensure connectivity: they guarantee that every node in the graph is connected to some $T$-node through an uncolored path of \emph{constant} length.
In addition, some parts of the graph are not conducive to the creation of $T$-nodes due to a lack of local expansion. We deal with these parts of the graph through other arguments, finding in them small induced subgraphs (degree-choosable components, abbreviated DCCs) to which extending a partial coloring is always possible and simple, regardless of the coloring decisions made outside them in the graph.

As
each flex cluster creates a $T$-node in itself and permanently colors two of the $T$-node's neighbors, some connections are lost between nodes from the flex cluster and nodes in adjacent link clusters.
Our goal is to guarantee that each link cluster keeps at least one uncolored path to the ``center'' of a neighboring flex cluster.
To achieve this, we use a recent algorithmic extension of the sinkless orientation problem, the hypergraph sinkless orientation algorithm, which we apply to the hypergraph built from our cluster partition.
In the instance we build,
grabbing an hyperedge on the flex clusters' side corresponds to the creation of a particular $T$-node, while a grabbed hyperedge on the link nodes' side, by preventing the creation of a specific $T$-node, maintains connectivity from the link cluster to the center of the flex cluster.
Solving HSO ensures that each flex cluster creates at least one $T$-node in itself, and that every link cluster has at least one short viable path to a flex cluster's created $T$-node, a path that avoids permanently colored neighbors of $T$-nodes.
Computing the clusters, breaking symmetry between DCCs, and coloring nodes layer by layer, all reduce to solving instances of the Maximal Independent Set problem (MIS).
In the randomized setting, similar results can be obtained by employing randomized variants of the algorithmic subroutines used in the deterministic version of our algorithm.

In general graphs, when given a coloring and considering the subgraph induced by $k \leq \Delta$ color classes, we cannot exclude the possibility that the induced subgraph contains $k$-cliques, making it impossible to reduce the number of colors used in that subgraph.
Assuming a graph to be free of cliques of size $k < \Delta+1$ removes this obstacle, making it possible to solve $\Delta$-coloring 
by first $(\Delta+1)$-coloring the graph, and after some additional preprocessing, reducing to $k-1$ the number of colors used in the subgraph induced by $k$ colors.

\section{Preliminaries}\label{sec:prelims}

In this section, we collect some definitions and results that will be useful throughout the paper.
For a set $S$, we denote by $2^S$ its power set, i.e., the set of all subsets of $S$. The $\log^*$ function is defined by $\forall x \in (-\infty,1], \log^*(x) = 0$ and $\forall x > 1, \log^*(x) = 1 + \log^*(\log(x))$.

\paragraph{Graph and hypergraph notation}
A graph $G=(V,E)$ consists of a set of nodes $V$ and a set of edges $E$ where each edge $e \in E$ corresponds to a set $V_e \subseteq V$ containing exactly two distinct nodes, that is, $\card{V_e} = 2$.
$G' = (V',E')$ is a subgraph of $G = (V,E)$ iff $V' \subseteq V$ and $E' \subseteq E$. 
For a graph $G=(V,E)$ and a set $U \subseteq V$, the subgraph induced by $U$ in $G$ is the graph $G[U] = (U,E[U])$ where $E[U] = \set{uv \in E: u\in U,v\in U}$. Given a graph $G$, we denote by $V(G)$ and $E(G)$ its sets of nodes and edges.
Throughout the paper, $\Delta \geq 3$ denotes the maximum degree of the given input graph, and $n$ denotes its number of nodes. Two subgraphs $H$ and $H'$ of $G$ are independent if for any nodes $v\in H$ and $v'\in H'$, $v$ and $v'$ are not adjacent.

Hypergraphs are a generalization of graphs where the edges, now called hyperedges, can contain any nonnegative number of nodes. For an integer $r\geq 1$, a hypergraph $H=(V,E)$ is said to have rank $r$ if for each hyperedge $e \in E$, $\card{V_e} \in [1,r]$. Multigraphs (resp.\ multihypergraphs) generalize graphs (resp.\ hypergraphs) by allowing each edge (resp.\ hyperedge) to exist with multiplicity. That is, there can exist two distinct edges $e,e' \in E$ such that their corresponding sets of nodes are equal, $V_e = V_{e'} \subseteq V$. For brevity, we omit the ``multi'' prefix in most of the paper. All the objects we consider in this paper fall in one of two categories: graphs without multiplicity, or hypergraphs with multiplicity.

The degree $\deg(v)$ of a node $v$ in all these definitions is the number of edges containing said node: $\deg(v) = \card{\set{e \in E: v \in V_e}}$.
For a node $v \in V$, its neighborhood is $N(v) = (\bigcup_{e \in E : v\in V_e} V_e) \setminus \set{v}$.
Similarly, the neighborhood of a set of nodes $U \subseteq V$ is $N(U) = (\bigcup_{e \in E : v\in V_e} V_e) \setminus U$, so $N(\set{v}) = N(v)$.

A path between two nodes $v,v' \in V$ is a sequence of nodes $v_0,v_1,\dots,v_k$ such that $v_0 = v$, $v_k = v'$, and for each $i\in \set{0,\dots,k-1}$, $\exists e \in E$, $\set{v_i,v_{i+1}} \subseteq V_e$. $k$ is the length of the path. The smallest integer $k$ such that there exists a path of length $k$ between two nodes $v,v'$ is their distance, $\dist(v,v')$. When no such path exists, $\dist(v,v') = +\infty$. The distance between a set of nodes $U \subseteq V$ and a node $v \in V$ is defined as $\dist(v,U) = \min_{u \in U} \dist(v,u)$. Note that while distance between nodes verifies the triangle inequality ($\dist(v,v'') \leq \dist(v,v') + \dist(v',v'')$), this is not necessarily true with sets (e.g., if $v \neq v'$, $\dist(v,v') > \dist(v,\set{v,v'}) + \dist(v',\set{v,v'}) =0$).
For a positive integer $k>0$ and graph $G=(V,E)$, the power graph of $G$ of order $k$ is $G^k = (V,E_k)$, whose set of edges is $E_k = \set{uv \mid \dist(u,v) \in [1,k]}$.
We define the ball of radius $r$ centered on $v$ as $B(v,r) = \set{u \in V \mid \dist(u,v) \leq r}$, and the sphere of radius $r$ centered on $v$ as $S(v,r) = \set{u \in V \mid \dist(u,v) = r}$.

When not clear from context and the distinction matters, we add subscripts to $\deg$, $\dist$, $B$, and $S$, to indicate in which graph the degrees and distances are measured (e.g., $\deg_{G}(v)$, $\dist_{G'}(v,v')$, $B_{H}(v,r)$). A subgraph $H$ of $S\in G$ is of weak diameter $r$ if the diameter of $H$ is of $r$ in $G$. A subgraph $H$ of $S\in G$ is of strong diameter $r$ if the diameter of $H$ is of $r$ in $S$.

\paragraph{Hypergraph sinkless orientation}
Sinkless orientation is the problem of orienting edges of a graph such that each node has at least one outgoing edge. The problem can equivalently be described as edge grabbing, requiring that each node grabs one of its incident edges while no edge is grabbed more than once.
Hypergraph sinkless orientation is a natural generalization of that problem on hypergraphs, studied in~\cite{BMNSU_halls_thm_soda25}.

\begin{definition}[hypergraph sinkless orientation (HSO)]
\label{def:hso}
    Consider a hypergraph $H = (V,E)$. Let us define an orientation of a hyperedge $e \in E$ so that $e$ is outgoing for exactly one of its endpoints, and incoming for all of its other endpoints. An \emph{HSO} is an orientation of the hyperedges s.t.\ each node has at least one outgoing incident hyperedge.
    The \emph{HSO problem} is the problem of computing an HSO.
\end{definition}

Its state-of-the-art complexities in deterministic and randomized \LOCAL are as follows.

\begin{restatable}[{\cite{BMNSU_halls_thm_soda25}}]{theorem}{ThmHSO}
    \label{thm:HSO}
    There is a deterministic $O(\log_{\delta / r} n)$-round algorithm for computing an HSO in any $n$-node multihypergraph\footnote{A multihypergraph is simply a hypergraph in which the same hyperedge can appear more than once.} with minimum degree $\delta$ and maximum rank $r<\delta$. 
\end{restatable}

\begin{restatable}[{\cite{BMNSU_halls_thm_soda25}}]{theorem}{ThmHSOrand}
    \label{thm:HSO_randomized}
    There is a randomized $O(\log_{\delta / r} \log n + \log_{\delta / r} \delta)$-round algorithm for computing an HSO in any $n$-node multihypergraph with maximum rank $r$ and minimum degree $\delta>320 r \log r$ w.h.p. 
\end{restatable}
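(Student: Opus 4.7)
The plan is a standard two-phase shattering-plus-cleanup strategy, using the deterministic HSO algorithm of Theorem~\ref{thm:HSO} as the post-shattering subroutine.

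In the randomized (shattering) phase, I would run $T = O(\log_{\delta/r}\delta + \log_{\delta/r}\log n)$ iterations of a simple edge-grabbing routine: in each iteration, every still-unsatisfied node picks one of its incident, still-available hyperedges uniformly at random, and each hyperedge is awarded to a single contender (for instance, the contender with the smallest identifier). The key probabilistic claim to establish is that, conditioned on the history, the probability that a given node $v$ remains unsatisfied after one more round drops by a factor of $\Omega(\delta/r)$: for $v$ to fail, each of its $\geq\delta$ incident hyperedges must either already be claimed or be awarded to some other endpoint during the round, and since every hyperedge has at most $r$ endpoints, each hyperedge contributes roughly an $r/\delta$ factor of ``help'' to $v$. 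Composing over $O(\log_{\delta/r}\delta)$ rounds brings the per-node unsatisfaction probability down to $2^{-\Omega(\delta)}$, which under the hypothesis $\delta > 320\,r\log r$ is small enough to invoke a standard shattering lemma (in the spirit of Ghaffari's or Barenboim--Elkin--Pettie--Schneider's shattering frameworks).

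In the deterministic cleanup phase, the shattering lemma guarantees that, after the $O(\log_{\delta/r}\log n)$ additional rounds, the residual unsatisfied nodes form connected subinstances in the hyperedge-induced dependency graph, each of size at most $\poly(\log n)$ with high probability. Crucially, in each such subinstance the local degree-to-rank ratio is still $\Omega(\delta/r)$, since shattering preserves this ratio up to constants. Plugging each subinstance of size $n'\le\poly(\log n)$ into the deterministic algorithm of Theorem~\ref{thm:HSO} completes HSO on it in $O(\log_{\delta/r} n') = O(\log_{\delta/r}\log n)$ rounds. Adding the two phases yields the claimed $O(\log_{\delta/r}\log n + \log_{\delta/r}\delta)$ bound.

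The main obstacle I anticipate is getting the amplification base to be exactly $\delta/r$ rather than something weaker like $\delta$ or $\sqrt{\delta/r}$. The delicate step is the per-round analysis: the events ``hyperedge $e$ is grabbed by someone other than $v$'' are not independent across the hyperedges incident to $v$, because a single other node $u$ contending on several of $v$'s hyperedges induces positive correlation. A careful conditioning argument, combined with the fact that each hyperedge has at most $r$ endpoints (so at most $r-1$ competitors against $v$ per hyperedge), is needed to convert this into a clean $1-\Omega(r/\delta)$ survival bound. The hypothesis $\delta > 320\,r\log r$ is precisely what is needed for the union bound in the shattering step (over the $r$ endpoints per hyperedge and over all $O(\log n)$-sized candidate bad components) to go through, and also ensures that the $\log_{\delta/r}\delta$ burn-in term dominates over the contribution of the internal union bound.
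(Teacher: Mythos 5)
The paper never proves this statement: \cref{thm:HSO_randomized} is imported verbatim from \cite{BMNSU_halls_thm_soda25}, exactly like its deterministic counterpart \cref{thm:HSO}, and is used in this work only as a black box. There is therefore no in-paper proof to compare your sketch against; it can only be judged as a reconstruction of the external result. Your overall plan --- a random grabbing phase that drives the per-node failure probability down at rate $\delta/r$ per round, a shattering step, and then the deterministic $O(\log_{\delta/r} n')$ algorithm of \cref{thm:HSO} on $\poly(\log n)$-sized residual components --- is the natural route to exactly this complexity bound.

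As a proof, though, the sketch has concrete gaps. First, your per-round analysis does not match the process you define: if each unsatisfied node proposes a \emph{single} random available hyperedge, the failure event for $v$ is ``$v$'s proposed hyperedge is awarded to another endpoint'', which is at most $(r-1)/\delta$ because each of the at most $r-1$ competitors proposes that particular hyperedge with probability at most $1/\delta$; the event you actually bound (every one of $v$'s $\geq \delta$ incident hyperedges being claimed or lost in the round) belongs to a different process. Second, and more importantly, the claim that the degree-to-rank ratio is ``preserved up to constants'' is precisely the lemma that carries the theorem, not a given: hyperedges incident to a still-unsatisfied node are progressively claimed by other endpoints and become useless to it, so you must prove that after $O(\log_{\delta/r}\delta)$ iterations, and inside each post-shattering component, every surviving node still has $\Omega(\delta)$ unclaimed incident hyperedges. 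In expectation a node loses only an $O(r/\delta)$-fraction of its hyperedges per round, and making the total loss plus the required concentration work is where a hypothesis of the form $\delta > 320\, r\log r$ is really needed in such an argument --- not merely in ``the union bound in the shattering step'' as you suggest. Third, shattering requires the failure probability to be polynomially small in the dependency degree, but only the \emph{minimum} degree of the hypergraph is bounded; you would, e.g., have each node restrict attention to $\delta$ of its incident hyperedges so that the relevant dependency degree is at most $\delta r \leq \poly(\delta)$, and then verify that the residual components handed to \cref{thm:HSO} are still legitimate HSO instances with the required degree-to-rank ratio. None of this looks unfixable, but as written these are assertions rather than proofs.
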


In other notation, $\Thso(n,\delta/r) \in O(\log_{\delta / r} n)$ and $\Thso(n,\delta/r,\delta) \in O(\log_{\delta / r} \log n + \log_{\delta / r} \delta)$.

\paragraph{MIS and coloring} Recall that in the Maximal Independent Set problem (MIS), the goal is to compute a set of nodes $I \subseteq V$ that is both 1) independent (for any two nodes $u,v \in I$, $uv \not \in E$), and 2) maximal (for any node $v \in V \setminus I$, $I \cup \set{v}$ is not independent, i.e., $\exists u \in N(v) \cap I$).

\begin{restatable}{theorem}{ThmMIS}
    \label{thm:local-mis}

    MIS admits deterministic LOCAL algorithms of complexity
    $\Otilde(\log^{5/3} n)$~\cite{GG_focs24},
    $O(\log n \log^2 \Delta)$~\cite{faour2023local}, and
    $O(\Delta + \log^* n)$~\cite{barenboim14distributed}.

    MIS admits randomized LOCAL algorithms of complexity
    $O(\log \Delta) + \Otilde(\log^{2} n)$~\cite{ghaffari16improved,GG_focs24},
    $\Otilde(\log^{5/3} \Delta) + \Otilde(\log^{5/3} \log n)$~\cite{ghaffari16improved,GG_focs24}, and
    $O(\log^3 \Delta + \log^2 \Delta \cdot \log \log n)$~\cite{ghaffari16improved,faour2023local}.
\end{restatable}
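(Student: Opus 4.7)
The plan is to invoke each complexity bound as a black-box result from the corresponding prior work; the statement is essentially a consolidation of the currently known state-of-the-art complexities for the MIS problem rather than a new algorithmic contribution of the present paper.

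For the three deterministic bounds, no combination is needed. The bound $\Otilde(\log^{5/3} n)$ is the main theorem of Ghaffari--Grunau FOCS'24, obtained via their improved rounding-based network decomposition. The bound $O(\log n \log^2 \Delta)$ is the local rounding algorithm of Faour--Ghaffari--Kuhn--Rozhon. The classical bound $O(\Delta + \log^* n)$ is from Barenboim--Elkin--Pettie--Schneider, who combine a decomposition of the graph into $O(1)$ color classes of bounded chromatic degree with Linial's $O(\log^* n)$-round algorithm applied to each resulting constant-degree subgraph. Each of these bounds is simply restated here to make them available as black boxes for the main reductions in the paper.

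The three randomized bounds all follow from Ghaffari's graph shattering framework. A randomized pre-shattering phase of complexity $f(\Delta)$ rounds computes an MIS on all but a ``bad'' set $B$ of nodes, such that with high probability every connected component of $G[B]$ has size at most $\poly(\log n)$. One then runs a deterministic MIS algorithm independently on each component, which contributes $T_{\mathrm{det}}(\poly\log n, \Delta)$ rounds and thereby turns each $\log n$ factor in the deterministic complexity into a $\log \log n$ factor. Pairing Ghaffari's $O(\log \Delta)$-round pre-shattering, the $\Otilde(\log^{5/3} \Delta)$ variant, and the $O(\log^3 \Delta)$ variant with the corresponding deterministic post-shattering algorithms listed above yields the three stated randomized bounds. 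The only mildly delicate point, which is handled in the cited works, is that each component must run the deterministic subroutine using only the unique identifiers inherited from the original graph; since the deterministic complexities depend only on the number of nodes in the component (with IDs drawn from a polynomial range), this is routine. Since no new analysis is involved, the main ``obstacle'' is purely bookkeeping---matching each randomized bound with the pre-shattering algorithm and deterministic post-shattering algorithm whose combination produces it.
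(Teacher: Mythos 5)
Your proposal is correct and matches the paper's treatment: the paper does not prove this theorem but simply cites the respective prior works, with the randomized bounds obtained exactly as you describe by plugging the deterministic algorithms into Ghaffari's shattering framework. The only nitpicks are bibliographic (e.g., \cite{faour2023local} also includes Grunau among its authors), which do not affect correctness.
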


In the degree+1-list-coloring problem (D1LC), each node $v \in V$ has a list $\psi(v)$ of colors of size $\card{\psi(v)} = \deg(v) +1$. The goal is to assign each node $v \in V$ a color from its list such that adjacent nodes receive distinct colors. The problem is a stronger variant of the $(\Delta+1)$-coloring problem, where each node receives the same list of colors $\set{1,\dots,\Delta+1}$.
D1LC (a fortiori, $(\Delta+1)$-coloring) admits a simple reduction to MIS.

\begin{restatable}[{\cite[Section 6.1]{Luby1986}}]{theorem}{ThmColoringMISReduction} 
\label{thm:coloring-to-mis-reduction}
    Solving degree+1-list-coloring on a graph $G=(V,E)$ with $n$ nodes and maximum degree $\Delta$ reduces to solving MIS on a related graph $G'=(V',E')$ with $n' \leq n (\Delta+1)$ nodes and maximum degree $\Delta' \leq 2\Delta$, such that simulating one round of communication on $G'$ only takes $O(1)$ rounds of communication on $G$. In equations:
$\Tdlc(n,\Delta) \in O(\Tmis(n(\Delta+1),2\Delta))$ and $\Tdlcrand(n,\Delta) \in O(\Tmisrand(n(\Delta+1),2\Delta))$.
\end{restatable}

\begin{restatable}{theorem}{ThmColoring}
    \label{thm:local-d1lc}

    Degree+1-list-coloring admits deterministic LOCAL algorithms of complexity $\Otilde(\log^{5/3} n)$~\cite{GG_focs24}, $O(\log^2 \Delta \cdot \log n)$~\cite{ghaffari2021deterministic}, and $O(\sqrt{\Delta \log \Delta} + \log^* n)$~\cite{MausTonoyan_dc22}. 

    Degree+1-list-coloring admits randomized LOCAL algorithms of complexity $\Otilde(\log^{5/3} \log n)$~\cite{HalldorssonKNT22near,GG_focs24} and $O(\log^2 \Delta \cdot \log \log n)$~\cite{HalldorssonKNT22near,ghaffari2021deterministic}.
\end{restatable}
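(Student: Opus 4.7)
The plan is to obtain each entry of the theorem by composing the generic reduction of Theorem~\ref{thm:coloring-to-mis-reduction} with the MIS runtimes collected in Theorem~\ref{thm:local-mis}, or, for one entry, by citing a dedicated algorithm for D1LC directly. The reduction produces an MIS instance on $n' \leq n(\Delta+1)$ nodes of maximum degree at most $2\Delta$, with $O(1)$-round simulation overhead; since $\Delta \leq n$, this yields $\log n' = O(\log n)$ and $\log(2\Delta) = O(\log \Delta)$. Substituting the first two entries of Theorem~\ref{thm:local-mis} therefore turns the MIS bounds $\Otilde(\log^{5/3} n)$ and $O(\log n \log^2 \Delta)$ into the matching deterministic D1LC bounds $\Otilde(\log^{5/3} n)$ and $O(\log^2 \Delta \cdot \log n)$. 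The remaining deterministic entry $O(\sqrt{\Delta \log \Delta} + \log^* n)$ is not recoverable from the MIS reduction (which would only give the weaker $O(\Delta + \log^* n)$ via~\cite{barenboim14distributed}), and I would import it directly from the algorithm of~\cite{MausTonoyan_dc22} that is already tailored to D1LC.

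For the randomized side, I would invoke the shattering framework of~\cite{HalldorssonKNT22near}: one executes a randomized pre-processing phase that, w.h.p., reduces randomized D1LC on $G$ to solving independent deterministic D1LC instances on disjoint residual subgraphs of size $\poly(\log n)$ and maximum degree at most $\Delta$. Running the deterministic bounds obtained in the previous paragraph on these components in parallel then yields $\Otilde(\log^{5/3} \log n)$ from the $\Otilde(\log^{5/3} N)$ bound (with $N = \poly(\log n)$) and $O(\log^2 \Delta \cdot \log \log n)$ from the $O(\log^2 \Delta \cdot \log N)$ bound. Both randomized complexities in the statement therefore inherit the corresponding deterministic improvement automatically.

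The only piece of genuine verification required is that each parameter substitution is legitimate. The standard bound $\Delta \leq n-1$ keeps the deterministic composition clean, and the correctness of the shattering reduction in the randomized regime is exactly what~\cite{HalldorssonKNT22near} establishes for D1LC. The main obstacle, such as it is, is thus merely bookkeeping; no new algorithmic idea is introduced in the proof, and the theorem should be viewed as a convenient consolidation of the current state of the art, phrased in the form most convenient for the reductions of the later sections.
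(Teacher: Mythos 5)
Your proposal is correct and matches the paper's intent: the paper offers no proof of this theorem beyond citations, and the intended justification is exactly your composition — the stated D1LC-to-MIS reduction (\cref{thm:coloring-to-mis-reduction}) together with the MIS bounds, the dedicated algorithm of~\cite{MausTonoyan_dc22}, and the shattering framework of~\cite{HalldorssonKNT22near} run with the improved deterministic algorithms on the $\poly(\log n)$-size residual components. The only cosmetic difference is that the paper attributes the deterministic $O(\log^2 \Delta \cdot \log n)$ bound directly to~\cite{ghaffari2021deterministic} rather than deriving it from the MIS bound of~\cite{faour2023local}; both routes give the same complexity.
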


A \emph{layered graph} with $h$ layers is a graph whose nodes are partitioned into $h$ sets $V_1,\dots,V_h$. In a layered graph, the \emph{up-degree} $\updeg(v)$ of a node $v \in V_i, i \in [h]$ is the number of neighbors of $v$ in equal or lower layers, $\updeg(v) = \card{N(v) \cap \bigsqcup_{j\leq i} V_j}$. We denote by $\hDelta = \max_{v \in V} \updeg(v)$ the maximum updegree of the graph, and $\Tlayer(n,\hDelta,h)$ the deterministic complexity of list-coloring a layered graph with $h$ layers, maximum up-degree $\hDelta$, and whose nodes each have a list of colors $\psi(v)$ of size $\card{\psi(v)} = \updeg(v)+1$. Note that $\Tlayer(n,\hDelta,h) \in O(h \cdot \Tdlc(n,\hDelta))$. We similarly define $\Tlayerrand(n,\hDelta,h)$ for the randomized complexity of the problem, which is itself bounded by $O(h \cdot \Tdlcrand(n,\hDelta))$.

\paragraph{Degree-choosability and DCCs}
Our $\Delta$-coloring algorithm will make use of a concept called \emph{degree-choosable subgraphs (DCCs)}. A subgraph is degree-choosable if, for any assignment of color lists to its nodes, where each node is assigned a list of size equal to its degree, there exists a proper coloring in which each node receives a color from its list. As such, these subgraphs admit $\Delta$-colorings. 

\begin{definition}[Degree-choosability, DCCs~\cite{vizing1976choosability,erdosrubintaylor1979choosability}]
    \label{def:choosability}
    A graph $G=(V,E)$ is \emph{degree-choosable} if the following holds: for any assignment of lists of colors $(\psi(v))_{v \in V}$ to the nodes such that $|\psi(v)| \geq \deg(v)$ for each node $v \in V$, there exists a proper coloring of the nodes of $G$ such that each node $v$ receives a color from its list $\psi(v)$. An induced subgraph $H = G[U]$ of $G$ where $U \subseteq V$ is a \emph{degree-choosable component} (DCC) iff $H$ is connected and degree-choosable. 
\end{definition}

\begin{lemma}[\cite{vizing1976choosability,erdosrubintaylor1979choosability}]
    \label{lem:dcc-unless-clique-or-odd}
    Every $2$-connected graph is degree-choosable unless it is a clique or an odd cycle.
\end{lemma}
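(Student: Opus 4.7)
The plan is to prove the two directions of the characterization separately.

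For the forward direction (cliques and odd cycles fail to be degree-choosable), I would exhibit explicit bad list assignments. For the complete graph $K_n$, where every vertex has degree $n-1$, assign every vertex the same list $\{1,\ldots,n-1\}$: a proper coloring would need $n$ pairwise distinct colors on the $n$ vertices, but only $n-1$ colors are available. For the odd cycle $C_{2k+1}$, where every vertex has degree $2$, assign every vertex the list $\{1,2\}$: a proper list-coloring would be a proper $2$-coloring of an odd cycle, which does not exist.

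For the converse, let $G$ be $2$-connected, not a clique, and not an odd cycle, and let $\psi$ be any list assignment with $|\psi(v)| \geq \deg_G(v)$ for all $v$. The crux is a structural lemma asserting that $G$ contains a connected subgraph $H$ (on a proper subset of $V(G)$, unless $G$ itself already is one of these) of one of three \emph{reducible} types: (i) an even cycle with at least one chord, (ii) two odd cycles sharing exactly one vertex (a ``bowtie''), or (iii) two vertex-disjoint odd cycles joined by a path (a ``dumbbell''). Each of these configurations can be verified directly to be degree-choosable with respect to lists of size equal to the in-$H$ degree, by a short case analysis on the configuration.

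Given such an $H$, I color $G$ in two phases. First, color the vertices of $V(G) \setminus V(H)$ in reverse BFS order rooted at $V(H)$, so that when each such vertex $v$ is colored, it has at least one still-uncolored neighbor (either in $V(H)$ or closer to $V(H)$ along the BFS). Thus $v$ sees at most $\deg_G(v) - 1$ colors among its already-colored neighbors, leaving a free color in $\psi(v)$. Second, color $V(H)$ using its degree-choosability: each $v \in V(H)$ now has a residual list of size at least $\deg_G(v) - |N_G(v) \setminus V(H)| = \deg_H(v)$, so a proper list-coloring of $H$ extends the partial coloring to all of $G$.

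The main obstacle is the structural lemma: proving that every $2$-connected graph outside the two forbidden classes contains one of (i)--(iii). I would handle this by splitting on bipartiteness. In the bipartite case, $G$ contains an even cycle $C$; since $G$ is $2$-connected and not itself a cycle, either $C$ has a chord in $G$ or there is an ear attaching to $C$, and a short argument produces a chorded even cycle. In the non-bipartite case, $G$ contains an odd cycle $C$; if every ear/attachment produces only odd subcycles, a careful analysis using an ear decomposition starting from $C$ forces either a second odd cycle sharing exactly one vertex with $C$ (case (ii)) or a disjoint odd cycle joined by a path (case (iii)); otherwise an even cycle with a chord appears and we are in case (i). Verifying degree-choosability of each of (i)--(iii) is the remaining mechanical subtask, handled by exhibiting, for each configuration, an explicit ordering in which colors can be greedily selected, using the parity mismatch (in (ii) and (iii)) or the chord (in (i)) to break the obstruction that dooms a pure odd cycle.
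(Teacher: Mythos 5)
This lemma is not proved in the paper at all -- it is the classical Borodin/Erd\H{o}s--Rubin--Taylor result, cited as such -- so the only question is whether your self-contained argument is sound, and unfortunately its central step is not. Your ``reducible'' configurations (ii) and (iii) are precisely the canonical \emph{non}-degree-choosable graphs: a bowtie of two odd cycles and a dumbbell of two odd cycles joined by a path are Gallai trees (every block is an odd cycle or an edge), and Gallai trees are exactly the connected graphs that are not degree-choosable. Concretely, for two odd cycles $C_1,C_2$ sharing the single vertex $v$, assign $\psi(u)=\{1,2\}$ for $u\in C_1\setminus\{v\}$, $\psi(u)=\{3,4\}$ for $u\in C_2\setminus\{v\}$, and $\psi(v)=\{1,2,3,4\}$: whichever pair $v$'s color lies in, the corresponding odd cycle would have to be properly $2$-colored, which is impossible. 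An analogous assignment (disjoint color pairs on the two cycles, with the connecting path forcing a clash) kills the dumbbell. Configuration (i) as you state it (``even cycle with at least one chord'') is also too broad: $K_4$ is a $4$-cycle with two chords and is not degree-choosable. So the ``mechanical subtask'' of verifying degree-choosability of (i)--(iii) cannot succeed, and the structural lemma you aim for is the wrong target: in the non-bipartite case, producing a bowtie or dumbbell gives you nothing.

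The correct skeleton is close to what you wrote but with different cores. The classical key lemma (Rubin's block lemma, from the very ERT paper cited here) states that every $2$-connected graph that is neither complete nor a cycle contains an \emph{induced even cycle with at most one chord}; the degree-choosable cores are therefore chordless even cycles (which are $2$-choosable) and theta graphs (two branch vertices of degree $3$ joined by three internally disjoint paths), each of which admits a short direct verification. One must also handle the case where $G$ itself is an even cycle. Your surrounding machinery is fine and standard: the bad lists for cliques and odd cycles in the forward direction are correct, and the two-phase extension -- coloring $V(G)\setminus V(H)$ greedily in decreasing order of BFS distance to $V(H)$, so each vertex keeps an uncolored neighbor, and then finishing on $H$ with residual lists of size at least $\deg_H(v)$ -- is exactly the usual reduction. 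The gap is solely in the choice of configurations and in the parity analysis that would have to produce an induced even cycle with at most one chord rather than odd-cycle bowties or dumbbells.
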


An absence of DCCs within some distance of a node constrains the topology around it. Notably, $\Delta$-regular graphs without small DCCs \emph{expand}, in the sense that each node in the graph has at least $\approx \Delta^{k/2}$ nodes at distance $k$ from itself, if $k$ is smaller than the size of the smallest DCC in the graph.

\begin{lemma}[{\cite[Lemma 1]{GHKM_dc21}}]
    \label{lem:uniqueBFS}
    Let $r$ be an integer, $G = (V,E)$ be a graph, and $v \in V$ be a node of $G$. Assume that the distance-$r$ neighborhood of $v$ contains no DCC. Then a depth-$r$ BFS rooted at $v$ is unique, i.e., for each node $u \in N^r(v)$, there exists a unique node $\parent(u)$ such that $\dist(\parent(u),v) = \dist(u,v) - 1$.
\end{lemma}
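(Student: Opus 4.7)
The plan is to argue by contradiction using Lemma~\ref{lem:dcc-unless-clique-or-odd}. I would suppose that there exists a node $u \in B(v,r)$ at BFS distance $d = \dist(u,v)$ with two distinct neighbors $u_1 \neq u_2$ both at distance $d-1$ from $v$. From this assumption I would exhibit a DCC inside the distance-$r$ neighborhood of $v$, contradicting the hypothesis.

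The first step would be to extract a short even cycle. For each $i \in \{1,2\}$, I fix a shortest $u_i$-$v$ path in $G$ and, together with the edge $u u_i$, obtain two shortest $u$-$v$ paths $P^1, P^2$ of length $d$ whose first edges differ. Writing $P^i = (v^i_0 = u, v^i_1 = u_i, v^i_2, \ldots, v^i_d = v)$, every vertex $v^i_j$ sits in the BFS layer at distance $d-j$ from $v$, so an intersection $v^1_j = v^2_{j'}$ can only occur when $j = j'$. I would then let $k$ be the smallest index $\geq 1$ with $v^1_k = v^2_k$; such a $k$ exists because the two paths agree at $v$, and the initial inequality $u_1 \neq u_2$ forces $k \geq 2$. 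Minimality of $k$ together with the layer observation makes the two initial segments $Q_i = v^i_0, v^i_1, \ldots, v^i_k$ internally vertex-disjoint, and all their vertices lie in $B(v,r)$ since they lie on shortest paths from $v$. Hence $C := Q_1 \cup Q_2$ is a simple cycle of even length $2k \geq 4$ with $V(C) \subseteq B(v,r)$.

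The second step would be to show that the induced subgraph $H := G[V(C)]$ is a DCC. Because $H$ contains the spanning cycle $C$, it is both connected and 2-connected. Since $|V(C)| = 2k$ is even, $H$ cannot be an odd cycle; and since $u = v^i_0$ and $w := v^i_k$ sit in BFS layers at distance $d$ and $d-k$ from $v$ respectively, we have $\dist_G(u,w) \geq k \geq 2$, so the edge $uw$ is absent from $H$, ruling out that $H$ is a clique. Lemma~\ref{lem:dcc-unless-clique-or-odd} then yields that $H$ is degree-choosable, producing the required DCC inside the distance-$r$ ball around $v$ and contradicting the hypothesis.

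The step I expect to require the most care is the extraction of the simple even cycle: one has to argue that the two shortest paths can be truncated at a well-defined first meeting vertex $w$, that the two segments leading to $w$ share only their endpoints, and that they have the same length so the resulting cycle is even. All of these follow cleanly from the observation that a vertex on a shortest path from $v$ is pinned to a specific BFS layer, so two such paths can only intersect at matching indices; once this is in place, the rest of the argument is a routine appeal to the degree-choosability characterization.
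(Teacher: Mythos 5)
Your proof is correct: the layer argument pins both shortest paths to matching BFS levels, the first meeting vertex yields a simple even cycle of length $2k\geq 4$ inside $B(v,r)$, and the induced subgraph on its vertices is $2$-connected, not a clique (as $u$ and $w$ are at distance $\geq 2$) and not an odd cycle, so \cref{lem:dcc-unless-clique-or-odd} gives a DCC, contradicting the hypothesis. Note that the paper does not prove this statement itself but cites it from~\cite{GHKM_dc21}; your argument is essentially the standard proof given there, so there is nothing to flag.
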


\begin{lemma}[{\cite[Lemma 8]{GHKM_dc21}}]
    \label{lem:DCCorexpand}
    For every node $v$ in a graph $G$, there exists either a node of degree less than $\Delta$ in distance $k$ of $v$ or a degree-choosable component of diameter at most $k$ in distance $k$ of $v$, or there are at least $(\Delta-1)^{\floor{k/2}}$ nodes at distance $k$ of $v$.
\end{lemma}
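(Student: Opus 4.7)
The plan is to prove the contrapositive: assume that every node in $B(v,k)$ has degree $\Delta$ and that $B(v,k)$ contains no DCC of radius at most $k$, and show that $|S(v,k)| \geq (\Delta-1)^{\lfloor k/2\rfloor}$. Under these hypotheses, Lemma~\ref{lem:uniqueBFS} applies and the BFS tree rooted at $v$ is unique down to depth $k$: each $u \in B(v,k) \setminus \{v\}$ has a unique parent at depth $\dist(v,u)-1$, and its remaining $\Delta-1$ incident edges go to vertices at depth $\dist(v,u)$ (\emph{sibling edges}) or $\dist(v,u)+1$ (\emph{child edges}).

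Next I set up a layer-by-layer counting recursion. Writing $n_i := |S(v,i)|$ and letting $e_i$ denote the number of edges internal to $S(v,i)$, summing the degree $\Delta$ over all nodes in $S(v,i)$ and using that each node in $S(v,i)$ (resp.\ $S(v,i+1)$) has exactly one neighbor in $S(v,i-1)$ (resp.\ $S(v,i)$) yields
\[
n_{i+1} \;=\; (\Delta-1)\,n_i \;-\; 2 e_i.
\]
So the only obstacle to clean $(\Delta-1)$-expansion between consecutive layers is the presence of sibling edges.

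The crux is to use the no-DCC assumption to bound $e_i$. Any sibling edge $uu'\subseteq S(v,i)$, together with the unique BFS paths from $u$ and $u'$ down to their lowest common ancestor, closes an odd cycle of length at most $2i+1$ contained in $B(v,k)$. By Lemma~\ref{lem:dcc-unless-clique-or-odd}, the $2$-connected block of $G[B(v,k)]$ containing this odd cycle must itself be an odd cycle or a clique, for otherwise it would be a DCC of radius at most $k$, contradicting our hypothesis. A case analysis then shows that two sibling edges whose corresponding odd cycles share a common edge cannot both lie in such restrictive blocks: gluing the two odd cycles produces a $2$-connected graph that is neither an odd cycle nor a clique, hence a DCC of radius at most $k$ by Lemma~\ref{lem:dcc-unless-clique-or-odd}. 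This forces the sibling edges to sit in a sparse, tree-like pattern across consecutive depths, so that the cumulative deficit $2(e_i + e_{i+1})$ is small enough to yield, after amortization, the two-step doubling inequality $n_{i+2} \geq (\Delta-1)\,n_i$. Iterating $\lfloor k/2 \rfloor$ times from $n_0 = 1$ then gives $n_k \geq (\Delta-1)^{\lfloor k/2 \rfloor}$, as required.

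The main obstacle I anticipate is exactly this structural step---converting the local observation ``one sibling edge spawns a short odd cycle'' into a global, per-layer-pair, quantitative bound on $e_i$. One has to carefully track the Gallai-tree decomposition of $G[B(v,k)]$ and rule out configurations in which many short odd-cycle or clique blocks attach at overlapping BFS ancestors, which could a priori allow $e_i$ to grow without producing any single DCC of small radius. Making this amortization tight enough to match the exponent $\lfloor k/2 \rfloor$ (rather than a weaker $k/c$) is where the analysis is most delicate.
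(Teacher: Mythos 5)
Your overall skeleton (contrapositive, uniqueness of the BFS via \cref{lem:uniqueBFS}, the layer recursion $n_{i+1}=(\Delta-1)n_i-2e_i$, sibling edges closing short odd cycles, and a two-step expansion $n_{i+2}\geq(\Delta-1)n_i$) is the right one and matches the strategy of the proof in the cited work \cite{GHKM_dc21} (the present paper only imports the statement and gives no proof). However, the step you yourself flag as the crux is not proven, and the one concrete structural claim you offer in its place is false. You assert that gluing two odd cycles that share an edge ``produces a $2$-connected graph that is neither an odd cycle nor a clique, hence a DCC.'' A DCC must be an \emph{induced} subgraph, and the induced subgraph on the union of the two cycles can perfectly well be a clique: take a node $p$ at layer $i-1$ with three pairwise-adjacent children $u,w,x$ at layer $i$; the triangles $puw$ and $pux$ share the edge $pu$, yet $\{p,u,w,x\}$ induces $K_4$, which by \cref{lem:dcc-unless-clique-or-odd} is not degree-choosable. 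So many sibling edges can coexist with the no-DCC hypothesis, and cliques among a node's children are exactly the configurations that destroy single-layer expansion and are the reason the exponent is $\lfloor k/2\rfloor$ rather than $k$. Your proposed amortization ``$2(e_i+e_{i+1})$ is small'' is also the wrong shape of statement: in the clique case $e_{i+1}$ can be of order $(\Delta-2)n_{i+1}/2$, so no absolute bound on intra-layer edges holds; what one actually proves is a \emph{per-node} statement, e.g.\ that every node $u$ at layer $i$ has at least $\Delta-1$ descendants at layer $i+2$ (using that a clique has size at most $\Delta+1$, so children caught in a clique still retain an edge downward), and then one sums over $u$ using that grandchild sets of distinct layer-$i$ nodes are disjoint because parents are unique. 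None of this case analysis (including the interaction between a sibling edge at layer $i$ and edges among the children, which share the edge from $u$ to its parent and must be checked against both the odd-cycle and the clique alternative, and the low-degree boundary case $\Delta=3$) appears in your sketch.

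Two smaller points to tighten if you carry this out: the hypothesis in the contrapositive is ``no DCC \emph{of radius at most $k$} within distance $k$,'' whereas \cref{lem:uniqueBFS} asks for no DCC at all in the relevant neighborhood, so you need to argue that any DCC you exhibit (a block of $G[B(v,k)]$, or the induced union of two short cycles) indeed has radius at most $k$; and when you pass from a cycle to ``the block containing it,'' you should verify that this block is an induced subgraph of $G$ lying within the ball, which is true but worth a sentence. As it stands, the proposal is a plausible plan with the decisive combinatorial lemma missing and its stated justification incorrect.
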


\section{Algorithm for General Graphs}
\label{sec:main-alg}
\subsection{Technical Roadmap}
\label{sec:roadmap}

In this section, we present in more details our approach for obtaining faster deterministic $\Delta$-coloring algorithms in the LOCAL model. The overall goal of this section is to prove \cref{thm:local-main}.

\ThmLOCALmain*

In the same way that only very specific graphs are not $\Delta$-colorable, the difficulty in computing a $\Delta$-coloring comes from only very specific configurations.
Prior work on the problem has highlighted that a node without any ``helpful subgraph'' (a DCC) or node of degree $< \Delta$ within some distance $k$ has exponentially many neighbors at distance $k$ (\cref{lem:DCCorexpand}).
At a high level, we aim to generate some flexibility similar to that provided by a nearby DCC throughout the graph, even in parts without small DCCs.
We do so by coloring a few nodes (set $I$ in \cref{lem:finish-coloring}) throughout the graph with color $1$, leaving any node with two $1$-colored neighbors with only $\Delta-2$ uncolored neighbors but $\Delta-1$ colors to choose from (set $T_I$ in \cref{lem:finish-coloring}).

Having ensured that all nodes in the graph have a short path to either a node with two $1$-colored neighbors, a DCC, or a node of degree $<\Delta$, obtaining a $\Delta$-coloring is quite straightforward (\cref{lem:finish-coloring})

\begin{restatable}{lemma}{LemFinishColoring}
\label{lem:finish-coloring}
  Let $G=(V,E)$ be a graph, $k$ and $d$ be integers. Suppose that we have computed $\fS \subseteq 2^V$ and $I \subseteq V$ such that:
  \begin{enumerate}
    \item for each member $U\in \fS$, the induced subgraph $G[U]$ is either a DCC of weak diameter at most $k$, or a single node of degree strictly less than $\Delta$ ($\fS$ is $\Delta$-completable),
    \item for any two members $U,U' \in S$, $U\cap (U' \sqcup N(U')) = \emptyset$ (independence of $\fS$),
    \item $I$ is an independent set, and $I \cap \bigsqcup_{U \in S} U = \emptyset$ ($I$ independent and outside $\fS$),

    Let $G'=G[V\setminus I]$, $T_I = \set{ v : \card{N(v) \cap I} \geq 2}$ be the set of nodes with two neighbors in $I$, $W = T_I \cup \bigsqcup_{U \in S} U \subseteq V\setminus I$ be the set of nodes in either $\fS$ or $T_I$.

    \item for any node $v \in V \setminus I$, $\dist_{G'}(v,W)\leq d$ (proximity of $\fS$ and $T_I$).
\end{enumerate}
  Then a $\Delta$-coloring of $G$ can be computed in $\Tlayer(n,\Delta-1,d)+O(k)$ rounds of \LOCAL.
\end{restatable}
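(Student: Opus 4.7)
The plan is to extend the coloring in three phases. In Phase~1 I color every node of $I$ with color~$1$; this is proper by independence of $I$, and it endows every $v \in T_I$ with two fixed neighbors of color~$1$, so that $v$ permanently retains the list $\{2, \dots, \Delta\}$ of size $\Delta - 1$ against at most $\Delta - 2$ other neighbors in $G' := G[V \setminus I]$. In Phase~2 I color the rest of $V \setminus I$ except the DCC-members of $\fS$, via a layered list-coloring organized by BFS distance to $W$ in $G'$. In Phase~3 I color each DCC locally in $O(k)$ rounds using its degree-choosability.

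For Phase~2, let $U_{\DCC}$ denote the union of the DCC-members of $\fS$, and define $L_j = \{v \in V \setminus I : \dist_{G'}(v, W) = j\}$, so by Property~4 every node of $V \setminus I$ lies in some $L_j$ with $j \leq d$. Treat $V_i := L_{d-i+1}$ for $i = 1, \dots, d$ and $V_{d+1} := L_0 \setminus U_{\DCC}$ as the successive layers of a layered graph on $G'[V \setminus I \setminus U_{\DCC}]$, and assign each node $v$ the list $\psi(v) = \{2, \dots, \Delta\}$ if $v \in N(I)$ and $\{1, \dots, \Delta\}$ otherwise. The hypothesis $|\psi(v)| \geq \updeg(v) + 1$ of layered list-coloring then holds by case analysis. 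For $v \in V_i$ with $i \leq d$, the first step of a $G'$-shortest path to $W$ produces a neighbor of $v$ in a strictly later layer (or in $U_{\DCC}$), so $\updeg(v) \leq \deg_{G'}(v) - 1 \leq \Delta - 1$; if in addition $v \in N(I)$ then $\deg_{G'}(v) \leq \Delta - 1$ gives $\updeg(v) \leq \Delta - 2$, matching the reduced list size $\Delta - 1$. For $v \in V_{d+1}$ the slack comes from $v$ itself: if $v \in T_I$ then $\deg_{G'}(v) \leq \Delta - 2$, and if $v$ is a singleton of $\fS$ then $\deg_G(v) \leq \Delta - 1$. Invoking the layered list-coloring subroutine thus colors $V \setminus I \setminus U_{\DCC}$ in $\Tlayer(n, \Delta - 1, d)$ rounds, absorbing the extra layer $V_{d+1}$ into constant factors (e.g., as one additional degree+1 list-coloring step).

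For Phase~3, Property~2 ensures that distinct members of $\fS$, hence distinct DCCs, are not adjacent, so each DCC $U$ can be treated independently. In $O(k)$ rounds every $v \in U$ collects the structure of $U$ (using weak diameter $\leq k$) together with the colors assigned in Phases~1--2 to its neighbors outside $U$. The residual list of $v$ has size at least $\Delta - [v \in N(I)] - |N_{G'}(v) \setminus U|$, which, using $\deg_G(v) \leq \Delta$ and $|N(v) \cap I| \geq [v \in N(I)]$, is at least $\deg_{G[U]}(v)$; degree-choosability (\cref{def:choosability}) then supplies a proper list-coloring of $U$ that each node computes locally from its gathered view.

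The main obstacle will be the list-size bookkeeping. Every time a node loses a color from its palette (because of an $I$-neighbor) or a residual degree slot (because of a neighbor in another layer or in $U_{\DCC}$), the drop must be matched elsewhere to preserve $|\psi(v)| \geq \updeg(v) + 1$ in Phase~2, or the analogous degree-choosability inequality in Phase~3. The four hypotheses of the lemma are tuned exactly so that every such trade-off goes through; once these inequalities are verified case by case, the three phases compose into a $\Delta$-coloring algorithm of the claimed $\Tlayer(n, \Delta - 1, d) + O(k)$-round complexity.
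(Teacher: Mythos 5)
Your proof is correct and follows essentially the same route as the paper: color $I$ with color $1$, color the remaining nodes by layers of $G'$-distance to $W$ as a layered degree+1-list-coloring with maximum up-degree $\Delta-1$, and finish the DCCs locally in $O(k)$ rounds via degree-choosability. The only (cosmetic) difference is that you fold the non-DCC part of layer $0$ ($T_I$ and the low-degree singletons) into the layered coloring as one extra layer, whereas the paper defers all of $\fS \sqcup T_I$ to the final $O(k)$-round local step.
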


The approach of \cref{lem:finish-coloring} is rather generic and is the backbone of several prior works on $\Delta$-coloring.
While not presented as such in their respective papers, the deterministic algorithms of \cite{GHKM_dc21} and \cite{ghaffari2021deterministic} can be reinterpreted as computing $\fS$ and $I$ such that $I = \emptyset$ and $k = O(\log_\Delta n)$ in both cases, and $d = O(\log^2_\Delta n)$ and $d = O(\log^2 n / \log \Delta)$, respectively. 
The approach of \cite{bourreau2025faster} diverges from \cref{lem:finish-coloring} in that its
deterministic algorithm for $\Delta$-coloring computes sets $\fS$ and $I$ such that $I=\emptyset$, $k = O(\log_\Delta n)$, $d = O(\log \Delta + \log_\Delta n \log^* n)$, but the subgraphs induced by $\fS$ are no longer necessarily independent. Instead, $\fS$ is chosen so that any partial $\Delta$-coloring not involving the nodes $\bigcup_{U \in S} U$ contained in $\fS$ can be extended to $\fS$ in $\Tlayer(n,\Delta-1,k)+O(k)$ rounds of \LOCAL.
Note that all the just mentioned deterministic algorithms do not make use of the set $I$.
A recent paper focusing on dense graphs uses multiple sets $I$~\cite{JM_podcba25}, and builds $\fS$ and $I$ with distance $d=O(1)$ and DCCs of size $k=O(1)$.
In contrast, randomized algorithms sample nodes throughout the graph to populate $I$, achieve a smaller parameter $k=O(\log_\Delta \log n)$, and smaller distances $d=O(\log^2 \log n / \log \Delta)$~\cite{GHKM_dc21} or $d=O(\log_\Delta \log n \log^* n)$~\cite{bourreau2025faster}.
Several works also make use of the \emph{shattering} technique to reduce the problem to smaller instances of size $\poly(\log n)$.
An algorithm from one of these works~\cite{fischer2023fast} computes multiple independent sets $I$, and essentially builds $\fS$ and $I$s with distance $d=O(1)$ and size of DCCs $k=O(1)$ when $\Delta$ is a large enough $\poly \log n$, and achieves a complexity of $O(\log^* n)$.

The proof of \cref{lem:finish-coloring} is not novel and rather simple. We present it later in the paper, in \cref{sec:finish-coloring}.

With \cref{lem:finish-coloring} as final step in our algorithm, computing a $\Delta$-coloring is only a matter of computing sets $\fS$ and $I$ as in its statement, with parameters $d$ and $k$ as small as possible.
Herein lies the main contribution of this paper. We show how to compute sets $\fS$ and $I$ with distance $d=O(1)$ and size of DCCs $k=O(1)$ in a runtime linear in that of MIS and HSO.

\begin{restatable}{lemma}{LemComputingDCCandTnodes}
\label{lem:computing-DCC-T-nodes}
    There is a \LOCAL algorithm for computing sets $\fS$ and $I$ as in \cref{lem:finish-coloring} with distance $d=O(1)$ and size of DCCs $k=O(1)$ in $O(\Tmis(n,\poly(\Delta)) + \Thso(n,\poly(\Delta)))$ rounds.
\end{restatable}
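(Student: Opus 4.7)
The plan is to split the graph into two types of regions and handle each with a different construction, ultimately combining them into sets $\fS$ and $I$ meeting all four conditions of \cref{lem:finish-coloring}.

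First I would handle the \emph{non-expanding} regions. Fix a constant $k$ large enough that $(\Delta-1)^{\lfloor k/2 \rfloor}$ exceeds a suitable $\poly(\Delta)$ threshold. By \cref{lem:DCCorexpand}, each node $v$ either has, within distance $k$, a DCC of radius at most $k$, or a node of degree $<\Delta$, or has $(\Delta-1)^{\lfloor k/2 \rfloor}$ nodes at distance exactly $k$. I would collect candidate ``witnesses'' (the DCCs and low-degree nodes) and choose a maximal independent set of them with respect to $G^{2k+1}$ to guarantee condition 2 of \cref{lem:finish-coloring}. Running MIS on this power graph (which has maximum degree $\poly(\Delta)$) costs $O(\Tmis(n,\poly(\Delta)))$ rounds and populates $\fS$. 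Nodes already within distance $O(1)$ of the chosen witnesses in $G$ are now covered by condition 4 via $\fS$; the remaining nodes are \emph{expanders} in the sense that their constant-radius ball contains no DCC and no low-degree node, so \cref{lem:DCCorexpand} gives them many neighbors at constant distance.

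On the expander part, I would follow the roadmap and build a cluster partition of constant weak diameter via a standard MIS-based clustering algorithm in $O(\Tmis(n,\poly(\Delta)))$ rounds, then locally designate each cluster as \emph{flex} or \emph{link} so that each flex cluster has the internal expansion needed to host a T-node candidate, and every link cluster has a neighboring flex cluster. From this partition I would construct a multihypergraph $H$ whose vertex set is the set of clusters, and whose hyperedges correspond to candidate T-node creations: each hyperedge $e$ is rooted at a flex cluster $F$, points to a specific internal node $t_F$, and also touches those link clusters whose shortest uncolored path to $t_F$ would go through the two neighbors of $t_F$ that $e$ plans to color with $1$. The key quantitative claim to verify is that, thanks to the expansion in \cref{lem:DCCorexpand}, the resulting minimum degree $\delta$ and maximum rank $r$ of $H$ satisfy $\delta / r \in \Omega(\poly(\Delta))$. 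Solving HSO on $H$ via \cref{thm:HSO} then takes $O(\Thso(n,\poly(\Delta)))$ rounds and yields an orientation in which every cluster owns at least one outgoing hyperedge: for a flex cluster this hyperedge dictates which T-node to create (and which two neighbors to add to $I$), and for a link cluster it certifies a nearby T-node whose construction does \emph{not} cut the short uncolored path from that link cluster to $t_F$.

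Finally I would assemble $I$ as the union of the $1$-colored neighbors selected by HSO. Independence of $I$ and disjointness from $\bigsqcup \fS$ can be enforced by a last round of MIS on the $O(1)$-hop conflict graph of the selected neighbors (again of degree $\poly(\Delta)$), costing one more $O(\Tmis(n,\poly(\Delta)))$ invocation; clusters whose T-node candidate is blocked by this conflict-resolution simply re-use their HSO certificate to attach themselves to a neighbor's T-node, which is still within constant distance. Conditions 1 and 2 are inherited from the MIS selection of witnesses, conditions 3 and the disjointness $I\cap\bigsqcup\fS=\emptyset$ from the final conflict resolution and the fact that the flex/link construction was restricted to the expander region, and condition 4 follows because every node is either within $O(1)$ of some $U\in\fS$ (non-expander region), inside a flex cluster within $O(1)$ of its own T-node, or inside a link cluster that by the HSO guarantee retains an uncolored path of length $O(1)$ to an adjacent T-node in $T_I$. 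The main obstacle I expect is the third paragraph's hypergraph design: one must couple ``flex cluster creates a T-node'' and ``link cluster keeps an uncolored path'' into a \emph{single} HSO instance whose rank stays small relative to the degree, so that \cref{thm:HSO} applies at ratio $\poly(\Delta)$; this is precisely where the absence of small DCCs in the expander region is used to bound rank and boost degree simultaneously.
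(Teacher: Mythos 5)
Your overall architecture matches the paper's: cover the non-expanding parts by an MIS over DCC/low-degree witnesses to form $\fS$, cluster the rest into flex/link clusters by MIS on power graphs, encode ``flex cluster creates a T-node'' versus ``link cluster keeps an uncolored path'' as a single cluster multihypergraph, and solve HSO on it. However, the step you yourself flag as ``the key quantitative claim to verify'' is precisely where essentially all of the paper's work lies, and you do not supply it. Concretely, the paper must show (i) that every flex cluster not touching $\Vdcc$ contains $\Omega(\poly(\Delta))$ \emph{independent} gatherers, each conflicting with at most $(\Delta-1)^3$ link clusters (\cref{claim:number-gatherers}, which needs the unique-BFS structure of \cref{lem:uniqueBFS} and a careful choice of gatherers near the cluster boundary to bound the rank), and (ii) that every link cluster not touching $\Vdcc$ has $\Omega(\poly(\Delta))$ edges to flex clusters (\cref{claim:expansion-Clink}, a delicate argument that escapes the flex clusters, uses DCC-freeness to show the boundary subtrees are almost pairwise disconnected, and finds a flex center in each). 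Without these two bounds the degree-to-rank ratio $\delta/r \in \Omega(\poly(\Delta))$ is unsubstantiated and \cref{thm:HSO} cannot be invoked. Relatedly, you assert the residual region consists of ``expanders,'' but nodes and clusters adjacent to the boundary of $\Vdcc$ have no expansion guarantee; the paper handles exactly these via rank-$1$ hyperedges that such clusters grab trivially (\cref{def:cluster-hypergraph}, \cref{lem:computing-DCC-T-nodes}), and without this escape hatch your hypergraph need not even admit an HSO.

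A second genuine problem is your post-hoc ``conflict resolution'' MIS to enforce independence of $I$ and disjointness from $\fS$. In the paper neither is repaired afterwards: they hold by construction, since each gatherer selects two \emph{non-adjacent} neighbors (\cref{def:gatherer}), flex clusters are pairwise far apart (their centers form an MIS of $(G')^{\bflex}$ with $\bflex > 2\aflex$), and $I$ lives inside flex clusters while $\fS$ lives inside DCC clusters, which are disjoint. Your repair step is not only unnecessary but unsound as described: if a flex cluster's T-node candidate is cancelled by the conflict MIS, the HSO guarantee for that cluster and for every link cluster whose certified uncolored path leads to that cluster's center/T-node is voided, and the claim that such clusters ``re-use their HSO certificate to attach themselves to a neighbor's T-node'' has no backing --- HSO only provides one outgoing hyperedge per cluster, and a neighboring flex cluster's T-node may be unreachable through uncolored nodes. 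To make the lemma's condition 4 (and condition 3) go through, the independence must be engineered into the gatherer/cluster construction as the paper does, not patched after the orientation is fixed.
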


Computing $\fS$ is relatively straightforward, finding $I$ is where the effort is. We only look for nodes to add to $I$ in parts of the graph without small DCCs and nodes of degree $< \Delta$. A property of these areas is that they exhibit local expansion, i.e., each node $v$ in them has a minimum of $\approx (\Delta-1)^{t/2}$ nodes at distance $t$ from itself, for $t$ below some constant. 
We partition these parts of the graph into small diameter clusters, and define a virtual multihypergraph from them.
This hypergraph is such that a hypergraph sinkless orientation on this hypergraph immediately yields sets $\fS$ and $I$ as in \cref{lem:finish-coloring,lem:computing-DCC-T-nodes}.
All that remains is computing a cluster partition with the right properties, so that the resulting hypergraph admits a hypergraph sinkless orientation, and that such an orientation can be computed efficiently.

\begin{restatable}{definition}{DefClusterPartition}
    \label{def:cluster-partition}
    Let $\gdcc$, $\gflex$, and $\glink$ be integers, and let the nodes of a graph $G=(V,E)$ be partitioned into three sets $\Vdcc$, $\Vflex$, and $\Vlink$ such that
    \begin{enumerate}
        \item $\Vdcc$ (resp.\ $\Vflex$, $\Vlink$) is partitioned into a set $\Cdcc$ (resp.\ $\Cflex$, $\Clink$) of disjoint clusters,
        \item each cluster $C \in \Cdcc$ has an induced subgraph $S_C$ that is either a DCC or a node of degree strictly less than $\Delta$, and is independent of the $S_{C'}$ of any other cluster $C'\in \Cdcc \setminus \set{C}$,
        \item each cluster $C \in \Cflex$ has a designated node $z_C$ called its \emph{center}.
        \item for each cluster $C \in \Cflex$ and node $v \in C$, $v$ has a path $P$ to a node $w\in N(C) \cap (\Vdcc \cup \Vlink)$ such that for each node $u \in P \cap C$ on that path, $\dist_C(u,z_C) > \dist_C(v,z_C)$,
        \item each cluster $C \in \Clink$ is adjacent to a non-link cluster ($N(C) \cap (\Vdcc \cup \Vflex) \neq \emptyset$),
        \item each cluster $C \in \Cdcc$ (resp.\ $C \in \Cflex$, $C \in \Clink$) is of strong diameter $\gdcc$ (resp.\ $\gflex$, $\glink$),
    \end{enumerate}
    We call $(\Vdcc,\Vflex,\Vlink)$ a $(\gdcc,\gflex,\glink)$-$\Delta$-coloring cluster partition.\end{restatable}
For brevity, we will refer to this construction simply as a ``cluster partition''.

The set $\fS$ is simply made of the internal flexible subgraphs $(S_C)_{C \in \Cdcc}$ contained in the DCC clusters $\Cdcc$. The independent set $I$ is created by the flex clusters, $\Cflex$. In each flex cluster, we identify a set of nodes that have the potential to select two unconnected neighbors that could join $I$, whose removal does not disconnect their common neighbor from the center of the cluster. We call such nodes \emph{gatherers}.

\begin{restatable}{definition}{DefGatherer}[Gatherers]
\label{def:gatherer}
    Let $C \in \Cflex$ be a cluster in a $\Delta$-coloring cluster partition, and consider $v \in V(C) \setminus z_C$. Let $\parent(v)$ be the last node on a shortest path from $z_C$ to $v$ in $C$, and $T_v = (N(v) \cap C) \setminus \set{\parent(v)}$. $v$ is a \emph{gatherer} for $C$ if there exists $w,w' \in T_v$ s.t.\ $ww' \not \in E$.
    
    Additionally, let $C' = C[V(C) \setminus T_v]$. Consider the set of nodes $D_v = \set{w \in V(C) : \dist_{C'}(w,z_C) = \infty}$ that are disconnected from the cluster's center $z_C$ when removing $T_v$ from the cluster $C$.
    We denote by $\conflict(v) \subseteq \Vlink$ the set of link clusters conflicting with $v$, defined by $\conflict(v) = \set{C \in \Clink : N(T_v \sqcup D_v) \cap C \neq \emptyset}$.
    Two gatherers $v,v' \in C$ are said to be independent iff $(T_v \sqcup D_v) \cap (T_{v'} \sqcup D_{v'}) = \emptyset$,  
\end{restatable}

Intuitively, each gatherer is a way that a flex cluster can take two of its nodes to add to $I$, so that coloring the two nodes allows to color most of the cluster (and adjacent link clusters) with only colors from $\set{1,\dots,\Delta}$.
However, each gatherer has its own drawbacks: a selected gatherer in a cluster $C \in \Cflex$ cuts some paths between the cluster's center node $z_C$ and some adjacent link clusters, and even between $z_C$ and some nodes from $C$. These disconnected paths imply that such nodes and adjacent link clusters must rely on a different short path than one going through the gatherer and the flex cluster's center to reach a node with two neighbors in $I$. Independence between gatherers is important to guarantee that when a gatherer is not selected, all the nodes and adjacent link clusters that would have been disconnected from selecting this gatherer actually have a short free path through that gatherer to the center of the cluster.

The competition between flex and link clusters for whether each gatherer selects two of its neighbors to add to $I$ is naturally interpretable as an instance of hypergraph sinkless orientation, where the nodes correspond to clusters, and each gatherer corresponds to a hyperedge.
The idea of using HSO as a subroutine to create some flexibility throughout the graph has appeared in two very recent works about simpler variants of the $\Delta$-coloring problem~\cite{JM_podcba25,JMS_arxiv25}.

\begin{restatable}{definition}{DefClusterHypergraph}
    \label{def:cluster-hypergraph}
    Consider a $\Delta$-coloring cluster partition $(\Vdcc,\Vflex,\Vlink)$ and let each $C \in \Cflex$ have a set of independent gatherers $\gath(C) \subseteq V(C)$.
    The bipartite cluster hypergraph of $(\Vdcc,\Vflex,\Vlink,(\gath(C)_{C \in \Cflex}))$ is the hypergraph $H=(\Cflex \times \Clink,E')$ containing the following hyperedges:
    \begin{enumerate}
        \item for each $C \in \Cflex \cup \Clink$ such that $C$ is touching a DCC-cluster ($N(C) \cap \Vdcc \neq \emptyset$), $H$ has a rank-$1$ hyperedge with just $C$ ($\set{C} \in E'$), 
        \item for each $C \in \Cflex$ and gatherer $v \in \gath(C)$ conflicting with $c = \card{\conflict(v)}$ link clusters, $H$ has a rank-$(c+1)$ hyperedge containing $C$ and $\conflict(v)$ ($(\set{C} \cup \conflict(v)) \in E'$), 
        \item for each $C \in \Clink$ and $C' \in \Cflex$ such that $C$ and $C'$ are adjacent ($N(C) \cap C' \neq \emptyset$) and no gatherer of $C'$ conflicts with $C$ ($\forall v \in \gath(C'), C \not \in \conflict(v)$), $H$ has a rank-$1$ hyperedge with just $C$ ($\set{C} \in E'$).
    \end{enumerate}    
\end{restatable}

Note that the cluster hypergraph is only ``bipartite'' in the sense that its nodes are partitioned into two known disjoint sets $\Cflex$ and $\Clink$, and not in the usual sense that no edge connects two nodes of the same set.

\begin{restatable}{lemma}{LemClusterHypergraphOrientation}
    \label{lem:cluster-hypergraph-hso}
    Consider a $(\gdcc,\gflex,\glink)$-$\Delta$-coloring cluster partition $(\Vdcc,\Vflex,\Vlink)$, sets of independent gatherers $(\gath(C))_{C \in \Cflex}$, and suppose we are given a hypergraph sinkless orientation of their bipartite cluster hypergraph $H$. Suppose that for each cluster $C \in \Vdcc$ its internal flexible subgraph $S_C$ is of weak diameter at most $k$.
    Then sets $\fS$ and $I$ with distance $d= O(\glink + \gflex +\gdcc)$ and DCC diameter $k$ as in \cref{lem:finish-coloring,lem:computing-DCC-T-nodes} can be computed in $O(\glink + \gflex + \gdcc)$.
\end{restatable}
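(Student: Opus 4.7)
My plan is to instantiate $\fS$ and $I$ directly from the data of the HSO. Set $\fS := \set{S_C : C \in \Cdcc}$, and build $I$ by activating gatherers: for each flex cluster $C \in \Cflex$ and each gatherer hyperedge $\set{C} \cup \conflict(v)$ that is outgoing at $C$ in the given HSO, I call $v$ an \emph{active} gatherer and add a pair of non-adjacent nodes $w, w' \in T_v$ (whose existence is guaranteed by the definition of gatherer) to $I$. Condition~1 of \cref{lem:finish-coloring} and the independence of the $S_C$'s (condition~2) are immediate from item~2 of \cref{def:cluster-partition}, and the disjointness in condition~3 follows because all nodes added to $I$ lie in $\Vflex$ while $\bigsqcup_{U \in \fS} U \subseteq \Vdcc$.

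The first nontrivial step is showing that $I$ is independent. Within a single flex cluster, the chosen pair is non-adjacent by the gatherer definition, and the disjointness $(T_v \sqcup D_v) \cap (T_{v'} \sqcup D_{v'}) = \emptyset$ enforced by independence of gatherers prevents adjacencies between the $I$-pairs of two active gatherers in the same cluster. For pairs in two distinct flex clusters, I will appeal to the structural properties of the cluster partition (to be established when the partition itself is constructed) ensuring that two selected neighbors of gatherers in different clusters cannot be adjacent.

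The main obstacle is verifying the proximity condition~4, which I address by case analysis on the cluster type of a node $u \in V \setminus I$. If $u$ is in a DCC cluster, it reaches $S_C \subseteq W$ in $\leq \gdcc$ hops. If $u$ is in a flex cluster $C$, the HSO supplies $C$ with at least one outgoing hyperedge: if it is the DCC-touching rank-$1$, item~4 of \cref{def:cluster-partition} yields an outward escape path of length $\leq \gflex$ to a neighbor in $\Vdcc$, and $\leq 1 + \gdcc$ further hops reach some $S_{C'}$; if it is a gatherer $v^*$, then $v^* \in T_I$ and I reach it along the cluster's BFS tree from $z_C$, which by the definition of $T_{v^*}$ (excluding $\parent(v^*)$) can be chosen to avoid $T_{v^*}$ and is therefore unblocked in $G' = G[V \setminus I]$. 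The most delicate subcase is when $u$ lies in a link cluster $C$ whose outgoing hyperedge is a gatherer $v^* \in \gath(C')$ with $C \in \conflict(v^*)$: here $v^*$ is \emph{not} activated (its hyperedge is outgoing at $C$, not at $C'$), so the path from $C$ into $C'$ through the $T_{v^*} \sqcup D_{v^*}$ region up to $z_{C'}$ remains in $G'$; then $C'$'s own outgoing hyperedge in the HSO gives a further unblocked path of length $O(\gflex + \gdcc)$ either to a $T$-node at a \emph{different} active gatherer of $C'$ (whose $I$-pair lies in a region disjoint from the current path by independence of gatherers) or to some $S_{C''}$. Summing distances across the cases yields $d = O(\glink + \gflex + \gdcc)$.

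For the runtime, once the HSO is in hand, all remaining decisions (which incident hyperedges are outgoing, which non-adjacent pair $w, w' \in T_v$ to select for each active gatherer, and disseminating these choices to every node of the cluster) are purely local inside clusters of strong diameter $O(\glink + \gflex + \gdcc)$, so everything is computable in $O(\glink + \gflex + \gdcc)$ rounds of LOCAL.
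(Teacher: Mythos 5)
Your construction of $\fS$ and $I$ and the overall shape of the argument (case analysis by cluster type, driven by which hyperedge is outgoing in the HSO) is the same as the paper's, but your proximity argument has a concrete hole in the flex-cluster case. You route a node $u$ of a flex cluster $C$ to the activated gatherer $v^*$ ``along the cluster's BFS tree from $z_C$'', checking only that the final segment avoids $T_{v^*}$. This ignores the nodes that the activation of gatherers cuts off from the center: if $u$ lies in $D_{v'}$ of an activated gatherer $v'$ of $C$ (or its paths to $z_C$ are otherwise blocked by the nodes placed in $I$), no path through $z_C$ exists in $G[V\setminus I]$ at all. The paper handles exactly this subcase separately: such a node uses property~4 of \cref{def:cluster-partition} to escape outward to a neighboring link cluster (a path moving away from $z_C$, of length at most the cluster diameter), and from that link cluster reaches $\fS$ or $T_I$ within $O(\gdcc+\gflex+\glink)$ additional hops by the link-cluster argument. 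Without this subcase your bound on $d$ does not cover all of $V\setminus I$. Relatedly, in the rank-$1$ flex subcase you assert that the escape path of property~4 ends at a neighbor in $\Vdcc$; the definition only guarantees a neighbor in $\Vdcc\cup\Vlink$, so even there you must fall back on the link-cluster case to conclude.

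A second, smaller issue is the independence of $I$. Disjointness of the regions $T_v\sqcup D_v$ and $T_{v'}\sqcup D_{v'}$ prevents the selected pairs from sharing nodes, but not from being adjacent, so your intra-cluster claim is not justified as written; and for pairs in distinct flex clusters you defer to properties of the partition that are not among the lemma's hypotheses. The paper's own proof of this lemma is likewise silent on independence of $I$ (in the paper it is guaranteed by the concrete construction, where distinct flex clusters are far apart and the selected neighbors sit deep inside their cluster), so you are in good company, but a self-contained proof of the lemma as stated should either verify this non-adjacency or make it an explicit assumption. The runtime paragraph is fine and matches the paper.
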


For a hypergraph to even admit a sinkless orientation, and also so that it can be computed fast in the \LOCAL model, we need to be mindful of the minimum degree and rank of the hypergraph stemming from our cluster partition.
We show that we can ensure a minimum degree and a rank both of order $\poly(\Delta)$, with a $\poly(\Delta)$-ratio between the two.

\begin{restatable}{lemma}{LemComputingClusterPartition}
    \label{lem:cluster-partition-computing}
    Computing a $(\gdcc,\gflex,\glink)$-$\Delta$-coloring cluster partition $(\Vdcc,\Vflex,\Vlink)$ and gatherers $(\gath(C))_{C \in \Cflex}$ whose corresponding bipartite cluster multihypergraph $H=(\Cflex \times \Clink,E')$ and parameters $\gdcc$, $\gflex$, and $\glink$ satisfying 
    \begin{enumerate}
        \item $\gdcc = O(1)$, $\gflex=O(1)$, and $\glink = O(1)$,
        \item each hyperedge in $H$ has rank at most $1+(\Delta-1)^3$,\item each node in $H$ either has an incident hyperedge of degree $1$, or degree at least $(\Delta-1)^4$,
    \end{enumerate}    
    can be done in $O(\Tmis(n,\poly(\Delta)))$.
\end{restatable}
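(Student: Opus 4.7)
The plan is to produce the cluster partition and the gatherers in three stages, each reducing to an MIS computation on a constant power of $G$ (or an induced subgraph of it) with polynomial-in-$\Delta$ maximum degree. This keeps each stage at cost $O(\Tmis(n, \poly(\Delta)))$ and all cluster radii at absolute constants. The key enabling fact is \cref{lem:DCCorexpand}: once we step outside a constant-radius neighborhood of every DCC and every low-degree node, every ball of radius $k$ contains at least $(\Delta-1)^{\lfloor k/2 \rfloor}$ nodes, giving the expansion needed both for gatherer abundance (degree lower bound on $H$) and for controlling gatherer footprints (rank upper bound on $H$).

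Phase~1 identifies every node whose constant-radius ball contains a DCC or a node of degree $<\Delta$, picks a canonical such subgraph at each such node, and computes an MIS on the conflict graph (two subgraphs conflict if they lie within a constant distance) to obtain $\Cdcc$ and $\Vdcc$ with $\gdcc = O(1)$. Phase~2, restricted to $V \setminus \Vdcc$ (which is now uniformly expanding), computes a second MIS on a constant power of the induced graph to pick well-separated centers $Z$, assigns each nearby node to the nearest center (ties broken by identifier) to form flex clusters $\Cflex$ of radius $\gflex = O(1)$, and organizes the leftover nodes via one more MIS-based boundary clustering into link clusters $\Clink$ of radius $\glink = O(1)$. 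The outward-path requirement (item~4 of \cref{def:cluster-partition}) follows from the BFS structure in the expanding region: every node at BFS depth $i < \gflex$ from $z_C$ has a neighbor at depth $i+1$, and depth-$\gflex$ nodes have at least one neighbor outside $C$. Phase~3, inside each flex cluster $C$, enumerates gatherer candidates (non-center nodes $v$ with two non-adjacent neighbors in $T_v$) and runs a final MIS on their conflict graph, where $v$ and $v'$ conflict iff $(T_v \sqcup D_v) \cap (T_{v'} \sqcup D_{v'}) \neq \emptyset$. Expansion ensures enough candidates that the MIS returns $|\gath(C)| \geq (\Delta-1)^4$ for $\gflex$ chosen a sufficiently large constant.

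Checking the hypergraph properties, each gatherer hyperedge has rank $1 + |\conflict(v)|$; since the footprint $T_v \sqcup D_v$ lies inside the radius-$\gflex$ cluster, its neighborhood meets at most $(\Delta-1)^3$ link clusters once constants are chosen to keep the footprint small and link clusters occupy a nontrivial share of nearby nodes. Each flex cluster has degree $\geq (\Delta-1)^4$ via its gatherers. Each link cluster that touches a DCC cluster receives a rank-$1$ hyperedge by rule~1 of \cref{def:cluster-hypergraph}; each remaining link cluster $C$ touches some flex cluster, and either an adjacent flex cluster has no gatherer conflicting with $C$ (yielding a rank-$1$ hyperedge by rule~3) or the total count of gatherers of adjacent flex clusters conflicting with $C$ reaches $(\Delta-1)^4$. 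The dominating cost is a constant number of MIS instances on graphs of $\poly(\Delta)$ degree, simulable with $O(1)$ overhead on $G$, giving a total runtime of $O(\Tmis(n, \poly(\Delta)))$.

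The main obstacle is the simultaneous tightness of the rank upper bound $1 + (\Delta-1)^3$ and the degree lower bound $(\Delta-1)^4$, which together pin the degree-to-rank ratio of $H$ to $\Theta(\Delta)$. Larger flex clusters yield more gatherer candidates (helping the degree) but can inflate $|D_v|$ and hence $|\conflict(v)|$ (hurting the rank). The key to reconciling these is that gatherer independence forces the footprints $T_v \sqcup D_v$ to be pairwise disjoint inside a flex cluster, so a large number of independent gatherers can coexist only when each individual footprint is small; combined with a careful analysis of the BFS structure around $z_C$ (so that $|D_v|$ is small unless $v$ sits very close to $z_C$, which we can exclude by restricting candidates to an appropriate annulus) and with the choice of constants $\gdcc, \gflex, \glink$, this yields the desired bounds. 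A secondary concern is ensuring the link-cluster dichotomy (either a rank-$1$ hyperedge or degree $\geq (\Delta-1)^4$), which we enforce by taking link clusters small enough that each is either anchored at a DCC boundary or is thoroughly surrounded by gatherer footprints from all adjacent flex clusters.
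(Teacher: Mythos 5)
Your overall architecture matches the paper's (DCC clusters via an MIS on a conflict graph, flex clusters as constant-radius balls around an MIS of a constant power of $G[V\setminus \Vdcc]$, link clusters for the leftover nodes, gatherers restricted to a border annulus), but the link-cluster degree bound (item~3 for clusters in $\Clink$) is a genuine gap. You assert that a link cluster $C$ either has an adjacent flex cluster with no conflicting gatherer, or has at least $(\Delta-1)^4$ conflicting gatherers in total, but nothing in your construction forces this dichotomy: a link cluster wedged between $O(1)$ flex clusters, each of which happens to have exactly one gatherer conflicting with $C$, gets neither a rank-$1$ hyperedge nor large degree. Your proposed remedy --- ``taking link clusters small enough'' --- goes in the wrong direction: for a single-node link cluster the degree is at most $\Delta$, since each neighbor lies in at most one gatherer footprint. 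The paper does the opposite. Link clusters are built with radius $\blink$ \emph{larger} than the flex separation $\bflex$, so that every link cluster not touching $\Vdcc$ contains a node $w$ at distance at least $\lfloor\bflex/2\rfloor$ from all flex centers; around $w$, DCC-freeness gives a unique BFS (\cref{lem:uniqueBFS}), at least $(\Delta-1)^{\lfloor\alink/2\rfloor}$ nodes at distance $\alink$ (\cref{claim:no-dcc-expand}), and an ``at most one cross edge between two subtrees, else a small DCC'' argument showing that these branches yield $\Omega((\Delta-1)^{\lfloor\alink/2\rfloor})$ essentially disjoint connections from $C$ to flex clusters (\cref{claim:expansion-Clink}). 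Combining this with the fact that a single gatherer's border footprint can be responsible for at most $(\Delta-1)^3$ of these connections is what produces the $(\Delta-1)^4$ bound; this chain of arguments is the technical heart of the lemma and is absent from your proposal.

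Two secondary points. First, your claim that $|D_v|$ is small \emph{unless $v$ is very close to $z_C$} is backwards: removing $T_v$ disconnects essentially the whole BFS cone below $v$, which is small only when $v$ is within $O(1)$ of the cluster border. This is exactly why the paper takes gatherers only at depth $\aflex-2$ or $\aflex-1$, so that the footprint has at most $(\Delta-1)^2$ border descendants and hence at most $(\Delta-1)^3$ conflicting link clusters; the rank bound must be argued from this outer annulus, not from ``choosing constants to keep the footprint small''. Second, concluding $|\gath(C)|\geq(\Delta-1)^4$ from an MIS of the gatherer-conflict graph requires, in addition to many candidates, an upper bound on how many candidates each candidate conflicts with; the paper sidesteps this by selecting gatherers explicitly among the depth-$(\aflex-2)$ nodes of the unique BFS tree, losing only a $1/(\Delta-1)$ fraction (\cref{claim:number-gatherers}). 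These two issues are repairable, but as written the proposal is missing the essential argument for the link-cluster side of the degree bound.
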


To compute this cluster partition, we first add to $\Vdcc$ all the nodes within some constant distance $\adcc$ from a node of degree less than $\Delta$ or a DCC of constant weak diameter $\leq \adcc$. This ensures that for any remaining node $v\in V \setminus \Vdcc$, any ball of constant radius $r \leq \adcc$ centered on $v$ either intersects $\Vdcc$, or contains $\Delta^{\Omega(r)}$ nodes.

By computing an MIS $\Sflex$ over the power graph $(G')^{\bflex}$ of the subgraph $G' = G[V \setminus \Vdcc]$ induced by the remaining nodes, and taking as flex clusters the balls of radius $\aflex < \bflex$ centered on the MIS, we obtain flex clusters that each either contain at least $\Delta^{\Omega(\aflex)}$ nodes, or touch a DCC cluster. For flex clusters that do not touch a DCC cluster, the $\poly(\Delta)$ number of nodes and absence of DCCs makes it possible to find $\poly(\Delta)$ independent gatherers in each of them. Selecting the gatherers close to the border of a flex cluster ensures that each gatherer can only cut a small $\poly(\Delta)$ number of paths between its flex cluster's center and adjacent link clusters.

Finally, nodes from $V \setminus (\Vdcc \sqcup \Vflex)$ are partitioned into link clusters. We do so by computing an MIS $\Slink$ over the power graph $(G'')^{\blink}$ where $G'' = G[V \setminus (\Vdcc \sqcup \Vflex)]$, and having one cluster per node in the MIS, where each node $v \in \Vlink = V \setminus (\Vdcc \sqcup \Vflex)$ joins the cluster of the MIS node whose BFS reaches $v$ first as a BFS is performed from each MIS node. A sufficient gap $\alink$ between $\bflex$ and $\aflex$ (leaving some distance between flex clusters), and a sufficiently high distance $\blink$ between nodes in $\Slink$, ensures that each link cluster is adjacent to a high enough $\poly(\Delta)$ number of flex clusters, or at least one DCC cluster.

Put together, computing the cluster partition and sets of independent gatherers, computing a hypergraph sinkless orientation on the corresponding cluster hypergraph, constructing sets $\fS$ and $I$ from that, and finishing the coloring by coloring nodes in decreasing order of distance to $\fS$ and $I$ before coloring $\fS$, results in the following algorithm for $\Delta$-coloring (\cref{alg:main-algorithm}).

\begin{algorithm}[htb!]
\caption{Algorithm for $\Delta$-coloring}
\label{alg:main-algorithm}
\begin{algorithmic}[1]
\State \BuildClusterPartition \Comment{(\cref{sec:build-clusters})}
\Statex Consider the virtual cluster hypergraph
\State \VirtualHSO \Comment{(\cref{sec:hso-on-cluster-hypergraph})}
\State \LayerColoring \Comment{(\cref{sec:finish-coloring})}
\end{algorithmic}
\end{algorithm}

The construction of the cluster partition is explained in \cref{sec:build-clusters}. This part of the algorithm is where the most effort is.
We explain the cluster hypergraph and its parameters as stemming from the cluster partition in \cref{sec:hypergraph-properties}. In \cref{sec:hso-on-cluster-hypergraph}, we explain how computing a hypergraph sinkless orientation on this hypergraph allows us to compute sets $\fS$ and $I$ as we need for our $\Delta$-coloring algorithm. Finally, we explain in \cref{sec:finish-coloring} how to make use of the computed sets $\fS$ and $I$ and obtain our full algorithm for $\Delta$-coloring.

\subsection{Constructing the Cluster Graph}
\label{sec:build-clusters}

In this subsection, we show how to construct the clusters $\Cdcc, \Cflex, \Clink$, the way they are defined in \cref{def:cluster-partition}. 
\begin{algorithm}[htb!]
\caption{\BuildClusterPartition}
\label{alg:build-cluster-partition}
\begin{algorithmic}[1]
\State \BuildCdcc \Comment{(\cref{sec:dcc-clusters})}
\State \BuildCflex \Comment{(\cref{sec:flex-clusters})}
\State \BuildClink  \Comment{(\cref{sec:link-clusters})}
\State \FindGatherers \Comment{(\cref{sec:gatherers})}
\end{algorithmic}
\end{algorithm}

\subsubsection{Construction of DCC Clusters} \label{sec:dcc-clusters}

The first set of clusters built by the algorithm is $\Cdcc$. Each of its clusters $C \in \Cdcc$ is guaranteed to contain a flexible subgraph $S_C$, that is either a small diameter DCC or a node of degree less than $\Delta$. Conversely, all small diameter DCCs and nodes of degree less than $\Delta$ are contained in the union of all DCC clusters.

For the construction of this first set of clusters, each node gathers the complete information about its distance-$\adcc$ neighborhood. Any node of degree $\Delta$ that is part of at least one DCC of diameter less than $\adcc$ containing only degree-$\Delta$ nodes selects one such DCC and broadcasts this choice to its entire distance-$\adcc$ neighborhood.
We break symmetry between the selected DCCs, together with the nodes of degree less than $\Delta$, so as to obtain a set $\Wdcc$ of independent DCCs and nodes of degree less than $\Delta$. 
This corresponds to computing an MIS on a virtual graph $\Gdcc$, whose vertices are the selected DCCs and nodes of degree less than $\Delta$, and edges represent adjacency.
We then grow clusters from this set $\Wdcc$, so that the grown clusters cover all nodes of degree less than $\Delta$ and small diameter DCCs.

\begin{algorithm}[htb!]
\caption{\BuildCdcc}
\label{alg:build-Cdcc}
\begin{algorithmic}[1]
\Statex \textbf{Input:} Positive integer $\adcc$\State Each node of degree $\Delta$ learns its distance-$\adcc$ neighborhood.
\State If part of one or more DCCs of weak diameter at most $\adcc$ and containing only nodes of degree $\Delta$, a node of degree $\Delta$ selects a single one arbitrarily.\State Let $\Sdcc$ be the set of selected DCCs and of nodes of degree at most $\Delta - 1$.
\State Let $\Gdcc$ be the virtual graph with nodes $\Sdcc$, and edges between touching subgraphs.
\State Compute an MIS over $\Gdcc$, let $\Wdcc \subseteq \Sdcc$ be its members.
\State Initialize the set of clusters $\Cdcc \gets \Wdcc$. For each cluster $C \in \Wdcc$, its flexible subgraph $S_C$ is $C$ itself at this stage.
\State Grow the clusters by performing a $(\adcc+1)$-depth BFS from each cluster $C \in \Cdcc$, breaking symmetry between the different BFS by depth, then ID.
\end{algorithmic}
\end{algorithm}

\begin{restatable}{lemma}{LemBuildCdcc}
\label{lem:build-Cdcc-result}
    \Cref{alg:build-Cdcc} computes a set of clusters $\Cdcc$ in $O(\adcc\cdot\Tmis(n,\Delta^{2\adcc+2})$ rounds of \LOCAL such that
    \begin{enumerate}
        \item Each cluster $C\in \Cdcc$ contains a subgraph $S_C$ that is either a node of degree $< \Delta$ or a DCC,
        \item For any two distinct clusters $C,C' \in \Cdcc$, $N(S_C) \cap S_{C'} = \emptyset$,
        \item For any integer $r\geq 0$ and node $v$ s.t.\ $\dist(v,\Vdcc) > r$ where $\Vdcc = \bigsqcup_{C \in \Cdcc} V(C)$, the distance-$r$ neighborhood of $v$ contains no node of degree $<\Delta$ or DCC of weak diameter $\adcc$ or less.
    \end{enumerate}
\end{restatable}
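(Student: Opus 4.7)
The plan is to verify the three properties in the order listed and then bound the runtime of \Cref{alg:build-Cdcc}.

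\textbf{Properties 1 and 2.} For property 1, each cluster $C \in \Cdcc$ originates from an element of $\Wdcc \subseteq \Sdcc$, which by construction is either a DCC of weak diameter at most $\adcc$ or a single node of degree strictly less than $\Delta$. The subsequent BFS growth only adds nodes to $C$ and leaves $S_C$ untouched, so $S_C$ retains the required property. For property 2, the edges of $\Gdcc$ connect any two elements of $\Sdcc$ that touch in $G$ (share a node or are joined by an edge). Since $\Wdcc$ is an MIS in $\Gdcc$, distinct members are non-adjacent there, so for any distinct $C, C' \in \Cdcc$ the subgraphs $S_C$ and $S_{C'}$ are vertex-disjoint and unconnected by an edge of $G$, which is exactly $N(S_C) \cap S_{C'} = \emptyset$.

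\textbf{Property 3.} This is the main content of the lemma and is where I expect the most care to be needed. I would prove it by contrapositive: assuming that either a node $u \in B(v,r)$ has $\deg(u) < \Delta$ or a DCC $D$ of weak diameter at most $\adcc$ with $V(D) \subseteq B(v,r)$ exists, I would derive $\dist(v,\Vdcc) \le r$. In the first case $u$ is itself a singleton member of $\Sdcc$. In the second case, if $V(D)$ contains any node of degree $<\Delta$ we reduce to the first case; otherwise every $x \in V(D)$ is a degree-$\Delta$ node that participates in a DCC of the right type and therefore selects some $D_x \in \Sdcc$ containing $x$ of weak diameter $\le \adcc$. The subtle point is that $D$ itself need not be in $\Sdcc$, but it suffices that some $D_x$ whose vertex set contains a node of $V(D)$ is. In either case we obtain $S_0 \in \Sdcc$ containing a node $w$ within distance $r$ of $v$. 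By maximality of the MIS, either $S_0 \in \Wdcc$ (so $w \in \Vdcc$ already), or $S_0$ is $\Gdcc$-adjacent to some $S \in \Wdcc$, meaning a node of $V(S)$ lies within distance $1$ of $V(S_0)$ in $G$. The weak-diameter bound on $V(S_0)$ then gives $\dist_G(w, V(S)) \le \adcc + 1$, so $w$ is reached by the depth-$(\adcc+1)$ BFS rooted at $S$ and is therefore claimed by some cluster (regardless of tie-breaking). Hence $w \in \Vdcc$ and $\dist(v,\Vdcc) \le r$, contradicting the assumption.

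\textbf{Runtime.} Computing distance-$\adcc$ neighborhoods takes $O(\adcc)$ rounds, after which each node locally determines its membership in $\Sdcc$ and its adjacencies in $\Gdcc$. To bound the maximum degree of $\Gdcc$, I would argue that each $S \in \Sdcc$ lies in a ball of diameter $\adcc$, giving $|V(S)| \le \Delta^{\adcc}$; its closed $1$-neighborhood in $G$ has size at most $\Delta^{\adcc+1}$; and each node of $G$ belongs to at most $\Delta^{\adcc}$ members of $\Sdcc$, since any selected DCC containing a node $x$ is chosen by one of the at most $\Delta^{\adcc}$ degree-$\Delta$ nodes within distance $\adcc$ of $x$. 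Multiplying these estimates yields maximum degree at most $\Delta^{2\adcc+2}$ in $\Gdcc$. Since one round of communication on $\Gdcc$ can be simulated in $O(\adcc)$ rounds of $G$, the MIS step takes $O(\adcc \cdot \Tmis(n, \Delta^{2\adcc+2}))$ rounds, which dominates the $O(\adcc)$-round BFS growth phase and matches the claimed complexity.
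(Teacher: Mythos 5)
Your proposal is correct and follows essentially the same route as the paper's proof: properties 1 and 2 from the independence of $\Wdcc$ in $\Gdcc$, property 3 by showing every low-degree node and every small-weak-diameter DCC has a node selected into $\Sdcc$ and hence lies within distance $\adcc+1$ of $\Wdcc$ (so inside $\Vdcc$), and the runtime via a $\Delta^{2\adcc+2}$ degree bound on $\Gdcc$ with $O(\adcc)$-round simulation. The only differences are cosmetic (contrapositive phrasing, a slightly different but equally valid counting of $\Gdcc$-neighbors).
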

\begin{proof}
    Let us bound the degree of the virtual graph $\Gdcc$. Consider a node of this graph, i.e., a member $S\in \Sdcc$.
    $S$ is either a single node $v \in V$ of degree $< \Delta$ or a DCC of weak diameter $\leq \adcc$ that was selected by one or more of its constituent nodes. The bound on the weak diameter implies that $S$ contains at most $1+\Delta^{\adcc}$ nodes.
    Every other member $S'\in \Sdcc$ connected to $S$ in $\Gdcc$ intersects with $S$ or its neighborhood, i.e., $S' \cap (S \cap N(S)) \neq \emptyset$. There are no more than $\Delta+\Delta^{\adcc+1}$ neighbors of $S$, and as a result, at most this many nodes of degree $\Delta-1$ connected to $S$ in $\Gdcc$. For each DCC $S' \in \Sdcc$ intersecting $S$ or its neighborhood, this DCC $S'$ was selected by one of its nodes. As all DCCs have weak diameter at most $\adcc$, any node that selected a DCC $S'$ intersecting $S$ or its neighborhood is at distance at most $2\adcc +1$ from any node in $S$. As there are at most $1+\Delta^{2\adcc+1}$ nodes in the distance-$2\adcc+1$ neighborhood of any node in $G$, and adding the different bounds together, we get that $\Gdcc$ has a maximum degree of $\Delta^{2\adcc+2}$. As a round of communication on $\Gdcc$ can be simulated in $O(\adcc)$ rounds of communication on $G$, computing an MIS of $\Gdcc$ only takes $O(\adcc \cdot \Tmis(n,\Delta^{2\adcc+2}))$ rounds of communication on $G$. Growing the clusters with a $(\adcc+1)$-depth BFS only takes $O(\adcc)$ additional rounds, giving the claimed round complexity. The first two properties in the list follow from the fact that each cluster is grown from a member $S \in \Wdcc$, where $\Wdcc \subset \Sdcc$ in an independent set of $\Gdcc$.

    Any node $v$ of degree $<\Delta$ that is not part of $\Wdcc$ is necessarily at distance $0$ or $1$ from an element of $\Wdcc$, as $\set{v} \in \Wdcc$. Any DCC $S$ of weak diameter $\adcc$ or less is at distance $\adcc+1$ or less from a DCC in $\Wdcc$: if $S \in \Sdcc$, $\dist(S,\Wdcc)\leq 1$ follows from the fact that $\Wdcc$ is an MIS of $\Gdcc$ of vertex set $\Sdcc$ with edges between touching DCCs; if $S\not \in \Sdcc$, then each of its constituent nodes chose a different DCC of weak diameter $\leq \adcc$ that they are part of to join $\Sdcc$, so $\dist(S,\Wdcc) \leq \dist(S,\Sdcc) + \adcc + \dist(\Sdcc,\Wdcc) \leq \adcc+1$. As a result, every node of degree $< \Delta$ and DCC of weak diameter $\leq \adcc$ is contained in the distance-$\adcc+1$ neighborhood of $\Wdcc$, i.e., in $\Vdcc = \bigsqcup_{C \in \Cdcc} V(C)$. A node at distance $>r$ from $\Vdcc$ has a distance-$r$ neighborhood that does not intersect $\Vdcc$, and as such, is free of degree $< \Delta$ nodes and DCCs of weak diameter $< \adcc$.
\end{proof}

In particular, the absence of small DCCs around each node in $V \setminus \Vdcc$ implies some local expansion around each node by \cref{lem:DCCorexpand}.

\begin{restatable}{claim}{LemNoDccExpand}
    \label{claim:no-dcc-expand}
    Let $r < \adcc/2$ be an integer and $v \in V \setminus \Vdcc$ be a node outside of the $\Cdcc$ clusters. Consider the distance-$r$ ball around $v$, $B(v,r) = \set{u \in V : \dist(u,v) \leq r}$, and the distance-$r$ sphere centered at $v$, $S(v,r) = \set{u \in V : \dist(u,v) = r}$. Either $B(v,r) \cap \Vdcc \neq \emptyset$, or $\card{S(v,r)} \geq (\Delta-1)^{\floor{r/2}}$.
\end{restatable}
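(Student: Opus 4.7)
The plan is to derive the claim as a direct corollary of \cref{lem:DCCorexpand} once property~3 of \cref{lem:build-Cdcc-result} has been used to strip away the undesirable alternatives. Concretely, I will first observe that it suffices to treat the case $B(v,r)\cap\Vdcc=\emptyset$, since the other case is exactly the first disjunct of the claim. This assumption is equivalent to $\dist(v,\Vdcc)>r$, so by property~3 of \cref{lem:build-Cdcc-result} the distance-$r$ neighborhood of $v$ contains neither a node of degree strictly less than $\Delta$ nor a DCC of weak diameter at most $\adcc$.

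Next, I will invoke \cref{lem:DCCorexpand} with $k=r$, which gives three alternatives: a node of degree $<\Delta$ within distance $r$ of $v$, a DCC of radius at most $r$ within distance $r$ of $v$, or $|S(v,r)|\geq (\Delta-1)^{\floor{r/2}}$. The first is ruled out immediately by the previous step. For the second, a DCC of radius at most $r$ sitting inside the distance-$r$ neighborhood of $v$ has weak diameter at most $2r$; since the hypothesis $r<\adcc/2$ gives $2r<\adcc$, such a DCC would qualify as a DCC of weak diameter at most $\adcc$, and is therefore also excluded by the previous step. The only remaining alternative is the desired bound $|S(v,r)|\geq (\Delta-1)^{\floor{r/2}}$.

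The only point that requires any care is reconciling the two different ways the two lemmas measure the size of a DCC: \cref{lem:DCCorexpand} bounds its radius while \cref{lem:build-Cdcc-result} bounds its weak diameter. This is handled by the elementary inequality $\text{diameter}\leq 2\cdot\text{radius}$ together with the hypothesis $r<\adcc/2$, which together imply $2r<\adcc$. I do not expect any technical obstacle beyond this bookkeeping; the claim really is a one-line combination of the two previously established lemmas once the parameter regime is spelled out.
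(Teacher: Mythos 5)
Your proposal is correct and follows essentially the same route as the paper: assume $B(v,r)\cap\Vdcc=\emptyset$, use property~3 of \cref{lem:build-Cdcc-result} to rule out nearby low-degree nodes and small DCCs, and then conclude via \cref{lem:DCCorexpand}. Your extra bookkeeping converting the radius bound of \cref{lem:DCCorexpand} into a weak-diameter bound via $2r<\adcc$ is exactly the (implicit) reason the claim assumes $r<\adcc/2$, so it is a welcome clarification rather than a deviation.
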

\begin{proof}
    Follows immediately from \cref{lem:build-Cdcc-result,lem:DCCorexpand}: if $B(v,r) \cap \Vdcc = \emptyset$, the distance-$r$ neighborhood of $v$ is free of DCCs of weak diameter $\adcc$ or less as well as of nodes of degree less than $\Delta$, which gives the expansion.
\end{proof}

\subsubsection{Construction of Flex Clusters} \label{sec:flex-clusters}

The second set of clusters built by the algorithm is $\Cflex$. For this set of clusters, we use the fact that nodes close to a small DCC or a node of degree less than $\Delta$ were all added to $\Cdcc$, and the resulting \cref{claim:no-dcc-expand} that any node outside $\Vdcc$ is either close to $\Vdcc$, or has some local expansion, i.e., it has exponential-in-$k$ many nodes at distance $k$ for $k$ below some constant.

\begin{algorithm}[htb!]
\caption{\BuildCflex}
\label{alg:build-Cflex}
\begin{algorithmic}[1]
\Statex \textbf{Input:} Positive integers $\aflex$, $\bflex$, and $\alink$ such that $\bflex \geq 2(\aflex+\alink+1)$.\State Let $G' = G[V \setminus \Vdcc]$, compute an MIS $\Sflex$ on $(G')^{\bflex}$.
\State For each $v \in \Sflex$, let $C_v = \set{u \in V \setminus \Vdcc: \dist_{G'}(u,v) \leq \aflex}$.
\State Set $\Cflex = \set{C_v : v \in \Sflex}$, and for each $C_v \in \Cflex$, let its center $z_{C_v}$ be $v \in \Sflex$.
\end{algorithmic}
\end{algorithm}

\begin{claim}
\label{claim:expansion-Cflex}
    For each cluster $C \in \Cflex$, either $N(C) \cap \Vdcc \neq \emptyset$, or $C$ contains at least $(\Delta-1)^{\floor{k/2}}$ nodes at distance $k$ from its center node $z_C$ for every $k \leq \aflex$.
\end{claim}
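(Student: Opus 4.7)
The plan is to reduce the statement to an application of \cref{claim:no-dcc-expand} at the center $z_C$. That claim already packages the local-expansion consequence of having no small DCC nearby, so the real work is translating ``no $\Vdcc$-node on the boundary of $C$'' into ``no $\Vdcc$-node anywhere in the distance-$\aflex$ ball around $z_C$ in $G$'', so that distances in $G$, in $G'$, and inside $C$ all coincide.

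First I would assume $N(C) \cap \Vdcc = \emptyset$ and show by contradiction that $B_G(z_C,\aflex) \cap \Vdcc = \emptyset$. Suppose some $u \in \Vdcc$ satisfies $\dist_G(z_C,u) \leq \aflex$, and walk along a shortest $z_C$–$u$ path in $G$ until the first $\Vdcc$-node $w$ is reached; let $w'$ be its predecessor. The prefix from $z_C$ to $w'$ stays in $V \setminus \Vdcc$, so it is also a $G'$-path, giving $\dist_{G'}(z_C,w') \leq \aflex - 1$ and hence $w' \in C$. But then $w \in \Vdcc$ is a neighbor of $w' \in C$, so $w \in N(C) \cap \Vdcc$, a contradiction.

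With $B_G(z_C,\aflex) \cap \Vdcc = \emptyset$ in hand, any shortest $G$-path from $z_C$ to a node of that ball stays entirely outside $\Vdcc$, so $G$-distance, $G'$-distance and distance inside $C$ all agree on $B_G(z_C,\aflex)$, and $S_G(z_C,k) \subseteq C$ for every $k \leq \aflex$. I then apply \cref{claim:no-dcc-expand} to the node $z_C \in V \setminus \Vdcc$ at each radius $k \leq \aflex$ (valid under the standing parameter assumption that $\aflex$ is small enough relative to $\adcc$): since the ball contains no $\Vdcc$-node, the first alternative of the claim is ruled out and we obtain $|S_G(z_C,k)| \geq (\Delta-1)^{\lfloor k/2 \rfloor}$ nodes at distance $k$ from $z_C$, all of which lie in $C$, as required.

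The one place I would be careful is the boundary-to-interior step in the second paragraph: it is tempting to read the hypothesis $N(C) \cap \Vdcc = \emptyset$ as an immediate statement about the ball, but because $C$ is defined via $G'$-distances while \cref{claim:no-dcc-expand} talks about $G$-distances, one really needs the shortest-path walk argument to bridge the two. After that, the rest is routine plug-in of the already-established expansion lemma.
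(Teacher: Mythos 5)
Your proof is correct and follows essentially the same route as the paper's: split on whether $\Vdcc$ comes within distance $\aflex$ of $z_C$ (you phrase it contrapositively via $N(C)\cap\Vdcc=\emptyset$), then invoke \cref{claim:no-dcc-expand} at $z_C$ for each radius $k\leq\aflex<\adcc/2$. The paper's two-line proof leaves implicit exactly the bridging you spell out (the first-$\Vdcc$-node walk showing the ball is $\Vdcc$-free and that $G$-, $G'$-, and in-cluster distances coincide so the spheres lie in $C$), so your write-up is just a more detailed version of the same argument.
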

\begin{proof}
    Consider a cluster $C \in \Cflex$ and its center $z_C$. If $\dist(z_C,\Vdcc) \leq \aflex$, $C$ touches $\Vdcc$. Otherwise, \cref{claim:no-dcc-expand} gives the bound on the number of nodes at distance $k$ from the center.
\end{proof}

\subsubsection{Construction of Link Clusters} \label{sec:link-clusters}

Finally, we build the third and last set of clusters $\Clink$. All the nodes that are not yet part of a cluster (nodes from $V \setminus (\Vdcc \sqcup \Vflex)$) join a cluster from this last set.

\begin{algorithm}[htb!]
\caption{Algorithm \BuildClink}
\label{alg:build-Clink}
\begin{algorithmic}[1]
\Statex \textbf{Input:} Positive integer $\blink$ such that $\blink \geq \alink + (3/2)\bflex$\State Let $G'' = G[V \setminus (\Vdcc \sqcup \Vflex)]$, compute an MIS $\Slink$ on $(G'')^{\blink}$.
\State Perform a depth-$\blink$ BFS in $G''$ from each $v \in \Slink$.
\State For each $v \in \Slink$, let $C_v$ be the set of nodes reached by $v$'s BFS first, breaking ties by IDs after depth.
\State Set $\Clink = \set{C_v : v \in \Slink}$.
\end{algorithmic}
\end{algorithm}

\begin{claim}
\label{claim:expansion-Clink}
    For each cluster $C \in \Clink$, either $N(C) \cap \Vdcc \neq \emptyset$, or there are at least $\frac 1 2 (\Delta-1)^{\floor{\alink/2}}$ edges between $C$ and $\Cflex$.
\end{claim}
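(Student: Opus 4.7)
The plan is to mimic the proof of \cref{claim:expansion-Cflex}. Let $v \in \Slink$ be the center of $C$ and set $r = \alink$. The parameter constraints $\bflex \geq 2(\aflex + \alink + 1)$ and $\blink \geq \alink + (3/2)\bflex$ combine to give $\blink > 2\alink$, so by the MIS property of $\Slink$ in $(G'')^{\blink}$ every vertex at $G''$-distance at most $\alink - 1$ from $v$ belongs to $C$: any competing center $v' \in \Slink \setminus \{v\}$ satisfies $\dist_{G''}(v, v') > \blink$, so $\dist_{G''}(v', w) > \blink - (\alink - 1) > \alink - 1$ for every such $w$, and $v$ wins the BFS race. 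We also set parameters so that $\alink < \adcc/2$, enabling \cref{claim:no-dcc-expand} to apply at $v$ with radius $r$. Assume the first alternative fails, so $N(C) \cap \Vdcc = \emptyset$.

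Under this assumption, for any $u \in S_G(v, r)$, a shortest $G$-path from $v$ to $u$ avoids $\Vdcc$, hence lies in $\Vflex \cup \Vlink$, and either (a) lies entirely in $\Vlink$---making it a $G''$-path, so $u \in C$---or (b) contains a first $\Vlink$-to-$\Vflex$ edge $e_u = (w, w')$ whose prefix from $v$ to $w$ is in $\Vlink$ and of length at most $r - 1$, forcing $w \in C$ by the $\blink$-bound, so $e_u$ is a $C$-to-$\Cflex$ edge. An analogous argument applied to any shortest $G$-path from $v$ to a $\Vdcc$-vertex at distance at most $r$ rules out alternative (a): it would produce a $C$-vertex adjacent to $\Vdcc$, contradicting $N(C) \cap \Vdcc = \emptyset$. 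Applying \cref{claim:no-dcc-expand} then gives $|S_G(v, r)| \geq (\Delta - 1)^{\floor{r/2}}$ whenever $B_G(v, r) \cap \Vdcc = \emptyset$; in the complementary sub-case, the same claim applied to the largest $d \leq r$ with $B_G(v, d) \cap \Vdcc = \emptyset$ gives $|S_G(v, d)| \geq (\Delta - 1)^{\floor{d/2}}$, supplemented by the $C$-to-$\Cflex$ edges forced on shortest paths to the $\Vdcc$-vertices at distance $d+1$.

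The principal obstacle will be turning this qualitative classification into the quantitative bound $\tfrac{1}{2}(\Delta - 1)^{\floor{\alink/2}}$ on distinct $C$-to-$\Cflex$ edges, because many sphere vertices may share the same first-transition edge. The plan is a level-by-level accounting on the $G$-BFS DAG from $v$: at each depth $i \leq r$, maintain a count $a_i$ of $\Vlink$-vertices reached via $\Vlink$-only prefixes and a cumulative count $b_i$ of distinct first-transition edges accumulated by depth $i$. Since every such $\Vlink$-vertex lies in $C$ and has all $\Delta$ of its neighbors in $\Vflex \cup \Vlink$ (the assumption excludes $\Vdcc$), an inductive step establishes that at each depth either $a_{i+1}$ grows by almost the factor $(\Delta - 1)$ over $a_i$---propagating the expansion from \cref{claim:no-dcc-expand} into $G''$---or enough of those neighbors divert into $\Vflex$ to increase $b_{i+1}$. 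Iterating this dichotomy up to depth $r$ and reconciling with the sphere bound $(\Delta - 1)^{\floor{r/2}}$ yields the target edge count, with the factor $\tfrac{1}{2}$ absorbing the slack from tie-breaking and from first-transition edges being charged by at most a controlled number of sphere vertices.
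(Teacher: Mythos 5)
There is a genuine gap: the heart of the claim is the quantitative bound of $\tfrac12(\Delta-1)^{\floor{\alink/2}}$ \emph{distinct} edges between $C$ and $\Vflex$, and your proposal explicitly leaves exactly this step as a plan (``the principal obstacle\dots the plan is a level-by-level accounting''). The dichotomy you postulate --- either the link-only frontier $a_i$ grows by a factor close to $\Delta-1$, or many new transition edges are charged to $b_{i+1}$ --- is never proved, and it is not a routine consequence of \cref{claim:no-dcc-expand}: that claim only guarantees expansion of the full sphere $S_G(v,r)$, which can be carried entirely by vertices whose geodesics have already crossed into $\Vflex$ through very few transition edges (a single flex gateway can account for up to $(\Delta-1)^{r-1}$ sphere vertices). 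Ruling out heavy funneling and horizontal collapse of the link-only frontier requires a structural argument from DCC-freeness (with the clique/odd-cycle exceptions of \cref{lem:dcc-unless-clique-or-odd} handled), which is precisely the work the paper does and your sketch omits. A second concrete failure is your ``complementary sub-case'': if $B_G(v,\alink)$ meets $\Vdcc$ only via paths passing through $\Vflex$, then $N(C)\cap\Vdcc=\emptyset$ is consistent with the largest DCC-free radius $d$ being as small as $1$, and $(\Delta-1)^{\floor{d/2}}$ plus ``the edges forced on shortest paths to the $\Vdcc$-vertices'' (possibly a single edge) is nowhere near the required bound, so that branch of your argument simply does not deliver the claim.

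The paper's proof resolves both issues by not working at the cluster center $v$ at all. It first walks inside $G''$ to a node $w_v$ that is either adjacent to $\Vdcc$ (first alternative of the claim) or at distance at least $\floor{\bflex/2} > \aflex+\alink$ from every flex center; in the latter case the whole radius-$\alink$ ball around $w_v$ avoids $\Vflex$ (and, unless the first alternative holds, $\Vdcc$), so the expansion of \cref{claim:no-dcc-expand} lives entirely in link territory and the sphere $S(w_v,\alink)$ lies inside $C$ because $\blink \geq \alink+\floor{\bflex/2}$. Distinctness of the produced edges is then obtained by showing, via a DCC argument, that the BFS subtrees $U_u$ hanging below the $\card{S(w_v,\alink)}$ sphere nodes admit at most one crossing edge each (hence at least $\card{S(w_v,\alink)}/2$ connected components), and each component reaches $\Vflex$ because $\Sflex$ is an MIS of $(G')^{\bflex}$ and the components extend to depth $\alink+\bflex$. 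To repair your write-up you would essentially have to import these three ingredients (the relocation to $w_v$, the branch-disjointness bound, and the MIS-based guarantee that each branch actually meets $\Vflex$), at which point it becomes the paper's proof rather than an alternative one.
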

\begin{proof}
    Consider a node $v \in \Slink \subseteq V \setminus (\Vdcc \sqcup \Vflex)$, of corresponding link cluster $C_v \in \Clink$, and consider $\dist_{G'}(v,\Sflex)$ its distance to the nearest center of a $\Cflex$ cluster in $G' = G[V \setminus \Vdcc]$. 
    $\Sflex$ is an MIS of $(G')^{\bflex}$, therefore, there exists $z_C \in \Sflex$ such that $\dist_{G'}(v,z_C) \leq \bflex$, and for all $z_{C'} \in \Sflex \setminus \set{z_C}$, $\dist_{G'}(v,z_{C'}) \geq \bflex + 1 - \dist_{G'}(v,z_C)$.
    We show that either $v$ has a path of length at most $\floor{\bflex/2}$ in $G'$ to a node $v'$ such that either $v' \in N(\Vdcc)$ (making $v$'s cluster $C_v$ adjacent to a DCC cluster) or $\min_{C \in \Cflex} \dist_{G'}(v',z_C) \geq \floor{\bflex/2} > \aflex + \alink$ (making the distance-$\alink$ ball around $v'$ in $G'$ fully in $V \setminus (\Vdcc \sqcup \Vflex)$).

    Take any node $w \in V \setminus (\Vdcc \sqcup \Vflex)$ such that $\min_{C \in \Cflex} \dist_{G'}(w,z_C) < \floor{\bflex/2}$. Let $\Cstar \in \Cflex$ be the cluster such that $\dist_{G'}(w,z_{\Cstar}) = \min_{C \in \Cflex} \dist_{G'}(w,z_C)$.
    As $w$ is not in $\Vdcc$, it has degree $\Delta \geq 3$ in the original graph $G$. Assume that $w$ has no neighbor in $\Vdcc$, i.e., $N(w) \subseteq V \setminus \Vdcc$. Suppose that $w$ has no neighbor $w' \in N(w)$ such that $\min_{C \in \Cflex} \dist_{G'}(w',z_C) > \min_{C \in \Cflex} \dist_{G'}(w,z_C)$. Then for each $w' \in N(w)$, there exists a cluster $C' \in \Cflex$ such that $\dist_{G'}(w',z_{C'}) \leq \dist_{G'}(w,z_{\Cstar}) < \floor{\bflex / 2}$. Any such cluster $C'$ must be $\Cstar$, from the minimum distance of $\bflex +1$ between any two distinct $z_C, z_{C'} \in \Sflex$. Hence, $w$ has $\Delta \geq 3$ paths in $G'$ of length at most $\bflex/2$ with distinct first nodes to the same node $z_C$, implying the existence of a DCC of diameter $\leq \bflex$ in $G'$, which is impossible as all DCCs of weak diameter up to $\adcc \geq \bflex$ are in $\Vdcc$. Hence, $w$ either has a neighbor in $\Vdcc$, or it has a neighbor that is one hop further away from $\Sflex$ in $G'$. Note that in this second case, the neighbor of $w$ is in $V \setminus (\Vdcc \sqcup \Vflex)$ just like $w$, i.e., an edge connecting the two nodes is also present in $G'' = G[V \setminus (\Vdcc \sqcup \Vflex)]$.

    By repeatedly applying this observation, any node $v \in \Slink$ either has a path in $G''$ to a node $w_v$ at distance $\geq \floor{\bflex / 2}$ from $\Sflex$ in $G'$, or to a node $w_v$ adjacent to $\Vdcc$ in $G$. In the second case ($w_v \in N(\Vdcc$)), $v$'s cluster $C_v \in \Clink$ is adjacent to a DCC cluster. In the first case, note that the distance-$\alink$ neighborhood of $w_v$ in $G'$ does not intersect $\Vflex$. By \cref{claim:no-dcc-expand}, either $w_v$ has a path of length $\leq \alink$ to a node in $\Vdcc$, or it has $(\Delta-1)^{\floor{\alink/2}}$ nodes at distance $\alink$. In the first case, $v$'s cluster is adjacent to a DCC cluster. In the second case, note that the distance-$\alink$ sphere centered at $w_v$ is fully contained in $v$'s cluster, since $\blink \geq \alink + \floor{\bflex/2}$.

    Consider a BFS tree rooted at $w_v$ of depth $\alink + \bflex$ in $G'$, and for each node $u \in S(w_v,\alink)$ (i.e., at distance $\alink$ from $w_v$), consider the subtree rooted at $u$ in the BFS tree. Note that the BFS tree is unique, as guaranteed by \cref{lem:uniqueBFS}. Let $U_u$ be the nodes of that tree except $u$, i.e., $U_u$ contains the nodes in $V \setminus \Vdcc$ such that a shortest path between them and $w_v$ in $G'$ goes through $u \in S(w_v,\alink)$ and is of length $\in [\alink+1, \alink + \bflex]$.

    We show that the graphs induced by the sets $(U_u)_{u \in S(w_v,\alink)}$ are mostly disjoint connected components. Consider a node $u \in S(w_v,\alink)$. We show that there can be at most one edge between $U_u$ and $\bigsqcup_{u' \in S(w_v,\alink)\setminus \set{u}} U_{u'}$. Indeed, suppose there are two. Call $p, p' \in U_u$ and $q,q'\in \bigsqcup_{u' \in S(w_v,\alink)\setminus \set{u}} U_{u'}$ the nodes such that $pq,p'q' \in E$. Consider the common ancestor $r$ of highest depth to $q$,$q'$ and $u$. Taking the union of nodes on shortest paths between $r$ and $q$, $r$ and $q'$, $r$ and $p$, and $r$ and $p'$, gives a $2$-connected induced subgraph that is neither a clique, nor a cycle, and thus a DCC (\cref{lem:dcc-unless-clique-or-odd}) of weak diameter at most $2(\alink + \bflex) \leq \adcc$. As all such DCCs are contained in $\Vdcc$, each set $U_u$ has at most one edge to another set $U_{u'}$. Thus the subgraph induced by $\bigsqcup_{u \in S(w_v,\alink)} U_{u}$ in $G'$ has at least $\card{S(w_v,\alink)}/2$ connected components.

    Finally, we show that each of the $\card{S(w_v,\alink)}/2$ connected components induced by $\bigsqcup_{u \in S(w_v,\alink)} U_{u}$ in $G'$ contains at least one $z_C$, $C \in \Cflex$. Indeed, if not, consider one such connected component $G'[U_u]$ or $G'[U_u \sqcup U_{u'}]$ and a node $t$ in it at distance $\bflex + \alink$ from $w_v$. $t$ is a node from $G'$, and since $\Sflex$ is an MIS of $(G')^{\bflex}$, there must be a node $z_C \in \Sflex$ at distance at most $\bflex$ from $t$. As the entire distance-$\bflex$ neighborhood of $t$ is contained in its connected component $G'[U_u]$ or $G'[U_u \sqcup U_{u'}]$, there exists some $z_C$ in that connected component. Therefore, for each connected component, there is a path between $w_v$ and a node $t' \in V \setminus \Vdcc$ in $G''$ (i.e., taking only nodes from $V \setminus (\Vdcc \sqcup \Vflex)$) where $t' \in N(\Vflex)$, and thus, there are at least $\card{S(w_v,\alink)}/2$ edges between $C_v \in \Clink$ and $\Vflex$.
\end{proof}

\subsubsection{Finding Independent Gatherers}
\label{sec:gatherers}

We show in this section how to construct the gatherers from \cref{def:gatherer}. The gatherers are nodes from $\Cflex$ that display some structural properties.
To measure how many gatherers can be contained in each cluster of $\Cflex$, we will use the local expansion property described in \cref{claim:expansion-Cflex} and the fact that every node in this cluster has degree $\Delta$.

\begin{figure}[ht]
    \centering
\includegraphics[width=0.3\linewidth,page=1]{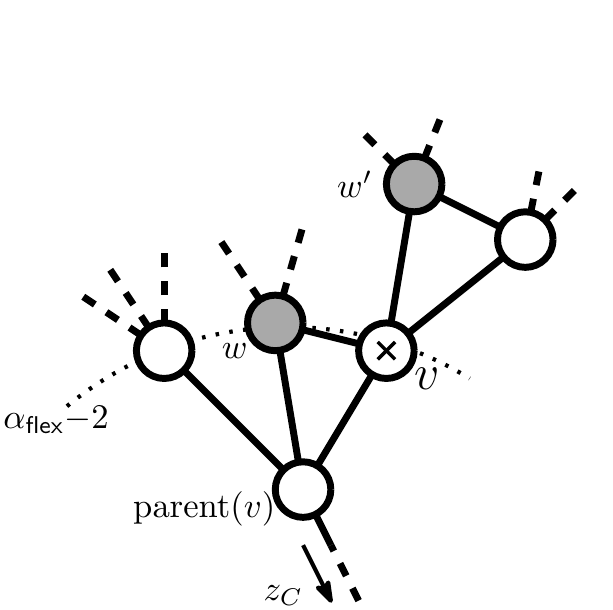}
\hspace{0.04\linewidth}\includegraphics[width=0.3\linewidth,page=2]{gatherers_v2}
\hspace{0.04\linewidth}\includegraphics[width=0.3\linewidth,page=3]{gatherers_v2}
\caption{The $3$ cases for the selection of independent gatherers: a node $v$ of depth $\aflex-2$ with at least one neighbor of equal depth (left); a node $v$ of depth $\aflex-2$ with two or more unconnected children (middle); a child $v'$ of a node $v$ of depth $\aflex-2$, where $v'$ has a neighbor of equal depth as $v$ had in the first case (right). In each case, gray nodes show two unconnected neighbors of the selected gatherer, marked by a cross.} 
\label{fig:gatherer-selection}
\end{figure}

\begin{claim}
\label{claim:number-gatherers}
For each cluster $C \in \Cflex$, either $C$ is adjacent to $\Vdcc$ ($N(C) \cap \Vdcc \neq \emptyset$), or $C$ contains at least $((\Delta-1)^{\floor{\aflex/2}-2})$ independent gatherers each with $(\Delta-1)^3$ or less conflicts. 
\end{claim}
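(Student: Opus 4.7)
The argument will exploit the BFS tree rooted at the center $z_C$ of $C$: assuming $N(C)\cap\Vdcc=\emptyset$ (otherwise we are in the first case of the claim), Claim~\ref{claim:expansion-Cflex} gives us $|S(z_C,k)|\geq (\Delta-1)^{\lfloor k/2\rfloor}$ for every $k\leq \aflex$, and since the $\aflex$-neighborhood of $z_C$ is free of small DCCs (inherited from $C$ lying entirely in $V\setminus\Vdcc$ with $\adcc$ chosen sufficiently larger than $\aflex$), Lemma~\ref{lem:uniqueBFS} guarantees that the BFS tree is unique, so that every node $v$ has a well-defined parent $\parent(v)$ and no other neighbor at depth $\dist_C(v,z_C)-1$. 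I will look for gatherers in layer $d:=\aflex-2$, which by expansion contains at least $(\Delta-1)^{\lfloor \aflex/2\rfloor-1}$ nodes; after sparsifying to independence, losing one factor of $(\Delta-1)$, this will yield the target count $(\Delta-1)^{\lfloor \aflex/2\rfloor - 2}$.

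For each node $v \in S(z_C,d)$, I will produce a gatherer by a case analysis matching the three pictures in Figure~\ref{fig:gatherer-selection}. Because $v$ has degree exactly $\Delta$ (no node of degree ${<}\Delta$ can be close to $z_C$, otherwise $C$ would touch $\Vdcc$), and exactly one parent, the $\Delta-1$ remaining neighbors split between the same-depth layer $d$ and the children layer $d+1$. (i) If $v$ has a same-depth neighbor $u$ and a child $w$, then BFS uniqueness forbids $w$ from having two depth-$d$ neighbors, hence $u\not\sim w$, and $\{u,w\}\subseteq T_v$ witnesses that $v$ is a gatherer. (ii) If $v$ has no same-depth neighbor but two non-adjacent children, $v$ is a gatherer directly. (iii) In the remaining case $v$'s children form a clique and, together with $v$, induce a $K_\Delta$. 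Pick any such child $v'$; its $\Delta-1$ non-parent neighbors consist of $\Delta-2$ siblings and exactly one extra node $x$. Adding $x$ to the sibling clique cannot create a $K_{\Delta+1}$ (which is not $\Delta$-colorable), and by Lemma~\ref{lem:dcc-unless-clique-or-odd} any other configuration where $x$ is adjacent to all siblings would force a small 2-connected non-clique-non-odd-cycle subgraph, i.e.\ a DCC of weak diameter $O(1)\leq \adcc$, contradicting the no-DCC property guaranteed around $z_C$. Hence some sibling $s$ satisfies $s\not\sim x$, and $\{s,x\}\subseteq T_{v'}$ makes $v'$ a gatherer.

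Once a candidate gatherer is produced per $v\in S(z_C,d)$, I will bound $|T_g\sqcup D_g|$ by $O(\poly(\Delta))$: $|T_g|\leq \Delta-1$ by definition, and $D_g$ consists of the nodes whose unique BFS path to $z_C$ is intercepted by $T_g$, so $D_g$ lies in the constant-depth BFS subtree below $T_g$, again of size $O(\poly(\Delta))$. This immediately yields $|\conflict(g)|\leq |N(T_g\sqcup D_g)|\leq \Delta\cdot|T_g\sqcup D_g|\leq (\Delta-1)^3$ after tightening constants. For independence, two candidate zones $T_g\sqcup D_g$ and $T_{g'}\sqcup D_{g'}$ can intersect only when the producing BFS-layer nodes are within $O(1)$ hops of each other, so the intersection graph on candidates has degree $O(\poly(\Delta))$; a greedy selection (equivalently, an MIS in this virtual graph) keeps a $1/\poly(\Delta)$ fraction, and with $\aflex$ chosen large enough this leaves $(\Delta-1)^{\lfloor\aflex/2\rfloor-2}$ independent gatherers, completing the claim.

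\textbf{Main obstacle.} The most delicate step is the third sub-case: ruling out that the extra neighbor $x$ is adjacent to every sibling. This needs a careful combination of Lemma~\ref{lem:dcc-unless-clique-or-odd} (to convert a near-clique into a DCC) and the DCC-freeness inherited from $N(C)\cap\Vdcc=\emptyset$; simply counting degrees is not enough. The other moving part is the bookkeeping for zone sizes and the sparsification constant, which must be pinned down exactly to land at the advertised exponent $\lfloor\aflex/2\rfloor-2$ and conflict bound $(\Delta-1)^3$, but this is routine once the case analysis is in hand.
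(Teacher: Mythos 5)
Your overall route is the same as the paper's: work inside the unique BFS tree rooted at $z_C$ (\cref{lem:uniqueBFS}), harvest gatherers from the layer at depth $\aflex-2$ via the same three-way case analysis (same-depth neighbor; two non-adjacent children; children forming a clique, in which case a child becomes the gatherer), and count them using the expansion of \cref{claim:no-dcc-expand}/\cref{claim:expansion-Cflex}. Your treatment of the clique-children case is in fact a correct, more detailed version of what the paper only asserts: the $K_{\Delta+1}$-versus-DCC dichotomy for the extra neighbor $x$ of the child $v'$ is exactly the right way to justify that some sibling is non-adjacent to $x$ (and $x$ does lie in $C$, since a neighbor in $\Vdcc$ would put us in the first alternative of the claim).

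However, the two quantitative conclusions of the claim are not actually reached as you argue them. (a) Independence: you keep a $1/\poly(\Delta)$ fraction of the $\geq(\Delta-1)^{\floor{\aflex/2}-1}$ candidates and say ``choose $\aflex$ large enough''; but the target $(\Delta-1)^{\floor{\aflex/2}-2}$ also grows with $\aflex$, so any loss beyond a single factor $\Delta-1$ can never be recovered by enlarging $\aflex$ (it would only prove a weaker bound $(\Delta-1)^{\floor{\aflex/2}-c}$). What is needed, and what the paper argues, is that a chosen candidate's zone $T\sqcup D$ can only intersect the zones produced by the (at most $\Delta-1$) \emph{same-depth neighbors} of the producing depth-$(\aflex-2)$ node — BFS-uniqueness makes the child sets of distinct depth-$(\aflex-2)$ nodes disjoint, so zones otherwise live in disjoint subtrees — giving an overlap degree of at most $\Delta-2$ and a loss of exactly one factor $\Delta-1$; ``within $O(1)$ hops'' is too lossy. (b) Conflicts: the chain $\card{\conflict(g)}\leq \Delta\cdot\card{T_g\sqcup D_g}\leq(\Delta-1)^3$ is false as written, since $\card{T_g\sqcup D_g}$ is itself of order $(\Delta-1)^3$ (up to $\Delta-1$ nodes of $T_g$, each with up to $(\Delta-1)^2$ descendants down to depth $\aflex$), so no ``tightening of constants'' lands at $(\Delta-1)^3$. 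The missing observation, used implicitly by the paper, is that only the depth-$\aflex$ (border) nodes of $T_g\sqcup D_g$ can touch a link cluster at all: any neighbor of a node at depth $\leq\aflex-1$ lies within distance $\aflex$ of $z_C$ and, absent adjacency to $\Vdcc$, therefore inside $C$. Hence the conflict count is (number of border descendants)$\cdot(\Delta-1)$, which with the paper's bound of $(\Delta-1)^2$ border descendants for gatherers at depth $\geq\aflex-2$ yields $(\Delta-1)^3$. With these two fixes your argument coincides with the paper's proof.
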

\begin{proof}
Let us consider a cluster $C \in \Cflex$, $z_C$ its center, and let us assume that $N(C) \cap \Vdcc = \emptyset$. Recall that a node $v$ can be chosen as a potential \emph{gatherer} for $C$ if there exists $w,w' \in T_v$ s.t.\ $ww' \not \in E$, with $T_v = (N(v) \cap C) \setminus \set{\parent(v)}$. Two gatherers $v,v'$ are independent iff $(T_v \sqcup D_v) \cap (T_{v'} \sqcup D_{v'}) = \emptyset$, where $D_v$ is the subset of nodes of the cluster $C$ disconnected (in $C$) from $z_C$ by the removal of $T_v$.

We now want to measure how many independent gatherers can be selected in the cluster $C$.
Consider a BFS tree rooted at $z_C$, of depth $\aflex$, thus spanning the entire cluster. Note that this BFS tree is unique (\cref{lem:uniqueBFS}).
For each node $v\in V(C)$, let $d_v$ be its depth in the BFS (with $d_{z_C} = 0$). A node $v \neq z_C$ in $C$ has one parent node of depth $d_v - 1$, and $\Delta-1$ other neighbors of depth either $d_v$ or $d_v+1$ which form the set $T_v$. Any node $v$ such that $\dist(v,z_C) < \bflex$ and $\exists w \in N(v), d_w = d_v$ can be taken as gatherer, as $v$ has at least one child $w'$ in the BFS ($w'\in N(v)$, $d_{w'} = d_v+1$), which is not connected to $w$. Note that while such a $v$ is a gatherer, it is not independent from its up to $\Delta - 1$ neighbors of equal depth, but it is independent from all other gatherers of equal depth. Therefore, if we consider all the nodes at a given distance $r < \aflex$ from $z_C$, and focus on those with at least one neighbor at the same depth, at least a $1/(\Delta-1)$-fraction of them can be selected as independent gatherers.

When $v$ has no neighbor of equal depth, it can still be taken as gatherer if its $\Delta-1$ neighbors in $T_v$, which are all children of $v$, do not form a clique.
If they do, then any one of them can be taken as a gatherer, as long as they have depth $< \aflex$, i.e., $d_v \leq \aflex - 2$.

Considering all nodes $S(z_C,\aflex -2)$ at depth $\aflex -2$ in the BFS tree rooted at $z_C$, we can therefore  select at least $\card{S(z_C,\aflex -2)} / (\Delta-1)$ nodes of depth either $\aflex-2$ or $\aflex-1$ as independent gatherers. As the selected nodes are at depth $\aflex-2$ or higher, they have at most $(\Delta-1)^2$ descendants in the BFS tree at the ``border'' of the flex cluster, and as a result, conflict with less than $(\Delta-1)^3$ link clusters. \Cref{claim:no-dcc-expand} gives the necessary lower bound on $\card{S(z_C,\aflex -2)}$ yielding the claimed number of gatherers.
\end{proof}

\subsection{Properties of the Cluster Partition and its Cluster Hypergraph}
\label{sec:hypergraph-properties}

Recall the definition of the cluster hypergraph of a given cluster partition.

\DefClusterHypergraph*

In this section, we prove \cref{lem:cluster-partition-computing}, putting together the intermediate claims of previous sections. More precisely, we show that with the right choice of constants $\adcc$, $\aflex$, $\bflex$, $\alink$, and $\blink$, \BuildClusterPartition\ (\cref{alg:build-cluster-partition}) computes in $O(\Tmis(n,\poly(\Delta)))$ rounds a cluster partition with the properties claimed in \cref{lem:cluster-partition-computing}.

\LemComputingClusterPartition*

\begin{proof}
    The rank of the hypergraph is bounded by the maximum number of conflicts that each gatherer has, plus one. By \cref{claim:number-gatherers}, each gatherer has at most $(\Delta-1)^3$ conflicts, resulting in the $1+(\Delta-1)^3$ bound on the rank.

    Flex and link clusters that are adjacent to a DCC cluster get a rank $1$ hyperedge, and as such, do not need a lower bound on their degree. Similarly, a link cluster $C$ adjacent to a flex cluster whose gatherers are such that none of them conflicts with $C$ gets an hyperedge of rank $1$. It remains to lower bound the degree of flex and links clusters that do not get a rank $1$ hyperedge.
    
    The degree of a flex cluster $C$ without a rank $1$ hyperedge corresponds to the size $\card{\gath(C)}$ of its set of independent gatherers. This number is directly dictated by the choice of $\aflex$, by \cref{claim:number-gatherers}. We will fix $\aflex$ with this constraint in mind.
    
    The degree of a link cluster $C$ without a rank $1$ hyperedge is at least a
    $1/(\Delta-1)^3$-fraction of the number of edges it has to adjacent flex clusters, as this is the maximum number of edges that can be incident on the set of nodes in a gatherer's subtree at the border of the flex cluster. The number of edges between $C$ and neighboring flex clusters is driven by the choice of $\alink$ by \cref{claim:expansion-Clink}. We will fix $\alink$ with this constraint in mind.
    
    Finally, there are constraints connecting the various constants to one another. Notably, $\adcc$ needs to be large enough compared to the other constants for many arguments to go through, as we frequently make use of the absence of a DCC within some constant distance of node.
    Let us now set $\adcc,\aflex,\bflex,\alink,\blink$ as we need, mindful of the following constraints:
    \begin{itemize}
        \item $\aflex \geq 12$ and $\adcc \geq \aflex$ to find a minimum of $(\Delta-1)^{4}$ independent gatherers in flex clusters (\cref{claim:expansion-Cflex,claim:number-gatherers}),
        \item $\alink \geq 16$, $\bflex \geq 2(\aflex + \alink + 1)$, $\blink \geq 2\alink + (3/2)\bflex$, and $\adcc \geq 2(\alink+\bflex)$ to guarantee a minimum degree of $(\Delta-1)^{4}$ for link clusters (\cref{claim:expansion-Clink}),
    \end{itemize}

    The following parameters satisfy these constraints:
    \begin{itemize}
        \item $\adcc = 148$,
        \item $\aflex = 12$, $\bflex = 58$,
        \item $\alink = 16$, $\blink = 119$.
    \end{itemize}
    The different subroutines of \BuildClusterPartition\ (\cref{alg:build-cluster-partition}) are each performed in $O(\Tmis(n,\poly(\Delta)))$ rounds of LOCAL:
    \begin{itemize}
        \item $O(\adcc\cdot \Tmis(n,\Delta^{2\adcc+2})))$ for \BuildCdcc\ (\cref{alg:build-Cdcc,lem:build-Cdcc-result}),
        \item $O(\bflex\cdot \Tmis(n,\Delta^{\bflex})))$ for \BuildCflex\ (\cref{alg:build-Cflex}),
        \item $O(\blink\cdot \Tmis(n,\Delta^{\blink})))$ for \BuildClink\ (\cref{alg:build-Clink}).
    \end{itemize}
    The leading term is $O(\adcc \cdot \Tmis(n,\Delta^{\adcc})) = O(\Tmis(n,\Delta^{148}))$.
\end{proof}

Note that the minimum degree $\delta$ of the HSO instance can be increased to a higher $\poly(\Delta)$ by increasing the above constants to higher constant values, without affecting the rank of the hypergraph. We will tweak the constants to obtain a higher minimum degree $\delta$ to obtain our results in the randomized setting (\cref{sec:randomized-local}).

\subsection{Sinkless Orientation on the Cluster Hypergraph}
\label{sec:hso-on-cluster-hypergraph}

In this section, we prove \cref{lem:cluster-hypergraph-hso}, on the fact that a hypergraph sinkless orientation of the cluster hypergraph can be turned into useful sets $\fS$ and $I$ for $\Delta$-coloring. From this lemma and previously proved results directly follows \cref{lem:computing-DCC-T-nodes}, on how efficiently we can compute sets $\fS$ and $I$ in a complexity measured in terms of $\Tmis$ and $\Thso$.

We recall \cref{lem:cluster-hypergraph-hso} for convenience.

\LemClusterHypergraphOrientation*

\begin{proof}
    Let us use the bipartite hypergraph $H$ from \cref{def:cluster-hypergraph} with $\gath(C)$ the set of gatherers. 
    We build $\fS$ from the clusters in $\Cdcc$. Each cluster in $\Cdcc$ contains some induced subgraph $S_C$ that is either a DCC or a node of degree less than $\Delta$, and the subgraphs $(S_C)_{C \in \Cdcc}$
    are also non-adjacent. Therefore, $\fS = \set{S_C \mid C \in \Cdcc}$ is exactly as we need.

    We now turn to building the set $I$. For this, we will use $H$ and its hypergraph sinkless orientation.
    Each cluster $C \in \Cflex$ has at least one outgoing hyperedge according to this orientation.
    An hyperedge $e$ can be of two types. Either $e$ comes from the fact that the cluster is adjacent to a DCC cluster, or the hyperedge corresponds to a gatherer node. If the edge $e$ is of the second kind and an outgoing edge for $C$, then the cluster adds two nodes to $I$ because of that edge: two neighbors $w,w' \in T_v$ of the gatherer node $v$ that created the edge $e$ such that $w$ and $w'$ are not connected. Doing so, the gatherer $v$ joins the set $T_I$ of nodes with two neighbors in $I$.

    Consider the distance of nodes in $V\setminus I$ to the sets $\fS$ and $T_I$ in $G[V \setminus I]$, where $T_I$ is the set of nodes with two neighbors in $I$.
    Each node in $\Vdcc$ is at distance at most $\gdcc$ from $\fS$.
    Each node in a link or flex cluster adjacent to a cluster from $\Cdcc$ is at distance at most $\gdcc + \max(\gflex,\glink)$ from $\fS$.
    
    Consider now a node in a link cluster $C$ that is not adjacent to a DCC cluster. This cluster has a path in $G[V \setminus I]$ to the center of at least one adjacent flex cluster $C'$.
    If that flex cluster is adjacent to a cluster from $\Cdcc$, nodes in the link cluster $C$ are at distance at most $\gdcc + 2\gflex + \glink$ from $\fS$. If the flex cluster used one of its gatherers to add two of its nodes to $I$, then nodes in the link cluster $C$ are at distance at most $2\gflex + \glink$ to a node in $T_I \cap V(C')$.
    
    Finally, a node in a flex cluster $C$ that added two of its nodes to $I$ is at distance at most $2 \gflex$ to a node in $T_I$ if its path to $z_C$ the center of $C$ was not cut by $I$. A node in $C$ that was disconnected from the cluster's center $z_C$ by the cluster's contribution to $I$ still has a path of length at most $\bflex$ to a neighboring link cluster, and from this link cluster, can reach $\fS$ or $T_I$ in at most $\gdcc + 2\gflex + \glink$ additional steps.
    
    All in all, all nodes in $V\setminus I$ are at distance at most $O(\gflex+\glink+\gdcc)$ from either $\fS$ or $T_I$ in the graph $G[V \setminus I]$.
\end{proof}

\LemComputingDCCandTnodes*

\begin{proof}
By \cref{lem:cluster-partition-computing}, in $O(\Tmis(n,\poly(\Delta)))$ rounds, we can compute a cluster partition and a set of gatherers as in the hypotheses of \cref{lem:cluster-hypergraph-hso}.
Only missing is the hypergraph sinkless orientation.
To compute it, we first have all clusters with a rank $1$ hyperedge grab it, i.e., all such clusters get a trivial outgoing edge.
Then, for the remaining clusters, we compute a hypergraph sinkless orientation in $O(\Thso(n,\poly(\Delta)))$ rounds, since \cref{lem:cluster-partition-computing} guarantees a degree-to-rank ratio of at least $(\Delta-1)^4 / ((\Delta-1)^3 + 1) > \sqrt{\Delta}$, and a round of communication on the hypergraph can be simulated in $O(1)$ rounds of communication on $G$.
Applying \cref{lem:cluster-hypergraph-hso}, we get sets $\fS$ and $I$ as claimed in the lemma.
\end{proof}

\subsection{Obtaining a \texorpdfstring{$\Delta$}{∆}-Coloring}
\label{sec:finish-coloring}

We now show how to get a $\Delta$-coloring of the nodes of the graph, using what we have shown so far. 

\ThmLOCALmain*

As final ingredient before proving \cref{thm:local-main}, we prove \cref{lem:finish-coloring}.

\LemFinishColoring*

\begin{proof}
Consider the set $\fS$ and $I$. Each node in $I$ is given color $1$. Then, we compute the distance of each node to $\fS \sqcup T_I$ in $G[V \setminus I]$.
This distance is at most $d$ by assumption, and can therefore be computed in $O(d)$ rounds of \LOCAL.
As each node until the last layer has at least one neighbor in a layer of smaller depth, coloring each layer is an instance of degree+1-list coloring, and coloring all the nodes is an instance of coloring a layered graph with $d$ layers and maximum up-degree $\Delta-1$.
Once all layers except that of depth $0$, i.e., nodes in $\fS \sqcup T_I$ have been colored, extending the coloring to $\fS \sqcup T_I$ is trivial, as the connected components induced by $\fS \sqcup T_I$ have constant weak diameter and can be colored with colors from $[\Delta]$ (either from degree-choosability, from the lower degree of a node, or adjacency to two neighbors of color $1$).
\end{proof}

As explained in the preliminaries (\cref{sec:prelims}), note that coloring a layered graph with $d$ layers simply reduces to solving $d$ instances of degree+1-list-coloring.
We mention this structure of the layered graph instead of stating that it suffices to solve $d$ instances of degree+1-list-coloring in succession because some nontrivial improvements over this basic strategy have been shown to sometimes be possible in a prior work~\cite{ghaffari2021deterministic}.

\begin{proof}[Proof of \cref{thm:local-main}]
The algorithm first start by computing a $(\gdcc,\gflex,\glink)-\Delta$-coloring cluster partition as in \cref{lem:cluster-partition-computing} in $O(\Tmis(n,\poly(\Delta)))$ rounds.
The bipartite cluster multihypergraph from \cref{def:cluster-hypergraph} that we obtain has a maximum rank of $(\Delta-1)^3+1$ while each of its nodes either has an incident hyperedge of rank $1$ or a degree of at least $(\Delta-1)^4$.

As shown in \cref{lem:computing-DCC-T-nodes}, this allows us to compute an HSO of $H$ in $O(\Thso(n,\poly(\Delta)))$ rounds using \cref{thm:HSO}, and the resulting HSO can be turned into sets $\fS$ and $I$ for use in \cref{lem:finish-coloring}, notably, such that every node in $V \setminus I$ is at distance at most $d = O(1)$ from a node in $\fS \sqcup I$ in the graph $G[V \setminus I]$. Also, each member of $\fS$ has weak diameter $k = O(1)$.

This last coloring step takes $\Tlayer(n,\Delta-1,d)+O(k)$ rounds, and results in a $\Delta$-coloring of the graph. Since $\Tlayer(n,\Delta-1,d) \in O(d \cdot \Tmis(n,\poly(\Delta)))$, $d \in O(1)$, and $k \in O(1)$ the complexity of this last step does not exceed $O(\Tmis(n,\poly(\Delta)))$.
In total, we have computed a $\Delta$-coloring in $O(\Tmis(n,\poly(\Delta)) + \Thso(n,\poly(\Delta)))$ rounds.
\end{proof}

\subsection{Randomized LOCAL Setting}
\label{sec:randomized-local}

In this section, we show how to obtain the randomized version of \cref{thm:local-main}, our deterministic result.
We use the same chain of reductions as in the deterministic, but simply use of randomized algorithms as subroutines to solve our instances of MIS and HSO.
As known results for solving HSO in the randomized setting require some additional constraints on the minimum degree $\delta$ and the rank $r$ of the hypergraph (see \cref{thm:HSO_randomized}), we have to slightly adjust the constants $\adcc$, $\aflex$, $\bflex$, $\alink$, and $\blink$ that we use in the construction of our cluster partition.

\ThmLOCALmainrand*

\begin{proof}[Proof of \cref{thm:local-main-rand}]
The algorithm once again starts by computing a cluster partition as in \cref{lem:cluster-partition-computing}, running \BuildClusterPartition\ (\cref{alg:build-cluster-partition}) with appropriate values for $\adcc$, $\aflex$, $\bflex$, $\alink$ and $\blink$.
In addition to the constraints on these constants explained in \cref{sec:hypergraph-properties}, in order to solve HSO on the cluster hypergraph, we additionally require that $\delta > 320 r \log r$, where $\delta$ is the maximum degree of the graph and $r$ the rank.
With a rank that remains at $r = 1+ (\Delta-1)^3$, and as $\Delta \geq 3$, we can achieve this by increasing the minimum degree $\delta$ to $(\Delta-1)^{14}$ since 
\begin{align*}
320r \log r = 320 ((\Delta-1)^3+1) \log ((\Delta -1)^3 +1) 
 < 2^{10} (\Delta-1)^4 \leq (\Delta-1)^{14}\ .
\end{align*}
To obtain a minimum degree this large, we take the following parameters:
\begin{itemize}
    \item $\adcc = 348$,
    \item $\aflex = 32$, $\bflex = 138$,
    \item $\alink = 36$, $\blink = 279$.
\end{itemize}

After computing the cluster partition, the rest of the algorithm is exactly the same as in the deterministic setting.
Computing the cluster partition was done in
$O(\Tmisrand(n,\poly(\Delta)))$ rounds. 
Then $O(\Thsorand(n,\poly(\Delta),\poly(\Delta)))$ rounds are devoted to computing a sinkless orientation on the obtained bipartite cluster multihypergraph (\cref{def:cluster-hypergraph}).
Following this, \cref{lem:cluster-hypergraph-hso} gives the set $\fS$ and $I$ which are then used to compute the coloring in $\Tlayerrand(n,\Delta-1,d)+O(k) \leq O(\Tmisrand(n,\poly(\Delta)))$ rounds (\cref{lem:finish-coloring}). The overall round-complexity is $O(\Tmisrand(n,\poly(\Delta)) + \Thsorand(n,\poly(\Delta),\poly(\Delta)))$.
\end{proof}

\section{Bounded Clique Number}

In this section, we prove \cref{thm:local-clique-number}, our result for graphs of bounded clique number $\omega$.

\ThmLOCALcliqueNum*

The algorithm behind this result is quite simple, only consisting of three steps.
First, \cref{alg:clique-num} computes an initial $\Delta+1$-coloring of the graph.
Second, the algorithm focuses on nodes colored with one of $\omega+1$ colors. After a simple greedy process taking $O(\omega)$ rounds (\cref{alg:color-reduction}), the nodes that are still colored with one of the $\omega+1$ colors are guaranteed to induce a subgraph $H$ of maximum degree $\omega$.
As the computation takes place on a $K_{\omega+1}$-free graph $G$ by assumption, the subgraph $H$ is also $K_{\omega+1}$-free, and therefore, $\omega$-colorable.
$\omega$-coloring $H$ saves one color, resulting in a $\Delta$-coloring of the graph $G$.
To $\omega$-color $H$, we use our results for general graphs on $H$. In effect, we reduce $\Delta$-coloring on $K_{\omega+1}$-free graphs to an instance of $\Delta$-coloring on general graphs, but with a smaller maximum degree of $\omega\geq 3$.

\begin{algorithm}[htb!]
\caption{Algorithm for $\Delta$-coloring graphs of bounded clique number $\omega\geq 3$}
\label{alg:clique-num}
\begin{algorithmic}[1]
\State Compute a $\Delta+1$ coloring.
\State \GreedyColorReduction\ with $C=[1,\omega+1]$, $C' = [\Delta+1] \setminus C$.
\Statex Let $H$ be the subgraph induced by the nodes colored by a member of $[1,\omega+1]$.
\State $\omega$-color $H$
\end{algorithmic}
\end{algorithm}

\begin{algorithm}[htb!]
\caption{\GreedyColorReduction}
\label{alg:color-reduction}
\begin{algorithmic}[1]
\Statex \textbf{Input:} set of marked colors $C$, set of replacement colors $C'$, colored graph $G = (V,E)$.
\Statex Let $\psi(v)$ denote the current color of a node $v$.
\For{each $c \in C$, sequentially}
    \For{each node $v \in V$ s.t.\ $\psi(v) = c$, in parallel}
    \If{$\exists c' \in C'$ s.t.\ $\forall v' \in N(v), \psi(v') \neq c'$}
        \State $\psi(v) \gets c'$.
    \EndIf
    \EndFor
\EndFor
\end{algorithmic}
\end{algorithm}

\begin{claim}
\label{claim:color-reduction}
    Suppose we are given a $(\Delta+1)$-colored graph $G$. After running \GreedyColorReduction (\cref{alg:color-reduction}) on $G$ with a set $C$ of $\card{C} = k$ marked colors, and $C'=\set{1,\dots,\Delta+1} \setminus C$ as replacement colors, the graph induced by nodes colored with a member of $C$ has maximum degree $k-1$.
\end{claim}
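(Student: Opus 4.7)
The plan is to argue, for each node $v$ whose final color lies in $C$, a bound on the number of its neighbors still colored in $C$ at the end. First I would observe that since \GreedyColorReduction{} only ever processes colors in $C$ (and each such color exactly once, in the outer loop), a node $v$ whose final color is some $c \in C$ must have held color $c$ throughout the algorithm, and in particular must have failed to recolor during the iteration processing $c$ (otherwise it would have switched to some $c' \in C'$ and, since colors in $C'$ are never processed, kept it).

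Next I would unpack the failure condition. At the moment $v$ is examined within the iteration for $c$, $v$ failed means: for every $c' \in C'$, some neighbor of $v$ currently carries color $c'$. Distinct colors require distinct neighbors, so at that moment at least $|C'| = \Delta+1-k$ neighbors of $v$ hold colors in $C'$. Since $v$ has at most $\Delta$ neighbors, and since no neighbor can carry color $c$ itself by the (proper) coloring invariant, at most
\[
\Delta - (\Delta + 1 - k) = k - 1
\]
neighbors of $v$ hold a color in $C \setminus \{c\}$ at that moment.

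Finally, I would note that this count only decreases over the rest of the execution. Once a node holds a color in $C'$, it keeps that color forever (colors in $C'$ are not processed, and replacements are chosen from $C'$), while a neighbor currently in $C \setminus \{c\}$ can only either keep its color or switch to some color in $C'$ during a later iteration; no node ever gains a color in $C$. Therefore at termination $v$ still has at most $k-1$ neighbors with colors in $C$, i.e.\ degree at most $k-1$ in the subgraph induced by $C$-colored vertices.

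The only subtlety, which I would address briefly, is verifying that the parallel recoloring within a single iteration is well-defined and preserves properness. This is immediate: the nodes currently colored $c$ form an independent set (by properness of the input coloring), so their simultaneous recolorings involve pairwise non-adjacent nodes and hence cannot conflict with each other, while the chosen replacement color is, by construction, distinct from every current neighbor's color. Thus no issues arise in comparing the snapshot of neighbor colors at the failure moment to the final coloring, which is the only place the argument relies on a time-ordering.
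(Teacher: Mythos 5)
Your proof is correct and follows essentially the same route as the paper's: a node whose final color is $c\in C$ never changed color, so it failed the recoloring test during the iteration for $c$, which forces at least $|C'|=\Delta+1-k$ neighbors into replacement colors at that moment, leaving at most $k-1$ neighbors with marked colors, a count that can only decrease afterwards. The extra details you supply (properness of the parallel recoloring step and the fact that no node ever acquires a marked color) are left implicit in the paper but are consistent with its argument.
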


\begin{proof}
Let $H = (V_H,E_H)$ be the graph induced by nodes colored with a color from $C$ at the end of the algorithm.
    During the algorithm, nodes can only update their color to a color in $C'$. Therefore, if a node has a marked color $c\in C$ at the end of the algorithm, it had this color for the entire duration of the algorithm.
    
    Consider a node $v \in V_H$ of color $c\in C$ at the end of the algorithm. Since $v$ is colored with a marked color at the end of the algorithm, $v$ was not able to take a color from the replacement colors $C'$ when the for-loop over the marked colors in \GreedyColorReduction\ reached color $c$.
    Hence, when it had the opportunity to change its color, $v$ had at least $\card{C'} \geq \Delta + 1 - k$ neighbors colored with a replacement color. It therefore had at most $\Delta - \card{C'} \leq k-1$ neighbors colored with a color from $C$ at that point. Since the number of neighbors with a marked color can only decrease as \GreedyColorReduction\ runs, $v$ still has at most $k-1$ neighbors with a marked color at the end of the algorithm. I.e., $H$ has maximum degree $k-1$.
\end{proof}

\begin{proof}[Proof of \cref{thm:local-clique-number}]
    \Cref{alg:clique-num} starts by computing a $\Delta+1$ coloring algorithm, which takes time $\Tdelpoc(n,\Delta)$ ($\Tdelpocrand(n,\Delta)$ in randomized \LOCAL). Then, it runs $\GreedyColorReduction$ with a set of $\omega+1$ colors, which takes $O(\omega)$ rounds.
    After this step, the induced subgraph $H$ of $G$ obtained by considering only the nodes with color between $1$ and $\omega+1$ has a maximum degree of $\omega$, by \cref{claim:color-reduction}.
    $H$ is a collection of connected $H_1,\dots,H_l$ such that each $H_i, i\in [l]$ is not an $(\omega+1)$-clique, since $G$ is $K_{\omega+1}$-free. This allows us to apply our $\Delta$-coloring algorithm \cref{alg:main-algorithm} to $H$ and obtain a $\omega$-coloring of $H$ in $\Tdelc(n,\omega)$ rounds ($\Tdelcrand(n,\omega)$ in randomized \LOCAL). This eliminates one of the $\omega+1$ colors, which means that the graph $G$ is colored with $\Delta$ colors.
\end{proof}

\paragraph{On the minimum value of $\omega$ and triangle-free graphs}
We note that attempting to perform the same reduction with $\omega=2$ would result in an induced subgraph $H$ of maximum degree $2$. Such a graph $H$ could contain odd cycles, and therefore not be $\omega$-colorable. A lower bound of $\omega \geq 3$ ensures that the induced subgraph $H$ is not only $\omega$-colorable, but efficiently so.

The lower bound on $\omega \geq 3$ should not be understood as limiting the range of application of \cref{thm:local-clique-number} to not include triangle-free graphs ($K_{\omega+1}$-free graphs with $\omega = 2$). $K_3$-free graphs are a fortiori $K_4$-free, so the theorem can be applied with $\omega = 3$.

\section*{Acknowledgements}
Funded by the European Union. Views and opinions expressed are however those of the author(s) only and do not necessarily reflect those of the European Union or the European Research Council. Neither the European Union nor the granting authority can be held responsible for them. This work is supported by ERC grant \href{https://doi.org/10.3030/101162747}{OLA-TOPSENS} (grant agreement number 101162747) under the Horizon Europe funding programme.

\urlstyle{same}
\bibliographystyle{alpha}
\bibliography{0a_arxiv_v2_refs_preamble,0b_arxiv_v2_refs_entries}

\appendix

\end{document}